\documentclass[letter,11pt]{article}

\usepackage[colorlinks]{hyperref}
  \hypersetup{linkcolor=blue,filecolor=blue,citecolor=blue,urlcolor=blue}

  \newcommand{\fullversion}[1]{#1}
\newcommand{\submversion}[1]{}
  
\usepackage{xspace}
\fullversion{
\usepackage{fullpage}
}
\usepackage{verbatim}
\usepackage{todonotes}

\fullversion{
}

\fullversion{
\usepackage{amsmath,amsthm}
\usepackage[T1]{fontenc}
}

\newcommand{\prab}[1]{{\color{red} Prab:#1}}

\newcommand{\ket}[1]{|#1\rangle}

\def\cX{{\cal X}}


\fullversion{
\usepackage{amsthm}
\newtheorem{theorem}{Theorem}

\newtheorem{definition}[theorem]{Definition}
\newtheorem{lemma}[theorem]{Lemma}
\newtheorem{claim}[theorem]{Claim}

}


\usepackage{amsmath,amsfonts,amssymb}

\usepackage[utf8]{inputenc}
\usepackage{tabularx}
\usepackage[normalem]{ulem}
\usepackage{setspace}
\usepackage{fancyhdr}
\usepackage{lastpage}
\usepackage{extramarks}
\usepackage{chngpage}
\usepackage{soul}
\usepackage{xspace}
\usepackage{bm}
\usepackage{nicefrac}
\usepackage{paralist}
\usepackage{enumitem}
\usepackage{tikz}



\newcommand{\bfs}{\mathbf{s}}
\newcommand{\bfA}{\mathbf{A}}
\newcommand{\bfe}{\mathbf{e}}

\newcommand{\bfu}{\mathbf{u}}

\newcommand{\secparam}{\lambda}
\newcommand{\negl}{\mathsf{negl}}

\newcommand{\ignore}[1]{}
\newcommand{\la}{\leftarrow}

\fullversion{
\newcommand{\inst}{{\bf y}}
}

\newcommand{\cktclass}{\mathcal{C}}

\newcommand{\poly}{\mathrm{poly}}

\newcommand{\distr}{\mathcal{D}}


\newcommand{\fc}{f} 


\newcommand{\prvr}{\mathsf{Prover}}
\newcommand{\vrfr}{\mathsf{Verifier}}

\newcommand{\prty}[1]{P_{#1}} 
\newcommand{\inpl}[1]{x_{#1}} 


\newcommand{\rel}{\mathcal{R}}


 
 
\newcommand{\otp}{\mathsf{OT}} 
\newcommand{\otpmsg}[1]{OT_{#1}}

\newcommand{\ilength}[1]{\ell_{#1}} 

\newcommand{\otput}{\mathfrak{out}} 


\newcommand{\simr}{\mathsf{Sim}}

\newcommand{\aux}{\mathsf{aux}}

\newcommand{\view}{\mathsf{View}}

\newcommand{\extractor}{\mathsf{Ext}}

\newcommand{\msg}{\mathsf{msg}}

\newcommand{\bit}{\mathsf{bit}}




\newcommand{\gen}{\mathsf{Gen}}

\newcommand{\otpad}{K'}



\newcommand{\wi}{\mathsf{wi}}

\newcommand{\bfr}{\mathbf{r}}
\newcommand{\bfk}{\mathbf{k}}
\newcommand{\bfp}{\mathsf{p}}
\newcommand{\badevent}{\mathsf{BAD}}


\newcommand{\expt}{\mathsf{Expt}}
\newcommand{\adversary}{\mathcal{A}}  

\newcommand{\hybrid}{\mathsf{Hyb}} 
\newcommand{\prob}{\mathsf{Pr}} 




\newcommand{\setup}{\mathsf{Setup}}

\newcommand{\enc}{\mathsf{Enc}}
\newcommand{\dec}{\mathsf{Dec}}

\newcommand{\pk}{\mathsf{PK}}
\newcommand{\ct}{\mathsf{CT}}


\newcommand{\fhe}{\mathsf{qFHE}}



\newcommand{\gc}{\mathsf{GC}}
\newcommand{\gcgen}{\mathsf{Garble}}

\newcommand{\gceval}{\mathsf{EvalGC}}
\newcommand{\gckt}{\mathcal{GC}}







\newcommand{\sender}{\mathsf{S}}
\newcommand{\receiver}{\mathsf{R}}
\newcommand{\td}{\mathsf{td}}
\newcommand{\invert}{\mathsf{Inv}}
\newcommand{\protwi}{\Pi_{\mathsf{WI}}}

\newcommand{\witness}{\mathbf{w}}

\newcommand{\key}{\mathbf{k}}
\newcommand{\out}{\mathsf{Out}}

\newcommand{\eval}{\mathsf{Eval}}

\newcommand{\sk}{\mathsf{SK}}

\newcommand{\bfd}{\mathbf{d}}
\newcommand{\bfc}{\mathbf{c}}
\newcommand{\comm}{\mathsf{Comm}}
\newcommand{\lobf}{\mathsf{Obf}}
\newcommand{\leval}{\mathsf{ObfEval}}
\newcommand{\lockC}{\mathbf{C}}
\newcommand{\lockclass}{\mathcal{C}}
\newcommand{\obfC}{\widetilde{\lockC}}

\renewcommand{\otpad}{\mathsf{otp}}
\newcommand{\protext}{\mathsf{QEXT}}
\newcommand{\extrel}{\mathcal{R_{\mathrm{EXT}}}}
\newcommand{\protpc}{\mathsf{SFE}}
\newcommand{\trans}{\mathcal{T}}
\newcommand{\rqext}{\mathbf{r}_{\mathrm{qext}}}
\newcommand{\rnd}{t}
\newcommand{\ext}{\mathsf{Ext}}
\newcommand{\locksimr}{\mathsf{LObf}.\mathsf{Sim}}
\newcommand{\lockobfscheme}{\mathbf{LObf}}
\newcommand{\lang}{\mathcal{L}}
\newcommand{\Simu}{\mathsf{Sim}}
\newcommand{\qaux}{\rho_{\aux}}
\newcommand{\tr}{\mathsf{Tr}}
\newcommand{\cE}{\mathcal{E}}
\newcommand{\cD}{\mathcal{D}}
\newcommand{\cF}{\mathcal{F}}
\newcommand{\cH}{\mathcal{H}}
\newcommand{\cM}{\mathcal{M}}
\newcommand{\cC}{\mathcal{C}}
\newcommand{\RegP}{\mathrm{R}_\prvr}
\newcommand{\RegV}{\mathrm{R}_\vrfr}
\newcommand{\RegM}{\mathrm{R}_{\mathsf{M}}}
\newcommand{\cY}{\mathcal{Y}}
\newcommand{\cA}{\mathcal{A}}
\newcommand{\cK}{\mathcal{K}}
\newcommand{\supp}{\mathsf{Supp}}
\newcommand{\inv}{\mathsf{Inv}}
\newcommand{\cR}{\mathcal{R}}
\newcommand{\chk}{\mathsf{Chk}}

\newcommand{\Id}{\mathsf{Id}}
\newcommand{\sfefunc}{\mathbf{F}}

\newcommand{\prover}{\prvr}

\ignore{

\newcommand{\prvr}{\mathsf{Prover}}
\newcommand{\vrfr}{\mathsf{Verifier}}

\newcommand{\prty}[1]{P_{#1}} 
\newcommand{\inpl}[1]{x_{#1}} 

\usepackage{breakcites}


\newcommand{\rel}{\mathcal{R}}


}

\pagestyle{plain}

\newcommand{\simulator}{\mathsf{Sim}}

\usepackage{breakcites}

\title{Secure Quantum Extraction Protocols}
\fullversion{
\author{Prabhanjan Ananth\\ UCSB \thanks{prabhanjan@cs.ucsb.edu} \and Rolando L. La Placa\\ MIT \thanks{rlaplaca@mit.edu}}
}
\submversion{
\author{Prabhanjan Ananth\inst{1} \and Rolando L. La Placa \inst{2}}
\institute{University of California, Santa Barbara\\ \texttt{prabhanjan@cs.ucsb.edu} \and MIT\\ \texttt{rlaplaca@mit.edu}}
}
\date{}

\begin{document}

\maketitle

\renewcommand{\inst}{\mathbf{z}}
\newcommand{\veck}{\langle \bfk \rangle}
\submversion{
\renewcommand{\inst}{{\bf y}}
}

\begin{abstract}
\noindent Knowledge extraction, typically studied in the classical setting, is at the heart of several cryptographic protocols. The prospect of quantum computers forces us to revisit the concept of knowledge extraction in the presence of quantum adversaries.    
\par We introduce the notion of secure quantum extraction protocols. A secure quantum extraction protocol for an NP relation $\rel$ is a classical interactive protocol between a sender and a receiver, where the sender gets as input the instance $\inst$ and witness $\witness$ while the receiver only gets the instance $\inst$ as input. There are two properties associated with a secure quantum extraction protocol: (a) {\em Extractability}:  for any efficient quantum polynomial-time (QPT) adversarial sender, there exists a QPT extractor that can extract a witness $\witness'$ such that $(\inst,\witness') \in \rel$ and, (b) {\em Zero-Knowledge}: a malicious receiver, interacting with the sender, should not be able to learn any information about $\witness$. 
\par We study and construct two flavors of secure quantum extraction protocols.
\begin{itemize}
 \item {\bf Security against QPT malicious receivers}: First we consider the setting when the malicious receiver is a QPT adversary. In this setting, we construct a secure quantum extraction protocol for NP assuming the existence of quantum fully homomorphic encryption satisfying some mild properties (already satisfied by existing constructions [Mahadev,  FOCS'18, Brakerski CRYPTO'18]) and quantum hardness of learning with errors. The novelty of our construction is a new non-black-box technique in the quantum setting. All previous extraction techniques in the quantum setting were solely based on quantum rewinding. 
    \item {\bf Security against classical PPT malicious receivers}: We also consider the setting when the malicious receiver is a classical probabilistic polynomial time (PPT) adversary. In this setting, we construct a secure quantum extraction protocol for NP solely based on the quantum hardness of learning with errors. Furthermore, our construction satisfies {\em quantum-lasting security}: a malicious receiver cannot later, long after the protocol has been executed, use a quantum computer to extract a valid witness from the transcript of the protocol. 
\end{itemize}
\noindent Both the above extraction protocols are {\em constant round} protocols. 
\par We present an application of secure quantum extraction protocols to zero-knowledge (ZK). Assuming quantum hardness of learning with errors, we present the first construction of ZK argument systems for NP in constant rounds based on the quantum hardness of learning with errors with: (a) zero-knowledge against QPT malicious verifiers and,  (b) soundness against classical PPT adversaries. Moreover, our construction satisfies the stronger (quantum) auxiliary-input zero knowledge property and thus can be composed with other protocols secure against quantum adversaries.     
\end{abstract}

\fullversion{
\newpage
}


\section{Introduction}
Knowledge extraction is a quintessential concept employed to argue the security of  classical zero-knowledge systems and secure two-party and multi-party computation protocols. The seminal work of Feige, Lapidot and Shamir~\cite{FLS99} shows how to leverage knowledge extraction to construct zero-knowledge protocols. The ideal world-real world paradigm necessarily requires the simulator to be able to extract the inputs of the adversaries to argue the security of secure computation protocols. 
\par Typically, knowledge extraction is formalized by defining a knowledge extractor that given access to the adversarial machine, outputs the input of the adversary. The prototypical extraction technique employed in several cryptographic protocols is rewinding. In the rewinding technique, the extractor, with oracle access to the adversary,  rewinds the adversary to a previous state to obtain more than one protocol transcript which in turn gives the ability to the extractor to extract from the adversary. While rewinding has proven to be quite powerful, it has several limitations~\cite{GK90}. Over the years, cryptographers have proposed novel extraction techniques to circumvent the barriers of rewinding. Each time a new extraction technique was invented, it has advanced the field of zero-knowledge and secure computation. As an example, the breakthrough work of Barak~\cite{Bar01} proposed a non-black-box extraction technique -- where the extractor crucially uses the code of the verifier for extraction -- and used this to obtain the first feasibility result on constant-round public-coin zero-knowledge argument system for NP. Another example is the work of Pass~\cite{Pas03} who introduced the technique of super-polynomial time extraction and presented the first feasibility result on 2-round concurrent ZK argument system albeit under a weaker  simulation definition.   

\paragraph{Extracting from Quantum Adversaries.} The prospect of quantum computers introduces new challenges in the design of zero-knowledge and secure computation protocols. As a starting step towards designing these protocols, we need to address the challenge of knowledge extraction against quantum adversaries. So far, the only technique used to extract from quantum adversaries is quantum rewinding~\cite{Wat09}, which has already been studied by a few  works~\cite{Wat09,JKMR06,Unruh12,ARU14,unruh2016computationally} in the context of quantum zero-knowledge protocols. 
\par Rewinding a quantum adversary, unlike its classical counterpart, turns out to be tricky due to two reasons, as stated in Watrous~\cite{Wat09}: firstly, intermediate quantum states of the adversary cannot be copied (due to the universal no-cloning theorem) and secondly, if the adversary performs some measurements then this adversary cannot be rewound since measurements in general are irreversible processes. As a result, the existing quantum rewinding techniques tend to be "oblivious"~\cite{Unruh12}, to rewind the adversary back to an earlier point, the extraction should necessarily forget all the information it has learnt from that point onwards.
As a result of these subtle issues, the analysis of quantum rewinding turns out to be quite involved making it difficult to use it in the security proofs. Moreover, existing quantum rewinding techniques~\cite{Wat09,Unruh12} pose a bottleneck towards achieving a constant round extraction technique; we will touch upon this later.   
\par In order to advance the progress of constructing quantum-secure (or post-quantum) cryptographic protocols, it is necessary that we look beyond quantum rewinding and explore new quantum extraction techniques.  

\subsection{Results}
We introduce and study new techniques that enable us to extract from quantum adversaries. 
\paragraph{Our Notion: Secure Quantum Extraction Protocols.} We formalize this by first introducing the notion of secure quantum extraction protocols. This is a classical interactive protocol between a sender and a receiver and is associated with a NP relation. The sender has an NP instance and a witness while the receiver only gets the NP instance. In terms of properties, we require the following to hold:
\begin{itemize}
   
    \item {\em  Extractability}: An extractor, implemented as a quantum polynomial time algorithm, can extract a valid witness from an adversarial sender. We model the adversarial sender as a quantum polynomial time algorithm that follows the protocol but is allowed to choose its randomness; in the classical setting, this is termed as {\em semi-malicious} and we call this semi-malicious quantum adversaries\footnote{In the literature, this type of semi-malicious adversaries are also referred to as {\em explainable} adveraries.}. 
    \par We also require {\em indistinguishability of extraction}: that is, the adversarial sender cannot distinguish whether it's interacting with the honest receiver or an extractor. In applications, this property is used to argue that the adversary cannot distinguish whether it's interacting with the honest party or the simulator. 

    \item {\em  Zero-Knowledge}: A malicious receiver should not be able to extract a valid witness after interacting with the sender. The malicious receiver can either be a classical probabilistic polynomial time algorithm or a quantum polynomial time algorithm. 
    Correspondingly, there are two notions of quantum extraction protocols we study: quantum extraction protocols secure against quantum adversarial receivers (qQEXT) and quantum extraction protocols secure against classical adversarial receivers (cQEXT).    
    
\end{itemize}

\noindent There are two reasons why we only study extraction against semi-malicious adversaries, instead of malicious adversaries (who can arbitrarily deviate from the protocol): first, even extracting from semi-malicious adversaries turns out to be challenging and we view this as a first step towards extraction from malicious adversaries and second, in the classical setting, there are works that show how to leverage extraction from semi-malicious adversaries to achieve zero-knowledge protocols~\cite{BCPR16,BKP19} or secure two-party computation protocols~\cite{AJ17}.

Quantum extraction protocols are interesting even if we only consider classical adversaries, as they present a new method for proving zero-knowledge. For instance, to demonstrate zero-knowledge, we need to demonstrate a simulator that has a computational capability that a malicious prover doesn't have. Allowing quantum simulators in the classical setting~\cite{KK19} is another way to achieve this asymmetry between the power of the simulator and the adversary besides the few mentioned before (rewinding, superpolynomial, or non-black-box). Furthermore, quantum simulators capture the notion of knowledge that could be learnt if a malicious verifier had access to a quantum computer. \\

\noindent {\em Quantum-Lasting Security.} A potential concern regarding the security of cQEXT protocols is that the classical malicious receiver participating in the cQEXT protocol could later, long after the protocol has been executed, use a quantum computer to learn the witness of the sender from the transcript of the protocol and its own private state. For instance, the transcript could contain an ElGamal encryption of the witness of the sender; while a malicious classical receiver cannot break it, after the protocol is completed, it could later use a quantum computer to learn the witness. This is especially interesting in the event (full-fledged) quantum computers might become available in the future. First introduced by Unruh~\cite{Unruh13}, we study the concept of quantum-lasting security; any quantum polynomial time (QPT) adversary given the transcript and the private state of the malicious receiver, should not be able to learn the witness of the sender. Our construction  will satisfy this security notion and thus our protocol is resilient against the possibility of quantum computers being accessible in the future. 

\paragraph{\underline{Result \#1: Constant Round qQEXT protocols.}} We show the following result. 

\begin{theorem}[Informal]
Assuming quantum hardness of learning with errors and a quantum fully homomorphic encryption scheme (for arbitrary poly-time computations)\footnote{As against leveled quantum FHE,  which can be based on quantum hardness of LWE.}, satisfying, (1) perfect correctness for classical messages and, (2) ciphertexts of poly-sized classical messages have a poly-sized classical description, there exists a constant round quantum extraction protocol secure against quantum poly-time receivers.  
\end{theorem}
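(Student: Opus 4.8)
The plan is to build the constant-round qQEXT protocol by combining a non-black-box extraction idea with quantum FHE. At a high level, the sender and receiver engage in a protocol where the sender commits to (and later proves things about) its witness, but the crucial twist is that extraction will be performed by having the \emph{extractor} run a quantum FHE evaluation on an encrypted version of the adversarial sender's code. Concretely, I would have the receiver send a quantum FHE public key together with an encryption of some challenge (or of $0$, depending on whether we are in the honest run or the extraction run), and have the sender homomorphically run part of its own computation under the encryption. The extractor, unlike the honest receiver, will have planted a ciphertext whose decryption lets it recover the sender's state, and because quantum FHE ciphertexts of poly-sized classical messages have poly-sized classical descriptions (the second assumed property), the extractor can feed the \emph{code} of the semi-malicious sender into a classical computation encrypted under qFHE, evaluate, and decrypt to obtain the witness. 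This is the ``new non-black-box technique in the quantum setting'' advertised in the introduction: the extractor uses the sender's code, not just rewinding.

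**The key steps**, in order, would be: (1) Specify the protocol — a constant number of rounds in which the receiver sends a qFHE public key and an encrypted challenge, the sender responds with a homomorphic evaluation entangling its witness/state with the challenge under the encryption, plus a (classical, quantum-sound against the relevant party) argument that it behaved consistently; the number of rounds stays constant because qFHE compresses what would otherwise be an interactive cut-and-choose into one shot. (2) Prove \textbf{extractability}: construct the QPT extractor that runs the adversarial sender, but substitutes a specially crafted receiver message so that after the sender's homomorphic evaluation the extractor holds a ciphertext it can decrypt to the witness; correctness of this step leans on \emph{perfect} correctness of qFHE for classical messages (the first assumed property), so that no decryption error corrupts the extracted $\witness'$ and we genuinely get $(\inst,\witness')\in\rel$. (3) Prove \textbf{indistinguishability of extraction}: show via a hybrid argument, using semantic security of qFHE (which follows from quantum hardness of LWE), that the adversarial sender cannot tell the honest receiver's messages from the extractor's messages — the only difference being which plaintext is encrypted. (4) Prove \textbf{zero-knowledge against QPT receivers}: exhibit a simulator for a malicious quantum receiver that does not need the witness; here one uses the zero-knowledge / witness-indistinguishability of the argument component together with the ability to equivocate or to run the sender's side on a dummy witness, again reducing to qFHE security and LWE.

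**The main obstacle** I anticipate is step (2) combined with step (4): making extraction work \emph{non-black-box} via qFHE while simultaneously keeping the protocol zero-knowledge against a quantum receiver. The tension is that whatever information lets the extractor decrypt the witness must be hidden from (and useless to) an honest-but-curious quantum receiver, yet syntactically present in an indistinguishable way. Getting the homomorphic evaluation to ``pull'' exactly the witness into a decryptable ciphertext — without the sender being able to detect that it is talking to the extractor, and without a malicious receiver being able to mount the same attack — is the delicate core of the argument, and I expect it to require carefully arranging that the extractor's advantage comes solely from holding a trapdoor/secret key that the honest receiver discards or never possesses, with every hybrid step justified by either perfect correctness of qFHE, semantic security of qFHE, or quantum hardness of LWE. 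A secondary subtlety is handling the quantum state of the semi-malicious sender through the homomorphic evaluation (no-cloning means we must be careful that the sender's residual state is not destroyed in a way that breaks indistinguishability), which is precisely why the semi-malicious/explainable restriction on the sender is imposed.
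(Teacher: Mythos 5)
You have correctly identified the spirit of the paper's technique (the extractor homomorphically evaluates the code of the semi-malicious sender under a quantum FHE scheme), but your proposal leaves unresolved exactly the point that the paper's construction is designed to solve, and you acknowledge as much by labeling it "the main obstacle." The question is: \emph{who holds the qFHE secret key?} In your sketch the receiver sends the public key and the extractor "holds a trapdoor/secret key that the honest receiver discards or never possesses." This cannot work as stated: if the receiver-side party can generate or retain the decryption key, then a malicious quantum receiver can do the same and zero-knowledge fails; if it cannot, the extractor ends the homomorphic evaluation holding only an encryption of the witness with no way to decrypt it. The paper resolves this tension by inverting the key ownership and adding a circularity-breaking gadget. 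The \emph{sender} generates two key pairs $(\pk_1,\sk_1),(\pk_2,\sk_2)$ and sends $\ct_1=\fhe.\enc(\pk_1,\td\|\witness)$ for a random trapdoor $\td$, together with a lockable obfuscation $\obfC$ of a circuit that has $\sk_1$ and $\fhe.\enc(\pk_2,\bfr)$ hardwired and releases a key $\bfk$ (with $\otpad=\bfk\oplus\sk_1$ sent in the clear) when fed $\fhe.\enc(\pk_1,\sk_2)$. The parties then run an SFE in which the sender reveals $\sk_2$ only if the receiver's commitments open to the bits of $\td$. The extractor encrypts the sender's residual state under $\pk_1$, \emph{plants a commitment to $\td$ inside the encryption} (possible precisely because $\ct_1$ encrypts $\td$), homomorphically runs the sender's SFE next-message function to obtain $\fhe.\enc(\pk_1,\sk_2)$, feeds that classical ciphertext to $\obfC$ to recover $\bfk$ and hence $\sk_1$, and finally decrypts $\ct_1$ to get $\witness$. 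Without the lockable-obfuscation step (or the 2-circular-insecurity assumption it replaces), your extractor is stuck with an undecryptable ciphertext; this component is entirely missing from your plan and is not something the stated qFHE properties alone supply.

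Two further ingredients of the paper are absent and are needed for the security proof to go through. First, zero-knowledge requires ruling out mauling attacks: a malicious receiver might transform $\ct_1$ into valid commitments to $\td$ and thereby extract $\sk_2$ from the SFE exactly as the extractor does. The paper prevents this by having the receiver commit to the randomness $(r_1,\ldots,r_\ell)$ \emph{before} seeing $\ct_1$ and forcing the later commitments to use that randomness, so that a successful maul yields a non-uniform break of qFHE semantic security. Second, indistinguishability of extraction is obtained not from an appended argument of consistency (which the paper does not use in the qQEXT protocol --- the semi-malicious restriction substitutes for it) but from the indistinguishability security of the SFE: the honest receiver and the extractor differ only in the committed values and the SFE inputs, and the sender never learns the SFE output. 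Your hybrid outline in step (3) is directionally right but needs to be anchored to these specific mechanisms rather than to "which plaintext is encrypted" by the receiver.
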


\noindent We clarify what we mean by perfect correctness. 
For every public key, every valid fresh ciphertext of a classical message can always be decrypted correctly. Moreover, we require that for every valid fresh ciphertext, of a classical message, the evaluated ciphertext can be decrypted correctly with probability negligibly close to 1. We note that the works of~\cite{mahadev2018classical,Bra18} give candidates for quantum fully homomorphic encryption schemes satisfying both the above properties.
\par En route to proving the above theorem, we introduce a new non black extraction technique in the quantum setting building upon a {\em classical} non-black extraction technique of~\cite{BKP19}. We view identifying the appropriate classical non-black-box technique to also be a contribution of our work. A priori it should not be clear whether classical non-black-box techniques are useful in constructing their quantum analogues. For instance, it is unclear how to utilize the well known non-black-box technique of Barak~\cite{Bar01}; at a high level, the idea of Barak~\cite{Bar01} is to commit to the code of the verifier and then prove using a succinct argument system that either the instance is in the language or it has the code of the verifier. In our setting, the verifier is a quantum circuit which means that we would require succinct arguments for quantum computations which we currently don't know how to achieve. 
\par Non-black-box extraction overcomes the disadvantage quantum rewinding poses in achieving constant round extraction; the quantum rewinding employed by~\cite{Wat09} requires polynomially many rounds (due to sequential repetition) or constant rounds with non-negligible gap between extraction and verification error~\cite{Unruh12}.  
\par This technique was concurrently developed by Bitansky and Shmueli~\cite{BS20} (see ``Comparison with~\cite{BS20}" paragraph) and they critically relied upon this to construct a constant-round zero-knowledge argument system for NP and QMA, thus resolving a long-standing open problem in the round complexity of quantum zero-knowledge. \\

\noindent {\em Subsequent Work.} Many followup works have used the non-black-box extraction technique we introduce in this work to resolve other open problems in quantum cryptography. For instance, our technique was adopted to prove that quantum copy-protection is impossible~\cite{AL20}; resolving a problem that was open for more than a decade. It was also used to prove that quantum VBB for classical circuits is impossible~\cite{AL20,ABDS20}. In yet another exciting follow up work, this technique was developed further to achieve the first constant round post-quantum secure MPC protocol~\cite{ABGKM20}. 


\paragraph{\underline{Result \#2: Constant Round cQEXT protocols.}} We also present a construction of quantum extraction protocols secure against classical adversaries (cQEXT). This result is incomparable to the above result; on one hand, it is a weaker setting but on the other hand, the security of this construction can solely be based on the hardness of learning with errors.  

\begin{theorem}[Informal]
Assuming quantum hardness of learning with errors, there exists a constant round quantum extraction protocol secure against classical PPT adversaries and satisfying quantum-lasting security.  
\end{theorem}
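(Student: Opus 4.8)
\noindent\emph{Proof idea.} The asymmetry that makes the theorem possible is that the \emph{extractor} may be a QPT machine while the malicious receiver is only a classical PPT machine. My plan is to plant inside the protocol a sub-task that a quantum machine can perform but a classical one provably cannot, to release $\witness$ precisely to whoever solves that task, and to hide from the sender whether the task was solved. The building blocks are: a \emph{proof of quantumness} $\Pi_{\mathrm{PoQ}}$ from LWE (e.g.\ from trapdoor claw-free functions), amplified to completeness $1-\negl$ against QPT provers and soundness $\negl$ against classical PPT provers; a two-party \emph{conditional disclosure of secrets} (CDS) protocol (built from two-message oblivious transfer and garbled circuits, all instantiable from LWE, hence also secure against QPT adversaries); and a (quantum-)computationally-hiding, statistically-binding commitment from LWE.

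\noindent\textbf{The protocol.} On input $(\inst,\witness)$ the sender plays the \emph{verifier} of $\Pi_{\mathrm{PoQ}}$ and the receiver plays the \emph{prover}: the sender generates the claw-free function $f$ with trapdoor $\mathsf{td}$ and sends $f$; the receiver commits to its prover messages; the sender sends the PoQ challenge; then the receiver, instead of sending its PoQ response in the clear, uses that response together with its commitment openings as its private input to a CDS for the relation ``the committed-and-opened values form a $\Pi_{\mathrm{PoQ}}$ transcript that $\mathsf{Verify}$ accepts,'' where the sender's CDS input is $(\witness,\mathsf{td})$ and its secret is $\witness$; the $\mathsf{td}$ is placed on the sender's side so that ``$\mathsf{Verify}$ accepts'' is an efficient predicate of the parties' inputs. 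An \emph{honest} receiver simply commits to $0$'s and runs the CDS with a dummy input, so it learns nothing (zero-knowledge thus holds at the protocol level, and against any classical PPT receiver it cannot do better: by soundness of $\Pi_{\mathrm{PoQ}}$ it never holds a valid CDS witness, and CDS sender-privacy keeps $\witness$ hidden). The \emph{extractor}, playing the receiver, runs the honest quantum PoQ prover on $f$, obtains an accepting response, commits to it honestly, feeds it as the CDS witness, and recovers $\witness'$; since the sender is semi-malicious it generated $f,\mathsf{td}$ honestly (for its chosen randomness) and used the given valid $\witness$, so $(\inst,\witness')\in\rel$. The whole protocol is constant-round.

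\noindent\textbf{Order of the argument.} (1) Fix the building blocks and record the exact PoQ completeness/soundness and CDS sender/receiver-privacy guarantees, noting that OT, garbling, and $\Pi_{\mathrm{PoQ}}$'s classical soundness all reduce to LWE. (2) Specify the protocol and verify it is constant-round. (3) Prove extractability and indistinguishability of extraction: construct the QPT extractor as above, and give a hybrid argument moving, one building block at a time, from the honest-receiver interaction to the extractor's interaction; the sender's views differ only in the content of a hiding commitment and in the receiver's CDS input, hidden by CDS receiver-privacy. (4) Prove zero-knowledge against classical PPT receivers from $\Pi_{\mathrm{PoQ}}$ soundness plus CDS sender-privacy. (5) Prove quantum-lasting security by the same CDS-sender-privacy argument, now invoked against a QPT adversary that additionally holds the honest receiver's private state (commitment randomness and CDS randomness for a non-witness input); the key point is that this adversary cannot re-run the interaction, so its quantum power does not help it produce a valid CDS witness.

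\noindent\textbf{Main obstacle.} I expect step (3) to be the crux: making the sender genuinely \emph{oblivious} to whether it is talking to the honest receiver or to the extractor. This needs the CDS receiver-privacy to be robust even though the receiver's input is a convincing PoQ transcript in the extraction case and garbage in the honest case, and it must compose cleanly with the hiding of the commitment and with the sender's being semi-malicious --- in particular the PoQ completeness must hold for \emph{every} claw-free function produced by the honest generator, not merely a typical one, since the sender may have chosen its randomness adversarially. A secondary subtlety, already anticipated above, is phrasing the CDS relation so that acceptance of $\mathsf{Verify}$ (which in $\Pi_{\mathrm{PoQ}}$ uses $\mathsf{td}$) is an efficiently checkable function of the two parties' CDS inputs, which dictates putting $\mathsf{td}$ on the sender's side.
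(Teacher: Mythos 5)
Your architecture is essentially the paper's: an NTCF-based test of quantumness run with the roles reversed, a commitment layer, and an SFE/CDS that releases $\witness$ only on an accepting transcript, with the QPT extractor being the honest quantum prover. The extractability and indistinguishability-of-extraction steps you outline, and the quantum-lasting argument (lasting security follows because the simulated view is indistinguishable even to QPT distinguishers), all match the paper. But you have misplaced the crux. The hard step is not (3) but (4), zero-knowledge, and your one-line argument for it --- ``by soundness of $\Pi_{\mathrm{PoQ}}$ it never holds a valid CDS witness, and CDS sender-privacy keeps $\witness$ hidden'' --- has a genuine gap. To invoke CDS/SFE indistinguishability security you must show that the receiver's \emph{implicit} CDS input fails the verification predicate with overwhelming probability; the CDS extractor that exposes this input is inefficient, so you cannot feed its output into a reduction against the (computational) adaptive hardcore-bit property of NTCFs. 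Moreover, classical soundness of the test of quantumness is itself proved by rewinding the challenge and collecting a valid preimage \emph{and} a valid correlation vector for the same $y$; here the post-challenge response never appears in the clear, so that rewinding has nothing to grab. This is exactly the obstacle the paper isolates (``the last message is fed into the SFE and is inaccessible to the simulator'').

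The missing idea is to commit the receiver's \emph{response to the challenge} (not its pre-challenge message) using an \emph{extractable} commitment in the style of \cite{PRS02,PW09} --- commit to $k$ pairs of additive shares, have the sender request a random half, and open those --- placed strictly between the challenge and the SFE. Your ordering (``commits to its prover messages; the sender sends the PoQ challenge; then the receiver uses that response \ldots as its private input to a CDS'') leaves the response bound only inside the CDS, which is the broken template. With the extractable commitment in place, the ZK simulator/reduction performs a \emph{double} rewinding: rewind the share-opening challenge to efficiently extract the committed response, then rewind the NTCF challenge bits $v_i$ and extract again, obtaining both a preimage and a correlation pair for $\omega(\log k)$ coordinates and contradicting the adaptive hardcore-bit property; only then can the sender's SFE input be switched to $\bot$. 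One must also bound the abort probability introduced by this nested rewinding (the paper does this via Chernoff-style claims). A plain statistically-binding commitment, or a commitment made before the challenge, does not suffice.
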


\noindent Our main insight is to turn the ``test of quantumness" protocol introduced in~\cite{brakerski2018cryptographic} into a quantum extraction protocol using cryptographic tools. In fact, our techniques are general enough that they might be useful to turn any protocol that can verify a quantum computer versus a classical computer into a quantum extraction protocol secure against classical adversaries; the transformation additionally assumes quantum hardness of learning with errors. Our work presents a new avenue for using ``test of quantumness" protocols beyond using them just to test whether the server is quantum or not. 

\par We note that it is conceivable to construct "test of quantumness" protocols from DDH (or any other quantum-{\bf in}secure assumption). The security of the resulting extraction protocol would then be based on DDH and quantum hardness of learning with errors -- the latter needed to argue quantum-lasting security. However, the security of our protocol is solely based on the quantum hardness of learning with errors.

\paragraph{\underline{ Result \#3: Constant Round QZK for NP with Classical Soundness.}} As an application, we show how to construct constant quantum zero-knowledge argument systems secure against quantum verifiers based on quantum hardness of learning with errors; however, the soundness is still against classical PPT adversaries. 
\par Moreover, our protocol satisfies zero-knowledge against quantum  verifiers with arbitrary quantum auxiliary state. Such protocols are also called auxiliary-input zero-knowledge protocols~\cite{GO94} and are necessary for composition. Specifically, our ZK protocol can be composed with other protocols to yield new protocols satisfying quantum security. 


\begin{theorem}[Constant Round Quantum ZK with Classical Soundness; Informal]
Assuming quantum hardness of learning with errors, there exists a constant round black box quantum zero-knowledge system with negligible soundness against classical PPT algorithms. Moreover, our protocol satisfies (quantum) auxiliary-input zero-knowledge property. 
\end{theorem}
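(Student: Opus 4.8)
The plan is to build the constant-round quantum zero-knowledge argument with classical soundness by combining the cQEXT protocol from Result \#2 with a standard FLS-style ``commit-and-prove'' compiler, instantiated with post-quantum secure building blocks. Concretely, I would use the cQEXT protocol to let the verifier commit to a random challenge (or trapdoor) in a way that a quantum simulator can extract but a classical malicious prover cannot exploit, and then have the prover give a witness-indistinguishable argument that either the statement $\inst \in \lang$ is true or it ``knows'' the verifier's extracted secret. Since we want soundness only against classical PPT provers, the WI argument need only be sound against classical adversaries, which is exactly the regime in which the cQEXT guarantees apply; meanwhile zero-knowledge against QPT verifiers follows because the quantum simulator runs the cQEXT extractor to recover the verifier's secret and then uses the ``trapdoor branch'' of the WI argument.

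**The key steps, in order, would be:**
\begin{enumerate}
\item Set up the protocol: have the verifier and prover run the cQEXT protocol of Result \#2 with the verifier in the role of the sender, on an instance/witness pair $(\inst_V, \wtness_V)$ sampled by the verifier (e.g. $\inst_V$ a one-way function image, $\wtness_V$ its preimage, using a post-quantum OWF from LWE). In parallel (or sequentially, keeping the round count constant) the prover gives a constant-round post-quantum witness-indistinguishable argument (e.g. a post-quantum WI argument from LWE, or a ZAP-like object) for the compound statement ``$\inst \in \lang$ OR I know a witness for $\inst_V$''.
\item Prove completeness: trivial, the honest prover uses the real witness $\wtness$ for $\inst$.
\item Prove soundness against classical PPT provers: suppose a cheating classical prover convinces the verifier of a false $\inst$. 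By soundness of the WI argument (against classical adversaries) the prover must ``know'' a witness for $\inst_V$; formalize this by an argument-of-knowledge / reduction extracting $\wtness_V$ from the classical prover, which by the zero-knowledge property of cQEXT against classical receivers contradicts one-wayness of the OWF. This step must be done carefully so the reduction stays classical and polynomial-time.
\item Prove (auxiliary-input) quantum zero-knowledge: construct the simulator. Given a QPT malicious verifier with quantum auxiliary state, the simulator runs the cQEXT extractor (from the \emph{extractability} property, Result \#2, against the verifier-as-sender) to obtain $\wtness_V$, then runs the WI argument honestly using $\wtness_V$ as the witness for the trapdoor branch. Indistinguishability of the simulated view follows from (a) indistinguishability of extraction in cQEXT and (b) witness indistinguishability of the argument against QPT verifiers with quantum advice.
\end{enumerate}

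**The main obstacle** I anticipate is the zero-knowledge proof, specifically making the hybrid argument go through against a quantum verifier holding quantum auxiliary input while simultaneously using a classical-soundness object. Two tensions must be reconciled: first, the quantum extractor of cQEXT must be invoked on a verifier that is itself interacting with the prover's WI message, so I would need the cQEXT extractability to hold in the presence of this auxiliary interaction (this is where the ``indistinguishability of extraction'' property is essential — it lets me argue the verifier cannot tell the extractor apart from the honest prover-side execution, hence its internal state, and the WI transcript it produces, are unaffected). Second, the WI argument must remain witness-indistinguishable against a \emph{quantum} distinguisher even though we only needed it to be \emph{sound} classically; so the WI building block itself must be post-quantum WI (which is available from LWE), and I must ensure no rewinding of the quantum verifier is needed inside the WI simulation — using a genuinely non-rewinding WI or a two/three-message WI keeps this clean. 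A secondary subtlety is round collapsing: naively composing cQEXT (constant rounds) with a constant-round WI argument could blow past ``constant'' only by a constant factor, which is fine, but I would need to interleave messages so that the verifier's cQEXT-sender messages and the prover's WI messages are sent in overlapping rounds to keep the concrete round count small, and check that this interleaving does not break either the soundness reduction or the simulator.
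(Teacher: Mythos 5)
Your high-level template (verifier commits to a trapdoor, runs cQEXT as the sender so a quantum simulator can extract it, prover gives a post-quantum WI argument for ``$\inst \in \lang$ OR trapdoor'') is indeed the paper's approach, and your soundness and ZK sketches point in the right direction. However, there is a genuine gap: the cQEXT protocol's extractability is only guaranteed against \emph{semi-malicious} senders, i.e.\ senders that follow the protocol honestly apart from their choice of randomness. In your ZK proof the simulator must run the cQEXT extractor against the \emph{malicious} verifier acting as sender, and nothing in your construction forces that verifier to behave semi-maliciously inside the cQEXT subprotocol. If it deviates, extraction can fail and your simulator has no fallback, so the hybrid argument for zero-knowledge does not go through. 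The paper closes this gap with an extra mechanism you are missing: after the cQEXT phase, the verifier \emph{reveals} the randomness and inputs it used there, and the prover checks that the verifier's cQEXT messages are consistent with an honest (semi-malicious) execution, aborting otherwise; the simulator performs the same check, and the ``indistinguishability of extraction against malicious senders'' property ensures the views match even when the verifier aborts or deviates.

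This fix creates a second problem that your proposal also does not have to confront but would inherit: once the verifier reveals its cQEXT randomness, a cheating classical prover trivially learns the trapdoor $\td$, destroying soundness. The paper therefore forces the prover to commit to its candidate trapdoor $\td'$ (via the extractable commitment of~\cite{PW09}) \emph{before} the verifier's reveal, and the WI statement refers to these pre-committed shares rather than to ``knowledge'' of a witness. The soundness reduction then rewinds the classical prover to extract $\td^*$ from its commitments and argues that $\td^* = \td$ happens with negligible probability (using the classical zero-knowledge of cQEXT plus the hiding of the commitment), while $\td^* \neq \td$ makes the trapdoor branch of the WI statement false so plain classical soundness of WI suffices. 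Your alternative soundness route via an argument of knowledge for the OWF-preimage branch is plausible in isolation, but without the reveal-and-check step you cannot prove zero-knowledge, and with it you cannot prove soundness unless you add the prover's prior commitment to the trapdoor. These two interlocking mechanisms are the substance of the paper's proof and are absent from your plan.
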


\noindent A desirable property from a QZK protocol is if the verifier is classical then the simulator is also classical. While our protocol doesn't immediately satisfy this property, we show, nonetheless, that there is a simple transformation \fullversion{(Section~\ref{sec:class:verifiers})} that converts into another QZK protocol that has this desirable property. 

\paragraph{Application: Authorization with Quantum Cloud.} Suppose Eva wants to convince the cloud services offered by some company that she has the authorization to access a document residing in the cloud. Since the authorization information could leak sensitive information about Eva, she would rather use a zero-knowlede protocol to prove to the cloud that she has the appropriate authorization. While we currently don't have scalable implementations of quantum computers, this could change in the future when organizations (e.g. governments, IBM, Microsoft, etc) could be the first ones to develop a quantum computer. They could in principle then use this to break the zero-knowledge property of Eva's protocol and learn sensitive information about her. In this case, it suffices to  use a QZK protocol but only requiring soundness against malicious classical users; in the nearby future, it is reasonable to assume that even if organizations with enough resources get to develop full-fledged quantum computers, it'll take a while before everyday users will have access to one.

\subsection{Related Work}

\paragraph{Quantum Rewinding.} Watrous~\cite{Wat09} introduced the quantum analogue of the rewinding technique.  Later, Unruh~\cite{Unruh12} introduced yet another notion of quantum rewinding with the purpose of constructing quantum zero-knowledge proofs of knowledge. Unruh's rewinding does have extractability, but it requires that the underlying protocol to satisfy \textit{strict soundness}. Furthermore, the probability that the extractor succeeds is not negligibly close to $1$. The work of~\cite{ARU14} shows that relative to an oracle, many classical zero-knowledge protocols are quantum insecure, and that the strict soundness condition from~\cite{Unruh12} is necessary in order for a sigma protocol to be a quantum proofs of knowledge.

\paragraph{Quantum and Classical Zero-Knowledge.}  Zero-knowledge against quantum adversaries was first studied by Watrous~\cite{Wat09}. He showed how the GMW protocol~\cite{GMW86} for graph 3-colorability is still zero-knowledge against quantum verifiers. Other works~\cite{hallgren2008making,chailloux2008interactive,JKMR06,kobayashi2008general,matsumoto2006simpler,Unruh12} have extended the study of classical protocols that are quantum zero-knowledge, and more recently, Broadbent et al.~\cite{zkqma16} extended the notion of zero-knowledge to QMA languages. By using ideas from \cite{mahadev2018verification} to classically verify quantum computation, the protocol in~\cite{zkqma16} was adapted to obtained classical argument systems for quantum computation in~\cite{vidick2019classical}. All known protocols, with non-negligible soundness error, take non-constant rounds.  
\par On the other hand, zero knowledge proof and argument systems have been extensively studied in classical cryptography. In particular, a series of recent works~\cite{BCPR16,BBKPV16,BKP18,BKP19} resolved the round complexity of zero knowledge argument systems. 

\paragraph{Comparison with~\cite{BS20}.} In a recent exciting work,~\cite{BS20} construct a constant round QZK with soundness against quantum adversaries for NP and QMA. 
\begin{itemize}

\item The non-black-box techniques used in their work was concurrently developed and are similar to the techniques used in our QEXT protocol secure against quantum receivers\footnote{A copy of our QEXT protocol secure against quantum receivers was privately communicated to the authors of~\cite{BS20} on the day of their public posting and our paper was posted online in about two weeks from then~\cite{AL19}.}.  

\item Subsequent to their posting, using completely different techniques, we developed QEXT secure against classical receivers and used it to build a constant round QZK system with classical soundness. There are a few crucial differences between our QZK argument system and theirs:

\begin{enumerate}

\item Our result is based on quantum hardness of learning with errors while their result is based on the existence of quantum fully homomorphic encryption for arbitrary polynomial computations and quantum hardness of learning with errors.


\item The soundness of their argument system is against quantum polynomial time algorithms while ours is only against classical PPT adversaries. 

\end{enumerate}
\end{itemize}


\subsection{Quantum extraction with security against classical receivers: Overview}
\noindent We start with the overview of quantum extraction protocols with security against classical receivers.

\paragraph{Starting Point: Noisy Trapdoor Claw-Free Functions.} Our main idea is to turn the "test of quantumness" from~\cite{brakerski2018cryptographic} into an extraction protocol. Our starting point is a noisy trapdoor claw-free function (NTCF) family~\cite{mahadev2018classical,mahadev2018verification,brakerski2018cryptographic}, parameterized by key space $\cK$, input domain $\cX$ and output domain $\cY$. Using a key $\key \in \cK$, NTCFs allows for computing the functions, denoted by $f_{\key,0}(x) \in \cY$ and $f_{\key,1}(x) \in \cY$~\footnote{The efficient implementation of $f$ only approximately computes $f$ and we denote this by $f'$. We ignore this detail for now.}, where $x \in \cX$. Using a trapdoor $\td$ associated with a key $\key$, any $y$ in the support of $f_{\key,b}(x)$, can be efficiently inverted to obtain $x$. Moreover, there are "claw" pairs $(x
_0,x_1)$ such that $f_{\key,0}(x_0) =f_{\key,1}(x_1)$. Roughly speaking, the security property states that it is computationally hard even for a quantum computer to simultaneously produce $y \in \cY$, values $(b,x_{b})$ and $(d,u)$ such that $f_{\key,b}(x_b)=y$ and $\langle d,J(x_0) \oplus J(x_1) \rangle = u$, where $J(\cdot)$ is an efficienctly computable injective function mapping $\cX$ into bit strings. What makes this primitive interesting is its quantum capability that we will discuss when we recall below the test of~\cite{brakerski2018cryptographic}.

\newcommand{\bits}{\mathbf{a}}
\paragraph{Test of Quantumness~\cite{brakerski2018cryptographic}.} Using NTCFs,~\cite{brakerski2018cryptographic} devised the following test\footnote{As written, this test doesn't have negligible soundness but we can achieve negligible soundness by parallel repetition.}: \begin{itemize} 

\item The classical client, who wants to test whether the server it's interacting with is quantum or classical, first generates a key $\key$ along with a trapdoor $\td$ associated with a noisy trapdoor claw-free function (NTCF) family. It sends $\key$ to the server. 

\item The server responds back with $y \in \cY$.

\item The classical client then sends a {\bf challenge} bit $\bits$ to the server. 

\item If $\bits=0$, the server sends a pre-image $x_b$ along with bit $b$ such that $f_{\key,b}(x_b)=y$. If $\bits=1$, the server sends a vector $d$ along with a bit $u$ satisfying the condition $\langle d,J(x_{0}) \oplus J(x_1) \rangle = u$, where $x_{0},x_1$ are such that $f_{\key,0}(x_0)=f_{\key,1}(x_1)=y$.

\end{itemize}
\noindent The client can check if the message sent by the server is either a valid pre-image or a valid $d$ that is correlated with respect to both the pre-images. 
\par Intuitively, since the (classical) server does not know, at the point when it sends $y$, whether it will be queried for $(b,x_b)$ or $(d,u)$, by the security of NTCFs, it can only answer one of the queries. While the quantum capability of NTCFs allows for a quantum server to maintain a superposition of a claw at the time it sent $y$ and depending on the query made by the verifier it can then perform the appropriate quantum operations to answer the client; thus it will always pass the test. 

\paragraph{From Test of Quantumness to Extraction.} A natural attempt to achieve extraction is the following: the sender takes the role of the client and the receiver takes the role of the server and if the test passes, the sender sends the witness to the receiver. We sketch this attempt below. 
\begin{itemize}
    \item Sender on input instance-witness pair $(\inst,\witness)$ and receiver on input instance $\inst$ run a ``test of quantumness" protocol where the receiver (taking the role of the server) needs to convince the sender (taking the role of the classical client) that it is a quantum computer. 
    \item If the receiver succeeds in the ``test of quantumness" protocol then the sender sender $\witness$, else it aborts. 
\end{itemize}
\noindent Note that a quantum extractor can indeed succeed in the test of quantumness protocol and hence, it would receive $\witness$ while a malicious classical adversary will not. 
\par However, the above solution is not good enough for us. It does not satisfy indistinguishability of extraction: the sender can detect whether it's interacting with a quantum extractor or an honest receiver.

\paragraph{Achieving Indistinguishability of Extraction.} To ensure indistinguishability of extraction, we rely upon a tool called secure function evaluation~\cite{GHV10,BCPR16} that satisfies quantum security. A secure function evaluation (SFE) allows for two parties $P_1$ and $P_2$ to securely compute a function on their inputs in a such a way that only one of the parties, say $P_2$, receives the output of the function. In terms of security, we require that: (i) $P_2$ doesn't get information about $P_1$'s input beyond the output of the function and, (ii)  $P_1$ doesn't get any information about $P_2$'s input (in fact, even the output of the protocol is hidden from $P_1$).
\par The hope is that by combining SFE and test of quantumness protocol, we can guarantee that a quantum extractor can still recover the witness by passing the test of quantumness as before but the sender doesn't even know whether the receiver passed or not. To implement this, we assume a structural property from the underlying test of quantumness protocol: until the final message of the protocol, the client cannot distinguish whether it's talking to a quantum server or a classical server. This structural property is satisfied by the test of quantumness protocol~\cite{brakerski2018cryptographic} sketched above. \par Using this structural property and SFE, here is another attempt to construct a quantum extraction protocol: let the test of quantumness protocol be a $k$-round protocol. 
\begin{itemize}
    \item Sender on input instance-witness pair $(\inst,\witness)$ and receiver on input instance $\inst$ run the first $(k-1)$ rounds of the test of quantumness protocol where the receiver (taking the role of the server) needs to convince the sender (taking the role of the receiver) that it can perform quantum computations. 
    \item Sender and receiver then run a SFE protocol for the following functionality $G$: it takes as input $\witness$ and the first $(k-1)$ rounds of the test of quantumness protocol from the sender, the $k^{th}$ round message from the receiver\footnote{It follows without loss of generality that the server (and thus, the receiver of the quantum extraction protocol) computes the final message of the test of quantumness protocol.} and outputs $\witness$ if indeed the test passed, otherwise output $\bot$. Sender will take the role of $P_1$ and the receiver will take the role of $P_2$ and thus, only the receiver will receive the output of $G$. 
\end{itemize}
\noindent Note that the security of SFE guarantees that the output of the protocol is hidden from the sender and moreover, the first $(k-1)$ messages of the test of quantumness protocol doesn't reveal the information about whether the receiver is a quantum computer or not. These two properties ensure the sender doesn't know whether the receiver passed the test or not. Furthermore, the quantum extractor still succeeds in extracting the witness $\witness$ since it passes the test. 
\par The only remaining property to prove is zero-knowledge. 

\paragraph{Challenges in Proving Zero-Knowledge.} How do we ensure that a malicious classical receiver was not able to extract the witness? The hope would be to invoke the soundness of the test of quantumness protocol to argue this. However, to do this, we need all the $k$ messages of the test of quantumness protocol. 
\par To understand this better, let us recall how the soundness of the test of quantumness works: the client sends a challenge bit $\bits=0$ to the server who responds back with $(b,x_b)$, then the client rewinds the server and instead sends the challenge bit $\bits=1$ and it receives $(d,u)$: this contradicts the security of NTCFs since a classical PPT adversary cannot simultaneously produce both a valid pre-image $(b,x_b)$ and a valid correlation vector along with the prediction bit $(d,u)$. 
\par Since the last message is fed into the secure function evaluation protocol and inaccessible to the simulator, we cannot use this rewinding strategy to prove the zero-knowledge of the extraction protocol.

\paragraph{Final Template: Zero-Knowledge via Extractable Commitments~\cite{PRS02,PW09}.} To overcome this barrier, we force the receiver to commit, using an extractable commitment scheme, to the $k^{th}$ round of the test of quantumness protocol before the SFE protocol begins. An extractable commitment scheme is one where there is an extractor who can extract an input $x$ being committed from the party committing to $x$. Armed with this tool, we give an overview of our construction below. 
\begin{itemize}
    \item Sender on input instance-witness pair $(\inst,\witness)$ and receiver on input instance $\inst$ run the first $(k-1)$ rounds of the test of quantumness protocol where the receiver (taking the role of the server) needs to convince the sender (taking the role of the receiver) that it can perform quantum computations. 
    \item The $k^{th}$ round of the test of quantumness protocol is then committed by the receiver, call it $\mathbf{c}$, using the extractable commitment scheme\footnote{In the technical sections, we use a specific construction of extractable commitment scheme by~\cite{PRS02,PW09} since we additionally require security against quantum adversaries.}.  
    \item Finally, the sender and the receiver then run a SFE protocol for the following functionality $G$: it takes as input $\witness$ and the first $(k-1)$ rounds of the test of quantumness protocol from the sender, the decommitment of $\mathbf{c}$ from the receiver and outputs $\witness$ if indeed the test passed, otherwise output $\bot$. Sender will take the role of $P_1$ and the receiver will take the role of $P_2$ and thus, only the receiver will receive the output of $G$. 
\end{itemize}
\noindent Let us remark about  zero-knowledge since we have already touched upon the other properties earlier. To argue zero-knowledge, construct a simulator that interacts honestly with the malicious receiver until the point the extraction protocol is run. Then, the simulator runs the extractor of the commitment scheme to extract the final message of the test of quantumness protocol. It then rewinds the test of quantumness protocol to the point where the simulator sends a different challenge bit (see the informal description of~\cite{brakerski2018cryptographic} given before) and then runs the extractor of the commitment scheme once again to extract the $k^{th}$ round message of the test of quantumness protocol. Recall that having final round messages corresponding to two different challenge bits is sufficient to break the security of NTCFs; the zero-knowledge property then follows.
\par A couple of remarks about our simulator. Firstly, the reason why our simulator is able to rewind the adversary is because the adversary is a classical PPT algorithm. Secondly, our simulator performs {\em double rewinding} -- not only does the extractor of the commitment scheme perform rewinding but also the test of quantumness protocol is rewound.  
 
\subsection{Constant Round QZK Argument Systems with Classical Soundness}
We show how to use the above  quantum extraction protocol secure against classical receivers (cQEXT) to construct an interactive argument system satisfying classical soundness and quantum ZK.  

\paragraph{From Quantum Extraction to Quantum Zero-Knowledge.} As a starting point, we consider the quantum analogue of the seminal FLS technique~\cite{FLS99} to transform a quantum extraction protocol into a quantum ZK protocol. A first attempt to construct quantum ZK is as follows: let the input to the prover be instance $\inst$ and witness $\witness$ while the input to the verifier is  $\inst$.  
\begin{itemize}
    \item The verifier commits to some trapdoor $\td$. Call the commitment $\bfc$ and the corresponding decommitment $\bfd$.  
    \item The prover and verifier then execute a quantum extraction protocol with the verifier playing the role of the sender, on input $(\bfc,\bfd)$, while the prover plays the role of the receiver on input $\bfc$. 
    \item The prover and the verifier then run a witness-indistinguishable protocol where the prover convinces the verifier that either $\inst$ belongs to the language or it knows $\td$.  
\end{itemize}
\par At first sight, it might seem that the above template should already give us the result we want; unfortunately, the above template is insufficient. The verifier could behave maliciously in the quantum extraction protocol but the quantum extraction protocol only guarantees security against semi-malicious senders. Hence, we need an additional mechanism to protect against malicious receivers. Of course, we require witness-indistinguishability to hold against quantum verifiers and we do know candidates satisfying this assuming quantum hardness of learning with errors~\cite{Blu86,LS19}.   

\paragraph{Handling Malicious Behavior in QEXT.} To check that the verifier behaved honestly in the quantum extraction protocol, we ask the verifier to reveal the inputs and random coins used in the quantum extraction protocol. At this point, the prover can check if the verifier behaved honestly or not. Of course, this would then violate soundness: the malicious prover upon receiving the random coins from the verifier can then recover $\td$ and then use this to falsely convince the verifier to accept its proof. We overcome this by forcing the prover to commit (we again use the extractable commitment scheme of~\cite{PW09}) to some string $\td'$ just before the verifier reveals the inputs and random coins used in the quantum extraction protocol. Then we force the prover to use the committed $\td'$ in the witness-indistinguishable protocol; the prover does not gain any advantage upon seeing the coins of the verifier and thus, ensuring soundness.
\par One aspect we didn't address so far is the aborting issue of the verifier: if the verifier aborts in the quantum extraction protocol, the simulator still needs to produce a transcript indistinguishable from that of the honest prover. Luckily for us, the quantum extraction protocol we constructed before already allows for simulatability of aborting adversaries.
\par To summarise, our ZK protocol consists of the following steps: (i) first, the prover and the verifier run the quantum extraction protocol, (ii) next the prover commits to a string $\td'$ using~\cite{PW09}, (iii) the verifier then reveals the random coins used in the extraction protocol and, (iv) finally, the prover and the verifier run a quantum WI protocol where the prover convinces the verifier that it either knows a trapdoor $\td'$ or that $\inst$ is a YES instance.

\subsection{Quantum extraction with security against quantum receivers: Overview}
We show how to construct extraction protocols where we prove security against quantum receivers. At first sight, it might seem that quantum extraction and quantum zero-knowledge properties are contradictory since the extractor has the same computational resources as the malicious receiver. However, we provide more power to the extractor by giving the extractor non-black-box access to the semi-malicious sender. There is a rich literature on non-black-box techniques in the classical setting starting with the work of~\cite{Bar01}. 

\paragraph{Quantum Extraction via Circular {\bf In}security of QFHE.} The main tool we employ in our protocol is a fully homomorphic encryption $\fhe$ scheme\footnote{Recall that a classical FHE scheme~\cite{Gen09,BV12} allows for publicly evaluating an encryption of a message $x$ using a circuit $C$ to obtain an encryption of $C(x)$.} that allows for public homomorphic evaluation of quantum circuits. Typically, we require a fully homomorphic encryption scheme to satisfy semantic security. However, for the current discussion, we require that $\fhe$ to satisfy a stronger security property called 2-circular {\bf in}security: 
\begin{quote}
    Given $\fhe.\enc(\pk_1,\allowbreak SK_2)$ (i.e., encryption of $SK_2$ under $\pk_1$), $\fhe.\enc(PK_2,\sk_1)$, where $(\pk_1,\sk_1)$ and $(PK_2,SK_2)$ are independently generated public key-secret key pairs, we can efficiently recover $\sk_1$ and $SK_2$.
    \end{quote}
    Later, we show how to get rid of 2-circular {\bf in}security property by using lockable obfuscation~\cite{GKW17,WZ17}.  Here is our first attempt to construct the extraction protocol: 
\begin{itemize}

    \item The sender, on input instance $\inst$ and witness $\witness$, sends three ciphertexts:  $\ct_1 \leftarrow \fhe.\enc(\pk_1,\allowbreak \td)$, $\ct_2 \leftarrow \fhe.\enc(\pk_1,\witness)$ and $\ct_3 \leftarrow \fhe.\enc(PK_2,\allowbreak \sk_1)$. 
    \item The receiver sends $\td'$. 
    \item If $\td'=\td$ then the sender sends $SK_2$. 
\end{itemize}
\noindent A quantum extractor with non-black-box access to the private (quantum) state of the semi-malicious sender $S$ does the following: 
\begin{itemize}
    \item It first encrypts the private (quantum) state of $S$ under public key $\pk_1$.
    \item Here is our main insight: the extractor can homomorphically evaluate the next message function of $S$ on $\ct_1$ and the encrypted state of $S$. The result is $\ct_1^* = \fhe.\enc(\pk_1,S(\td))$. But note that $S(\td)$ is nothing but $SK_2$; note that $S$ upon receiving $\td'=\td$ outputs $SK_2$. Thus, we have $\ct^*_1=\fhe.\enc(\pk_1,SK_2)$.
    \item Now, the extractor has both $\ct_3 = \fhe.\enc(PK_2,\sk_1)$ and $\ct_1^*=\fhe.\enc(\pk_1,\allowbreak SK_2)$. It can then use the circular {\bf in}security of $\fhe$ to recover $\sk_1,SK_2$. 
    \item Finally, it decrypts $\ct_2$ to obtain the witness $\witness$!
\end{itemize}
\noindent The correctness of extraction alone is not sufficient; we need to argue that the sender cannot distinguish whether it's interacting with the honest receiver or the extractor. This is not true in our protocol since the extractor will always compute the next message function of $S$ on $\td'=\td$ whereas an honest receiver will send $\td'=\td$ only with negligible probability. 

\paragraph{Indistinguishability of Extraction: SFE strikes again.} We already encountered a similar issue when we were designing extraction protocols with security against classical receivers and the tool we used to solve that issue was secure function evaluation (SFE); we will use the same tool here as well. 
\par Using SFE, we make another attempt at designing the quantum extraction protocol. 
\begin{itemize}
    \item The sender, on input instance $\inst$ and witness $\witness$, sends three ciphertexts:  $\ct_1 \leftarrow \fhe.\enc(\pk_1,\allowbreak \td)$, $\ct_2 \leftarrow \fhe.\enc(\pk_1,\witness)$ and $\ct_3 \leftarrow \fhe.\enc(\allowbreak PK_2,\sk_1)$. 
    \item The sender and the receiver executes a secure two-party computation protocol, where the receiver feeds $\td'$ and the sender feeds in $(\td,\witness)$. After the protocol finishes, the receiver recovers $\witness$ if $\td'= \td$, else it recovers $\bot$. The sender doesn't receive any output.   
\end{itemize}
The above template guarantees indistinguishability of extraction property\footnote{There is a subtle point here that we didn't address: the transcript generated by the extractor is encrypted under $\fhe$. But after recovering the secret keys, the extractor could decrypt the encrypted transcript.}. 
\par We next focus on zero-knowledge. To do this, we need to argue that the $\td'$ input by the malicious receiver can never be equal to $\td$. One might falsely conclude that the semantic security of $\fhe$ would imply that $\td$ is hidden from the sender and hence the argument follows. This is not necessarily true; the malicious receiver might be able to ``maul" the ciphertext $\ct_1$ into the messages of the secure function evaluation protocol in such a way that the implicit input committed by the receiver is $\td'$. We need to devise a mechanism to prevent against such mauling attacks. 

\paragraph{Preventing Mauling Attacks.} We prevent the mauling attacks by forcing the receiver to commit to random strings $(r_1,\ldots,r_{\ell})$ in the first round, where $|\td|=\ell$, even before it receives the ciphertexts $(\ct_1,\ct_2,\ct_3)$ from the sender. Once it receives the ciphertexts, the receiver is supposed to commit to every bit of the trapdoor using the randomness $r_1,\ldots,r_{\ell}$; that is, the $i^{th}$ bit of $\td$ is committed using $r_i$. 
\par Using this mechanism, we can then provably show that if the receiver was able to successfully maul the $\fhe$ ciphertext then it violates the semantic security of $\fhe$ using a non-uniform adversary. 

\paragraph{Replacing Circular {\bf In}security with Lockable Obfuscation~\cite{GKW17,WZ17}.} While the above protocol is a candidate for quantum extraction protocol secure against quantum receivers; it is still unsatisfactory since we assume a quantum FHE scheme satisfying 2-circular {\bf in}security. We show how to replace 2-circular insecure QFHE with {\em any} QFHE scheme (satisfying some mild properties already satisfied by existing candidates) and lockable obfuscation for classical circuits. A lockable obfuscation scheme is an obfuscation scheme for a specific class of functionalities called compute-and-compare functionalities; a compute-and-compare functionality is parameterized by  $C,\alpha$ (lock), $\beta$ such that on input $x$, it outputs $\beta$ if $C(x)=\alpha$. As long as $\alpha$ is sampled uniformly at random and independently of $C$, lockable obfuscation completely hides the circuit $C$, $\alpha$ and $\beta$. The idea to replace 2-circular insecure QFHE with lockable obfuscation\footnote{It shouldn't be too surprising that lockable obfuscation can be used to replace circular insecurity since one of the applications~\cite{GKW17,WZ17} of lockable obfuscation was to demonstrate counter-examples for circular security,} is as follows: obfuscate the circuit, with secret key $SK_2$, ciphertext $\fhe.\enc(SK_2,r)$ hardwired, that takes as input $\fhe.\enc(\pk_1,SK_2)$, decrypts it to obtain $SK'_2$, then decrypts $\fhe.\enc(SK_2,r)$ to obtain $r'$ and outputs $\sk_1$ if $r'=r$. If the adversary does not obtain $\fhe.\enc(\pk_1,SK_2)$ then we can first invoke the security of lockable obfuscation to remove $\sk_1$ from the obfuscated circuit and then it can replace $\fhe.\enc(\pk_1,\witness)$ with $\fhe.\enc(\pk_1,\bot)$. The idea of using fully homomorphic encryption along with lockable obfuscation to achieve non-black-box extraction was first introduced, in the classical setting, by~\cite{BKP19}.
\par Unlike our cQEXT construction, the non-black-box technique used for qQEXT does not directly give us a constant round quantum zero-knowledge protocol for NP.  This is because an adversarial verifier that aborts can distinguish between the extractor or the honest prover (receiver in qQEXT). The main issue is that the extractor runs the verifier homomorphically, so it cannot detect if the verifier aborted at any point in the protocol without decrypting. But if the verifier aborted, the extractor wouldn't be able to decrypt in the first place -- it could attempt to rewind but then this would destroy the initial quantum auxiliary state.

\section{Preliminaries}
\label{sec:prelims}
\noindent We denote the security parameter by $\secparam$. We denote (classical) computational indistiguishability of two distributions $\distr_0$ and $\distr_1$ by $\distr_0 \approx_{c,\varepsilon} \distr_1$. In the case when $\varepsilon$ is negligible, we drop $\varepsilon$ from this notation. 

\paragraph{Languages and Relations.} A language $\lang$ is a subset of $\{0,1\}^*$. A relation $\rel$ is a subset of $\{0,1\}^* \times \{0,1\}^*$. We use the following notation:
\begin{itemize}

\item Suppose $\rel$ is a relation. We define $\rel$ to be {\em efficiently decidable} if there exists an algorithm $A$ and fixed polynomial $p$ such that $(x,w) \in \rel$ if and only if $A(x,w)=1$ and the running time of $A$ is upper bounded by $p(|x|,|w|)$. 

\item Suppose $\rel$ is an efficiently decidable relation. We say that $\rel$ is a NP relation if $\lang(\rel)$ is a NP language, where $\lang(\rel)$ is defined as follows: $x \in \lang(R)$ if and only if there exists $w$ such that $(x,w) \in \rel$ and $|w| \leq p(|x|)$ for some fixed polynomial $p$. 

\end{itemize}

\subsection{Learning with Errors}
\label{sec:prelims:lwe}

\noindent In this work, we are interested in the decisional learning with errors (LWE) problem. This problem, parameterized by $n,m,q,\chi$, where $n,m,q \in \mathbb{N}$, and for a distribution $\chi$ supported over $\mathbb{Z}$ is to distinguish between the distributions $(\bfA,\bfA \bfs + \bfe)$ and $(\bfA,\bfu)$, where $\bfA \xleftarrow{\$} \mathbb{Z}_q^{m \times n},\bfs \xleftarrow{\$} \mathbb{Z}_q^{n \times 1},\bfe \xleftarrow{\$} \chi^{m \times 1}$ and $\bfu \leftarrow \mathbb{Z}_q^{m \times 1}$. Typical setting of $m$ is $n \log(q)$, but we also consider $m=\poly(n \log(q))$. 
\par We base the security of our constructions on the quantum hardness of learning with errors problem.

\subsection{Notation and General Definitions}
\label{ssec:notation}

For completeness, we present some of the basic quantum definitions, for more details see \cite{nielsen2002quantum}.
\paragraph{Quantum states and channels.} Let $\cH$ be any finite Hilbert space, and let $L(\cH):=\{\cE:\cH \rightarrow \cH \}$ be the set of all linear operators from $\cH$ to itself (or endomorphism). Quantum states over $\cH$ are the positive semidefinite operators in $L(\cH)$ that have unit trace. Quantum channels or quantum operations acting on quantum states over $\cH$ are completely positive trace preserving (CPTP) linear maps from $L(\cH)$ to $L(\cH')$ where $\cH'$ is any other finite dimensional Hilbert space.

A state over $\cH=\mathbb{C}^2$ is called a qubit. For any $n \in \mathbb{N}$, we refer to the quantum states over $\cH = (\mathbb{C}^2)^{\otimes n}$ as $n$-qubit quantum states. To perform a standard basis measurement on a qubit means projecting the qubit into $\{\ket{0},\ket{1}\}$. A quantum register is a collection of qubits. A classical register is a quantum register that is only able to store qubits in the computational basis.

A unitary quantum circuit is a sequence of unitary operations (unitary gates) acting on a fixed number of qubits. Measurements in the standard basis can be performed at the end of the unitary circuit. A (general) quantum circuit is a unitary quantum circuit with $2$ additional operations: $(1)$ a gate that adds an ancilla qubit to the system, and $(2)$ a gate that discards (trace-out) a qubit from the system. A quantum polynomial-time algorithm (QPT) is a uniform collection of quantum circuits $\{C_n\}_{n \in \mathbb{N}}$.

\paragraph{Quantum Computational Indistinguishability.}

When we talk about quantum distinguishers, we need the following definitions, which we take from \cite{Wat09}.
\begin{definition}[Indistinguishable collections of states] Let $I$ be an infinite subset $I \subset \{0,1\}^*$, let $p : \mathbb{N} \rightarrow \mathbb{N}$ be a polynomially bounded function, and let $\rho_{x}$ and $\sigma_x$ be $p(|x|)$-qubit states. We say that $\{\rho_{x}\}_{x \in I}$ and $\{\sigma_x\}_{x\in I}$ are \textbf{quantum computationally indistinguishable collections of quantum states} if for every QPT $\cE$ that outputs a single bit, any polynomially bounded  $q:\mathbb{N}\rightarrow \mathbb{N}$, and any auxiliary $q(|x|)$-qubits state $\nu$, and for all $x \in I$, we have that
$$\left|\Pr\left[\cE(\rho_x\otimes \nu)=1\right]-\Pr\left[\cE(\sigma_x \otimes \nu)=1\right]\right| \leq \epsilon(|x|) $$
for some negligible function $\epsilon:\mathbb{N}\rightarrow [0,1]$. We use the following notation 
$$\rho_x \approx_{Q,\epsilon} \sigma_x$$
and we ignore the $\epsilon$ when it is understood that it is a negligible function.
\end{definition}

\begin{definition}[Indistinguishability of channels] Let $I$ be an infinite subset $I \subset \{0,1\}^*$, let $p,q: \mathbb{N} \rightarrow \mathbb{N}$ be polynomially bounded functions, and let $\cD_x,\cF_x$
be quantum channels mapping $p(|x|)$-qubit states to $q(|x|)$-qubit states. We say that $\{\cD_x\}_{x \in I}$ and $\{\cF_x\}_{x \in I}$ are \textbf{quantum computationally indistinguishable collection of channels} if for every QPT $\cE$ that outputs a single bit, any polynomially bounded $t : \mathbb{N} \rightarrow \mathbb{N}$, any $p(|x|)+t(|x|)$-qubit quantum state $\rho$, and for all $x\in I$, we have that
$$ \left|\Pr\left[\cE\left((\cD_x\otimes \Id)(\rho)\right)=1\right]-\Pr\left[\cE\left((\cF_x\otimes \Id)(\rho)\right)=1\right]\right|\leq \epsilon(|x|) $$
for some negligible function $\epsilon:\mathbb{N}\rightarrow [0,1]$. We will use the following notation
$$ \cD_x(\cdot) \approx_{Q,\epsilon} \cF_x(\cdot)$$
and we ignore the $\epsilon$ when it is understood that it is a negligible function.

\end{definition}

\paragraph{Interactive Models.} We model an interactive protocol between a prover, $\prvr$, and a verifier, $\vrfr$, as follows. There are 2 registers $\RegP$ and $\RegV$ corresponding to the prover's and the verifier's private registers, as well as a message register, $\RegM$, which is used by both $\prvr$ and $\vrfr$ to send messages. In other words, both prover and verifier have access to the message register. We denote the size of a register $\mathsf{R}$ by $|\mathsf{R}|$ -- this is the number of bits or qubits that the register can store.  We will have 2 different notions of interactive computation. Our honest parties will perform classical protocols, but the adversaries will be allowed to perform quantum protocols with classical messages.

\begin{enumerate}
    \item \textbf{Classical protocol:} An interactive protocol is classical if $\RegP$, $\RegV$, and $\RegM$ are classical, and $\prvr$ and $\vrfr$ can only perform classical computation.
    \item \textbf{Quantum protocol with classical messages:} An interactive protocol is quantum with classical messages if either one of $\RegP$ or $\RegV$ is a quantum register, and $\RegM$ is classical. $\prvr$ and $\vrfr$ can perform quantum computations if their respective private register is quantum, but they can only send classical messages.
\end{enumerate}
When a protocol has classical messages, we can assume that the adversarial party will also send classical messages. This is without loss of generality, because the honest party can enforce this condition by always measuring the message register in the computational basis before proceeding with its computations.

\newcommand{\qst}{\mathsf{qst}}

\paragraph{Non-Black-Box Access.} Let $S$ be a QPT party (e.g. either prover or verifier in the above descriptions) involved in specific quantum protocol. In particular, $S$ can be seen as a collection of QPTs, $S=(S_1,...,S_{\ell})$, where $\ell$ is the number of rounds of the protocol, and $S_i$ is the quantum operation that $S$ performs on the $i$th round of the protocol.

We say that a QPT $Q$ has \textit{non-black-box access} to $S$, if $Q$ has access to an efficient classical description for the operations that $S$ performs in each round, $(S_1,...,S_{\ell})$, as well as access to the initial auxiliary inputs of $S$.


\paragraph{Interaction Channel.} For a particular protocol $(\prvr, \vrfr)$, the interaction between $\prvr$ and $\vrfr$ on input $\inst$ induces a quantum channel $\cE_{\inst}$  acting on their private input states, $\rho_{\prvr}$ and $\sigma_{\vrfr}$. We denote the view of $\vrfr$ when interacting with $\prvr$  by $$\view_{\vrfr}\left(\left\langle \prvr\left(\inst, \rho_\prvr\right),\vrfr\left(\inst, \sigma_{\vrfr}\right) \right \rangle \right),$$ and this view is defined as the verifiers output. Specifically, 
$$ \view_{\vrfr}\left(\left\langle \prvr\left(\inst, \rho_\prvr\right),\vrfr\left(\inst, \sigma_{\vrfr}\right) \right \rangle \right) :=\tr_{\RegP} \left[\cE_{\inst}\left(\rho_\prvr \otimes  \sigma_\vrfr\right)\right]. $$ From the verifier's point of view, the interaction induces the channel $\cE_{\inst,V}(\sigma)=\cE_{\inst}(\sigma \otimes \rho_{\prvr})$ on its private input state.

\fullversion{
\subsection{Perfectly Binding Commitments}
\label{sec:prelims:commit}
\noindent A commitment scheme consists a classical PPT algorithm\footnote{Typically, commitment schemes are also associated with a opening algorithm; we don't use the opening algorithm in our work.} $\comm$ that takes as input security parameter $1^{\secparam}$, input message $ x$ and outputs the commitment $\bfc$. There are two properties that need to be satisfied by a commitment scheme: binding and hiding. In this work, we are interested in commitment schemes that are perfectly binding and computationally hiding; we define both these notions below. We adapt the definition of computational hiding to the quantum setting.  
 
\begin{definition}[Perfect Binding]
A commitment scheme $\comm$ is said to be perfectly binding if for every security parameter $\secparam \in \mathbb{N}$, there does not exist two messages $x,x'$ with $x \neq x'$ and randomness $r,r'$ such that $\comm(1^{\secparam},x;r)=\comm(1^{\secparam},x';r')$.  
\end{definition}

\begin{definition}[Quantum-Computational Hiding]
A commitment scheme $\comm$ is said to be computationally hiding if for sufficiently large security parameter $\secparam \in \mathbb{N}$, for any two messages $x,x'$, the following holds: 
$$\left\{ \comm \left( 1^{\secparam},x \right) \right\} \approx_{Q} \left\{ \comm \left( 1^{\secparam},x' \right) \right\} $$
\end{definition}

\paragraph{Instantiation.} A construction of perfectly binding non-interactive commitments was presented in the works of~\cite{GHKW17,LS19} assuming the hardness of learning with errors. Thus, we have the following: 

\begin{lemma}[\cite{GHKW17,LS19}]
Assuming the quantum hardness of learning with errors, there exists a construction of perfectly binding quantum-computational hiding non-interactive commitment schemes. 
\end{lemma}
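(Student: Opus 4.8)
Since the statement is attributed to prior work, the plan is essentially to import the non-interactive perfectly binding commitment constructions of~\cite{GHKW17,LS19}, whose \emph{classical} computational hiding rests on the hardness of decisional LWE, and then make the single quantum-specific observation needed: the hiding reduction of those constructions is black-box and straight-line (it runs the hiding adversary exactly once, with no rewinding), so a QPT adversary breaking quantum-computational hiding of the \emph{unmodified} scheme yields, verbatim, a QPT algorithm breaking decisional LWE, contradicting its assumed quantum hardness. The same straight-line structure lets the reduction carry an arbitrary polynomial-size quantum auxiliary state $\nu$ along unchanged, which is exactly what the definition of quantum-computational hiding (with auxiliary input) demands. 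So the work is: (i) recall a construction, (ii) argue perfect binding, (iii) argue hiding reduces to LWE via a straight-line reduction.

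For concreteness I would recall the following template, which underlies the cited constructions. One fixes a family of injective functions $f_{\mathsf{pp}}$ from (plain, classical) LWE — think of $\mathsf{pp}$ as a suitably structured public matrix $\bfA$ and $f_{\mathsf{pp}}(\bfs,\bfe)=\bfA\bfs+\bfe$ with $\bfe$ short — that is injective for \emph{every} $\mathsf{pp}$ in the support of parameter generation, and whose one-wayness follows from decisional LWE. Using a Goldreich--Levin hardcore bit, a commitment to $b\in\{0,1\}$ with randomness $r=(\bfs,\bfe,\mathbf{v})$ is $\comm(1^{\secparam},b;r)=\bigl(\mathsf{pp},\,f_{\mathsf{pp}}(\bfs,\bfe),\,\mathbf{v},\,\langle (\bfs,\bfe),\mathbf{v}\rangle\oplus b\bigr)$, and a multi-bit message is committed bit-by-bit with fresh independent randomness. \emph{Perfect binding} is then immediate: for every $\mathsf{pp}$, the pair $(\mathsf{pp},f_{\mathsf{pp}}(\bfs,\bfe))$ determines $(\bfs,\bfe)$ uniquely by injectivity, hence the last coordinate determines $b$ uniquely, so no commitment string admits two valid openings to distinct bits. \emph{Computational hiding}: an adversary distinguishing commitments to $0$ from commitments to $1$ predicts the Goldreich--Levin bit of $f_{\mathsf{pp}}$, which by the standard list-decoding argument inverts $f_{\mathsf{pp}}$, which in turn breaks decisional LWE; every step here is a uniform, straight-line reduction with no rewinding, so it goes through against quantum adversaries with quantum advice and we get $\{\comm(1^{\secparam},x)\}\approx_Q\{\comm(1^{\secparam},x')\}$ under quantum hardness of LWE.

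The step that actually requires care — and the place where the cited works do the real work — is obtaining \emph{perfect} rather than merely \emph{statistical} binding. The naive LWE injective function $\bfA\bfs+\bfe$ is injective only for an overwhelming fraction of matrices $\bfA$ (a short nonzero vector in the image $q$-ary lattice would break injectivity, but such vectors are absent for random $\bfA$ with appropriate $m,q,\chi$), which gives only statistical binding, and the perfect-binding definition quantifies over \emph{all} randomness including the coins used to sample $\mathsf{pp}$. The resolution used in~\cite{GHKW17,LS19} is to generate $\mathsf{pp}$ with enough structure (e.g., embedding a gadget matrix, so that $f_{\mathsf{pp}}$ is injective by construction for every choice in the support, or including a publicly verifiable certificate of injectivity) so that no bad $\mathsf{pp}$ exists. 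I expect this to be the main obstacle to a fully self-contained write-up; once it is granted, the remaining argument is routine, and the only quantum-specific ingredient is the already-noted straight-line nature of the hiding-to-LWE reduction. This establishes the lemma.
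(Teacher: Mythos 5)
The paper offers no proof of this lemma at all: it is imported wholesale from~\cite{GHKW17,LS19}, the only implicit quantum content being that the hiding reduction to decisional LWE survives against QPT adversaries. Your plan --- cite the constructions, check perfect binding syntactically, and observe that the hiding reduction transfers to the quantum setting --- is therefore exactly the paper's (non-)argument, fleshed out.

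One genuine flaw in your supporting sketch, however: the template you write down proves hiding via a Goldreich--Levin hardcore bit, and the standard GL list-decoding argument is \emph{not} a straight-line, single-query reduction --- it reruns the predictor polynomially many times on correlated inputs. Against a quantum adversary carrying a quantum auxiliary state $\nu$, rerunning is precisely the kind of step that can fail (each invocation may disturb the state, which cannot be cloned), so your parenthetical ``runs the hiding adversary exactly once'' contradicts the construction you chose. Two clean repairs exist: either invoke the quantum Goldreich--Levin theorem of Adcock and Cleve, or --- closer to what the cited constructions actually do --- drop GL entirely and encode the bit so that hiding follows from decisional LWE in a single query (e.g., the commitment is an LWE sample shifted by $b$ times a fixed public vector, so a hiding distinguisher is immediately an LWE distinguisher). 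With either repair the lemma goes through; your perfect-binding discussion (injectivity for every parameter in the support, via a structured or certified matrix) correctly identifies the real work in~\cite{GHKW17,LS19} and is fine as stated.
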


\subsection{Noisy Trapdoor Claw-Free Functions}
\label{sec:ntcfs}
Noisy trapdoor claw-free functions is a useful tool in quantum cryptography. Most notably, they are a key ingredient in the construction of certifiable randomness protocols \cite{brakerski2018cryptographic}, classical client quantum homomorphic encryption \cite{mahadev2018classical}, and classifal verification of quantum computation \cite{mahadev2018verification}.  We present the formal definition directly from \cite{brakerski2018cryptographic}.

\begin{definition}[Noisy Trapdoor Claw-Free Functions] Let $\cX$ and $\cY$ be finite sets, let $D_{\cY}$ be the set of distributions over $\cY$, and let $\cK$ be a finite set of keys. A collection of functions $\{f_{\key,b}:\cX \rightarrow D_{\cY} \}_{\key \in \cK, b \in \{0,1\}}$ is noisy trapdoor claw-free if
\begin{itemize}
    \item \textbf{(Key-Trapdoor Generation):}  There is a PPT $\gen(1^\secparam)$ to generate a key and a corresponding trapdoor, $\key,\td_{\key} \leftarrow \gen(1^\secparam)$.
    \item For all $\key \in \cK$
    \begin{itemize}
        \item \textbf{(Trapdoor):} For all $b\in\{0,1\}$, and any distinct $x,x'\in \cX$, we have that $\supp(f_{\key,b}(x)) \cap \supp(f_{\key,b}(x')) = \emptyset$. There is also an efficient deterministic algorithm $\inv$, that for any $y \in \supp(f_{\key,b}(x))$, outputs $x \leftarrow \inv(\td_{\key},b,y)$.
        \item \textbf{(Injective Pair):} There exists a perfect matching $\cR_\key \subseteq \cX \times \cX$ such that $f_{\key,0}(x_0)=f_{\key,1}(x_1)$ if and only if $(x_0,x_1) \in \cR_{\key}$
    \end{itemize}
    \item \textbf{(Efficient Range Superposition):} For all $\key \in \cK$ and $b \in \{0,1\}$, there exists functions $f'_{\key,b}:\cX \rightarrow D_{\cY}$ such that the following holds.
    \begin{itemize}
        \item For all $(x_0,x_1) \in \cR_{\key}$, and all $y \in \supp(f'_{\key,b}(x_b))$, the inversion algorithm still works, i.e. $x_b \leftarrow \inv(\td_{\key},b,y)$ and $x_{b\oplus1} \leftarrow \inv(\td_{\key}, b\oplus1,y)$.
        \item There is an efficient deterministic checking algorithm $\chk:\cK \times \{0,1\} \times \cX \times \cY \rightarrow \{0,1\}$ such that $\chk(\key,b,x,y)=1$ iff $y \in \supp(f'_{\key,b}(x))$
        \item For every $\key \in \cK$ and $b \in \{0,1\}$, $$\underset{x \leftarrow \cX}{\mathbb{E}}\left(H^2\left(f_{\key,b}(x),f'_{\key,b}(x)\right)\right) \leq \mu(\secparam) $$  
        for some negligible function $\mu$, and where $H^2$ is the Hellinger distance.
        \item For any $\key \in \cK$ and $b \in \{0,1\}$, there exists an efficient way to prepare the superposition $$\frac{1}{\sqrt{|\cX|}}\underset{x\in \cX, y\in \cY}{\sum}\sqrt{f'_{\key,b}(x)(y)} \ket{x}\ket{y} $$
    \end{itemize}
    
    \item \textbf{(Adaptive Hardcore Bit)}: for all keys $\key \in \cK$, for some polynomially bounded $w:\mathbb{N}\rightarrow \mathbb{N}$, the following holds.
    \begin{itemize}
        \item For all $b \in \{0,1\}$ and for all $x \in \cX$ there exists a set $G_{\key,b,x} \subseteq \{0,1\}^{w(\secparam)}$, s.t. $\underset{d\leftarrow \{0,1\}^{w(\secparam)}}{\Pr}\left[d \notin G_{\key,b,x} \right] \leq \text{negl}(\secparam)$. Furthermore, membership in $G_{\key,b,x}$ can be checked given $t_\key, \key, b$ and $x$.
        \item There is an efficiently computable injection $J:\cX \rightarrow \{0,1\}^{w(\secparam)}$, that can be inverted efficiently in its range, and for which the following holds. Let
        $$H_\key := \left\{\left(b, x_b, d, d\cdot \left(J(x_0)\oplus J(x_1)\right)\right) | b\in \{0,1\},(x_0,x_1) \in \cR_{\key}, d \in G_{\key,0,x_0} \cap G_{\key,1,x_1} \right\}  $$
        $$\overline{H_\key} := \left\{\left(b, x_b, d, c\right) | (b,x,d,c \oplus1) \in H_\key \right\}  $$
        For any QPT $\cA$ there is a negligible function $\mu$ s.t.
        $$\left|\underset{\key,\td_\key}{\Pr}\left[\cA(\key) \in H_\key\right]-\underset{\key,\td_\key}{\Pr}\left[\cA(\key) \in \overline{H_\key}\right] \right|\leq \mu(\secparam)$$
        
    \end{itemize}
\end{itemize}
\end{definition}

\paragraph{Instantiation.}  The work of~\cite{brakerski2018cryptographic} presented a construction of noisy trapdoor claw-free functions from learning with errors.

\subsection{Quantum Fully Homomorphic Encryption}
\label{ssec:prelims:qfhe}
Quantum Homomorphic Encryption schemes have the same syntax as traditional classical homomorphic encryption schemes, but are extended to support quantum operations and to allow plaintexts and ciphertexts to be quantum states. We take our definition directly from \cite{broadbent2015quantum}.

\begin{definition}
A quantum fully homomorphic encryption scheme is a tuple of QPT $\fhe=(\gen,\enc,\dec,\eval)$ satisfying
\begin{itemize}
    \item $\fhe.\gen(1^\secparam)$: outputs a a public and a secret key, $(\pk,\sk)$, as well as a quantum state $\rho_{evk})$, which can serve as an evaluation key.
    \item $\fhe.\enc(\pk,\cdot):L(\cM)\rightarrow L(\cC)$:  takes as input a qubit $\rho$ and outputs a ciphertext $\sigma$
    \item $\fhe.\dec(\sk,\cdot):L(\cC)\rightarrow L(\cM)$: takes a quantum ciphertext $\sigma$ in  correct, and outputs a qubit $\rho$ in the message space $L(\cM)$.
    \item $\fhe.\eval(\cE, \cdot ):L(\cR_{evk}\otimes \cC^{\otimes n})\rightarrow L(\cC^{\otimes m})$: takes as input a quantum circuit $\cE: L(\cC^{\otimes n}) \rightarrow L(\cC^{\otimes m})$, and a ciphertext in $L(\cC^{\otimes n})$ and outputs a ciphertext in $L(\cC^{\otimes m})$, possibly consuming the evaluation key $\rho_{evk}$ in the proccess.

\end{itemize}
\end{definition}
\noindent Semantic security and compactness are defined analogously to the classical setting, and we defer to~\cite{broadbent2015quantum} for a definition.
\noindent We require an $\fhe$ scheme to satisfy the following properties. 

\paragraph{(Perfect) Correctness of classical messages.} We require the following properties to hold: for every quantum circuit $\cE$ acting on $\ell$ qubits, message $x$, every $r_1,r_2 \in \{0,1\}^{\poly(\secparam)}$,
\begin{itemize}
    \item $\prob[x \leftarrow \fhe.\dec(\sk,\fhe.\enc(\pk,x)):(\pk,\sk) \leftarrow \fhe.\gen(1^{\secparam})]=1$
    \item $\prob[\fhe.\dec(\sk,\fhe.\eval(\pk,\cE,\ct))=\cE(x)] \geq 1-\negl(\secparam)$, for some negligible function $\negl$, where: (1) $(\pk,\sk) \leftarrow \fhe.\setup(1^{\secparam};r_1)$ and, (2) $\ct \leftarrow \fhe.\enc(\pk,x;r_2)$. The probability is defined over the randomness of the evaluation procedure.  
\end{itemize}

\paragraph{Instantiation.} The works of~\cite{mahadev2018classical,Bra18} give lattice-based candidates for quantum fully homomorphic encryption schemes; we currently do not know how to base this on learning with errors alone\footnote{Brakerski~\cite{Bra18} remarks that the security of their candidate can be based on a circular security assumption that is also used to argue the security of existing constructions of unbounded depth multi-key FHE~\cite{CM15,MW16,PS16,BP16}.}. There are two desirable propertiess required from the quantum FHE schemes and the works of~\cite{mahadev2018classical,Bra18} satisfy both of them. We formalize them in the lemma below. 

\begin{lemma}[\cite{mahadev2018classical,Bra18}]
There is a quantum fully homomorphic encryption scheme that satisfies: (1) perfect correctness of classical messages and, (2) ciphertexts of classical poly-sized messages have a poly-sized classical description. 
\end{lemma}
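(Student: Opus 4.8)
The plan is to verify that the lattice-based quantum FHE constructions of Mahadev~\cite{mahadev2018classical} and Brakerski~\cite{Bra18} already possess the two claimed properties, essentially by unwinding their structure rather than by any new argument; it suffices to exhibit one such scheme, and I will do so by recording the relevant structural facts about these constructions and then reading off the two properties. Recall that both schemes follow the ``quantum one-time pad over classical FHE'' paradigm: the secret key is (essentially) a classical LWE-based FHE secret key, and a ciphertext of a single-qubit state $\rho$ consists of the Pauli-encrypted qubit $X^{a}Z^{b}\,\rho\,Z^{b}X^{a}$ together with classical FHE encryptions of the pad bits $a,b$; Clifford gates are evaluated by homomorphically updating the pad bits inside the classical FHE, and the non-Clifford ($T$) gates are handled via the ``encrypted CNOT'' gadget, whose correctness is probabilistic but fails only with negligible probability. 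I would begin by stating these facts precisely, with pointers to the relevant parts of~\cite{mahadev2018classical,Bra18}.

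For property (2) --- ciphertexts of classical messages admitting a poly-sized classical description --- the key observation is that when the plaintext is a classical bit $x\in\{0,1\}$, the Pauli-encrypted qubit is $X^{a}Z^{b}\ket{x}\bra{x}Z^{b}X^{a}=\ket{x\oplus a}\bra{x\oplus a}$, since the $Z^{b}$ conjugation acts trivially on a standard-basis density operator; this is a standard-basis state, fully specified by the classical bit $x\oplus a$. Hence a fresh ciphertext of $x$ is the classical string $\bigl(x\oplus a,\ \fhe.\enc_{\mathrm{cl}}(\pk,a),\ \fhe.\enc_{\mathrm{cl}}(\pk,b)\bigr)$, where $\fhe.\enc_{\mathrm{cl}}$ is the underlying classical LWE-FHE, and classical LWE-FHE ciphertexts of $\poly(\secparam)$-sized messages plainly have $\poly(\secparam)$-sized classical descriptions. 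The argument extends bit-by-bit to poly-sized classical messages, and the quantum evaluation key plays no role here since the property concerns ciphertexts only.

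For property (1) --- (perfect) correctness of classical messages --- I would split into fresh and evaluated ciphertexts, matching the two bullets of the definition. Decryption of a fresh ciphertext of $x$ amounts to classically FHE-decrypting the two ciphertexts to recover $a,b$, applying $Z^{b}X^{a}$ to the encrypted qubit, and measuring, which returns $x$ exactly provided the classical FHE decrypts fresh ciphertexts with probability $1$; this is arranged by instantiating the classical LWE-FHE with a \emph{bounded} error distribution (e.g.\ a truncated discrete Gaussian) and a modulus large enough that worst-case fresh-ciphertext noise is below the decryption threshold, giving exact decryption of fresh ciphertexts for every public key. For an evaluated ciphertext $\ct\leftarrow\fhe.\eval(\pk,\cE,\ct_{0})$ with $\ct_{0}$ a fresh encryption of classical $x$, the scheme's own correctness guarantee already yields $\fhe.\dec(\sk,\ct)=\cE(x)$ except with negligible probability; the only error sources during homomorphic evaluation are the probabilistic encrypted-CNOT gadget and classical-FHE noise overflow, each occurring with negligible probability, which is exactly the second bullet.

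The main obstacle I anticipate is bookkeeping rather than depth: one must confirm that the parameter settings needed to make fresh classical-FHE ciphertexts decrypt with probability \emph{exactly} $1$ (bounded noise, large-enough modulus) are compatible with everything else the Mahadev/Brakerski constructions require --- the LWE security reduction and the noise budget consumed by the encrypted-CNOT operation --- and that nothing in the quantum layer (preparation of the Pauli-encrypted state, or the way pad keys are threaded through $\eval$) reintroduces a nonzero failure probability on fresh ciphertexts of classical messages. A secondary point is to check that Brakerski's variant~\cite{Bra18}, which differs from Mahadev's in how it realizes the encrypted-CNOT step, admits the same two properties; I expect it does, for the same structural reasons, but this should be verified against~\cite{Bra18} explicitly.
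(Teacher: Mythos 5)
The paper offers no proof of this lemma: it is imported directly from \cite{mahadev2018classical,Bra18} with only the surrounding remark that those constructions satisfy both properties, so there is no in-paper argument to compare yours against. Your sketch correctly reconstructs why the cited schemes qualify --- both are quantum one-time pads layered over a classical LWE-based FHE, so a ciphertext of a classical bit $x$ collapses to the standard-basis state $\ket{x\oplus a}\bra{x\oplus a}$ plus classical FHE encryptions of the pad bits (property (2)), fresh-ciphertext decryption reduces to classical FHE decryption, which is exact under a bounded noise distribution, and only the encrypted-CNOT gadget used in evaluation contributes a negligible failure probability (property (1)). The one substantive point you rightly flag, compatibility of the bounded-noise parameterization with the LWE security reduction and the noise budget of the encrypted-CNOT step, is checked in the cited works themselves, so your argument is sound as a verification of what the paper simply asserts by citation.
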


\subsection{Quantum-Secure Function Evaluation}
\label{ssec:prelims:sfe}
As a building block in our construction, we consider a secure function evaluation protocol~\cite{GHV10} for classical functionalities. A secure function evaluation protocol is a two message two party secure computation protocol; we designate the parties as sender and receiver (who receives the output of the protocol). Unlike prior works, we require the secure function evaluation protocol to be secure against polynomial time quantum adversaries.

\paragraph{Security.} We require malicious (indistinguishability) security against a quantum adversary $\receiver$ and semantic security against a quantum adversary $\sender$. We define both of them below. 
\par First, we define an indistinguishability security notion against malicious $\receiver$. To do that, we employ an extraction mechanism to extract $\receiver$'s input $\inpl{1}^*$. We then argue that $\receiver$ should not be able to distinguish whether $\sender$ uses $\inpl{2}^0$ or $\inpl{2}^1$ in the protocol as long as $\fc(\inpl{1}^*,\inpl{2}^0)=\fc(\inpl{1}^*,\inpl{2}^1)$. We don't place any requirements on the computational complexity of the extraction mechanism. 

\begin{definition}[Indistinguishability Security: Malicious Quantum $\receiver$]
\label{def:indmalp1}
	Consider a secure function evaluation protocol for a functionality $\fc$ between a sender $\sender$ and a receiver $\receiver$. We say that the secure evaluation protocol satisfies {\bf indistinguishability security against malicious $\receiver^*$} if for every adversarial QPT  $\receiver^*$, there is an extractor $\extractor$ (not necessarily efficient) such the following holds. Consider the following experiment: \\
	
\noindent \uline{$\expt(1^{\secparam},b)$}:
\begin{itemize}

\item $\receiver^*$ outputs the first message $\msg_1$.

\item Extractor $\extractor$ on input $\msg_1$ outputs $\inpl{1}^*$.  

\item Let $\inpl{2}^0,\inpl{2}^1$ be two inputs such that $\fc(\inpl{1}^*,\inpl{2}^0)=\fc(\inpl{1}^*,\inpl{2}^1)$. Party $\sender$ on input $\msg_1$ and $\inpl{2}^b$, outputs the second message $\msg_2$. 

\item $\receiver^*$ upon receiving the second message outputs a bit $\otput$. 
\item Output $\otput$.
	
\end{itemize}
We require that, 
$$\left| \prob[1 \la \expt(1^{\secparam},0)] - \prob[1 \la \expt(1^{\secparam},1)]\right| \leq \negl(\secparam),$$
for some negligible function $\negl$. 
\end{definition}

\noindent We now define semantic security against $\sender$. We insist that $\sender$ should not be able to distinguish which input $\sender$ used to compute its messages. Note that $\sender$ does not get to see the output recovered by the receiver. 

\begin{definition}
	[Semantic Security against Quantum $\sender^*$]
	\label{def:indmalp2}
	Consider a secure function evaluation protocol for a functionality $\fc$ between a sender $\sender$ and a receiver $\receiver$ where $\receiver$ gets the output. We say that the secure function evaluation protocol satisfies {\bf semantic security against $\sender^*$} if for every adversarial QPT $\sender^*$, the following holds: Consider two strings $\inpl{1}^0$ and $\inpl{2}^1$. Denote by $\distr_b$ the distribution of the first message (sent to $\sender^*$) generated using $\inpl{1}^b$ as $\receiver$'s input. The distributions $\distr_0$ and $\distr_1$ are computationally indistinguishable.
\end{definition}

\paragraph{Instantiation.} A secure function evaluation protocol can be built from garbled circuits and oblivious transfer that satisfies indistinguishability security against malicious receivers. Garbled circuits can be based on the hardness of learning with errors by suitably instantiating the symmetric encryption in the construction of Yao's garbled circuits~\cite{Yao86} with one based on the hardness of learning with errors~\cite{Reg05}. Oblivious transfer with indistinguishability security against malicious receivers based on learning with errors was presented in a recent work of Brakerski et al.~\cite{BD18}. Thus, we have the following lemma. 

\begin{lemma}[\cite{Yao86,Reg05,BD18}]
Assuming the quantum hardness of learning with errors, there exists a quantum-secure function evaluation protocol for polynomial time classical functionalities. 
\end{lemma}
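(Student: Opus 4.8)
The plan is to instantiate the standard two-message secure function evaluation protocol obtained from Yao's garbled circuits together with a two-message oblivious transfer, and to verify that both ingredients can be made secure against QPT adversaries under the quantum hardness of LWE. For the oblivious transfer I would use the two-message scheme $\otp=(\otp_1,\otp_2,\otp_3)$ of Brakerski et al.~\cite{BD18}: on choice bit $c$ the receiver produces a first message via $\otp_1$, on a pair $(m_0,m_1)$ the sender responds via $\otp_2$, and the receiver recovers $m_c$ via $\otp_3$. This scheme is \emph{statistically} sender-private --- there is an unbounded extractor that, given any (possibly malformed) first message, outputs an effective choice bit $c^*$ such that the sender's response statistically hides $m_{1-c^*}$ --- and it is computationally receiver-private under LWE; since receiver privacy reduces to LWE in a straight-line fashion, it holds against QPT senders assuming quantum LWE. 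For the garbling scheme $(\gcgen,\gceval,\simgcgen)$ I would take Yao's construction with the symmetric encryption instantiated from LWE~\cite{Reg05}, so that $\gcgen(1^\secparam,C,x)$ --- the garbled circuit along with the labels for input $x$ --- satisfies $\gcgen(1^\secparam,C,x)\approx_{Q}\simgcgen(1^\secparam,1^{|C|},C(x))$ against QPT distinguishers; the proof is the usual gate-by-gate hybrid, each step reducing to IND-CPA security of the LWE-based encryption, which is straight-line and hence quantum-sound.

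The protocol $\protsfe$ for a classical functionality $\fc$ is then: (i) the receiver $\receiver$ on input $\inpl{1}=\inpl{1}[1]\cdots\inpl{1}[n]$ sends $\msg_1=(\otp_1^j)_{j\in[n]}$ where $\otp_1^j$ encodes the choice bit $\inpl{1}[j]$; (ii) the sender $\sender$ on input $\inpl{2}$ garbles the circuit $C:=\fc(\cdot,\inpl{2})$ with $\inpl{2}$ hardwired, obtaining $(\widehat{C},\{\mathsf{lab}_{j,0},\mathsf{lab}_{j,1}\}_{j})$, and sends $\msg_2=(\widehat{C},(\otp_2^j)_j)$ where $\otp_2^j$ carries $(\mathsf{lab}_{j,0},\mathsf{lab}_{j,1})$; (iii) $\receiver$ recovers $\mathsf{lab}_{j,\inpl{1}[j]}$ for all $j$ and outputs $\gceval(\widehat{C},\{\mathsf{lab}_{j,\inpl{1}[j]}\}_j)=\fc(\inpl{1},\inpl{2})$. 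Correctness is immediate from correctness of $\otp$ and of the garbling scheme. For semantic security against a malicious QPT $\sender^*$ (Definition~\ref{def:indmalp2}) the sender's only incoming message is $\msg_1$, a tuple of OT first messages, so a hybrid over the $n$ coordinates --- each step invoking receiver privacy of $\otp$ against QPT senders --- yields $\distr_0\approx_{Q}\distr_1$ for $\distr_b$ the distribution of $\msg_1$ on receiver input $\inpl{1}^b$.

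For indistinguishability security against a malicious QPT $\receiver^*$ (Definition~\ref{def:indmalp1}) the extractor $\extractor$ runs the unbounded OT extractor on each $\otp_1^j$ to read off the effective choice bit, assembling $\inpl{1}^*$. The argument is a hybrid over $\expt(1^\secparam,b)$: first replace every $\otp_2^j$ by one that carries $(\mathsf{lab}_{j,\inpl{1}^*[j]},\mathsf{lab}_{j,\inpl{1}^*[j]})$ in both slots, a \emph{statistical} change by sender privacy of $\otp$; then replace $(\widehat{C},\{\mathsf{lab}_{j,\inpl{1}^*[j]}\}_j)$ by $\simgcgen(1^\secparam,1^{|C|},\fc(\inpl{1}^*,\inpl{2}^b))$, a $\approx_{Q}$ change by garbling simulation security (the reduction only hands $\receiver^*$ a classical string, so QPT soundness suffices). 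In the resulting experiment the view of $\receiver^*$ depends on $\inpl{2}^b$ only through $\fc(\inpl{1}^*,\inpl{2}^b)$, which is equal for $b=0,1$ by the hypothesis $\fc(\inpl{1}^*,\inpl{2}^0)=\fc(\inpl{1}^*,\inpl{2}^1)$, so the two experiments are identical; undoing the hybrids gives $\big|\prob[1\la\expt(1^\secparam,0)]-\prob[1\la\expt(1^\secparam,1)]\big|\le\negl(\secparam)$.

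The main obstacle I anticipate is the bookkeeping in this last hybrid: pinning down the effective input $\inpl{1}^*$ so that the statistical step --- overwriting the unselected OT slots --- is valid for an \emph{arbitrary} malicious first message, and checking that the garbling simulator is fed precisely the value $\fc(\inpl{1}^*,\inpl{2}^b)$ that the honestly generated labels would yield under $\gceval$. The quantum aspect itself is benign: every reduction (OT receiver privacy, garbling security) is straight-line and simulates the other party's classical messages without rewinding, so no quantum rewinding is needed and security transfers directly from the quantum hardness of LWE.
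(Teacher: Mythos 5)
Your proposal matches the paper's approach exactly: the paper proves this lemma by citation, instantiating the standard two-message SFE from Yao's garbled circuits (with the symmetric encryption replaced by an LWE-based one~\cite{Reg05}) combined with the two-message OT of Brakerski et al.~\cite{BD18}, with the inefficient extractor reading off the receiver's effective choice bits from the OT first message and the usual hybrid argument (overwrite unselected OT slots, then simulate the garbled circuit) establishing indistinguishability security against malicious quantum receivers. Your write-up is a correct and more detailed rendering of that same argument, including the observation that all reductions are straight-line and hence carry over to QPT adversaries.
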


\ignore{
\paragraph{Instantiation.} We can instantiate such a two message secure evaluation protocol using garbled circuits and $\ilength{1}$-$out$-$2\ilength{1}$ two message oblivious transfer protocol $\otp$ by Naor-Pinkas~\cite{NP99}. Recall that this protocol satisfies uniqueness of transcript property (Definition~\ref{def:uniqtrans}). We denote the garbling schemes by $\gc$.
\par We describe this protocol below. The input of $\prty{1}$ is $\inpl{1}$ and the input of $\prty{2}$ is $\inpl{2}$. Recall that $\prty{1}$ is designated to receive the output. 

\begin{itemize}

\item $\prty{1} \rightarrow \prty{2}$: $\prty{1}$ computes the first message of $\otp$ as a function of its input $\inpl{1}$ of input length $\ilength{1}$. Denote this message by $\otpmsg{1}$. It sends $\otpmsg{1}$ to $\prty{2}$. 

\item $\prty{2} \rightarrow \prty{1}$: $\prty{2}$ computes the following:
\begin{itemize}

\item It generates $\gcgen(1^{\secparam})$ to get $\gcsk$.

\item It then computes $\gckt(\gcsk,C)$ to obtain $\cenc{C}$.  $C$ is a circuit with $\inpl{2}$ hardwired in it; it takes as input $\inpl{1}$ and computes $\fc(\inpl{1},\inpl{2})$. 

\item It computes $\ginp(\gcsk)$ to obtain the wire keys $(\bfk_1,\ldots,\bfk_{\ilength{1}})$, where every $\bfk_i$ is composed of two keys $(k_i^0,k_i^1)$. 
\item It computes the second message of $\otp$, denoted by $\otpmsg{2}$, as a function of $(\bfk_1,\ldots,\bfk_{\ilength{1}})$.  
 
\end{itemize}
It sends $(\cenc{C},\otpmsg{2})$ to $\prty{1}$. 

\item $\prty{1}$: Upon receiving $(\cenc{C},\otpmsg{2})$, it recovers the wire keys $(k_1,\ldots,k_{\ilength{1}})$. It then executes $\gceval(\cenc{C},(k_1,\ldots,\allowbreak k_{\ilength{1}}))$ to obtain $\otput$. It outputs $\otput$.   
	
\end{itemize}
\noindent The correctness of the above protocol immediately follows from the correctness of garbling schemes and oblivious transfer protocol. We now focus on security. 

\begin{theorem}
Assuming the security of $\gc$ and $\otp$ and assuming that $\otp$ satisfies uniqueness of transcript property (Definition~\ref{def:uniqtrans}), the above protocol is secure against malicious $\prty{1}$ (Definition~\ref{def:indmalp1}).  
\end{theorem}
\begin{proof}
	We first describe the inefficient extractor $\extractor$ that extracts $\prty{1}$'s input from its first message. From the uniqueness of transcript property of $\otp$, it follows that given $\prty{1}$'s first message $\otpmsg{1}$, there exists a unique input $\inpl{1}^*$ and randomness $r$ that was used to compute the message of $\prty{1}$. Thus, $\extractor$ can find this input $\inpl{1}^*$ by performing a brute force search on all possible inputs and randomness. 
	\par We prove the theorem with respect to the extractor described above. In the first hybrid described below, challenge bit $b$ is used to determine which of the two inputs of $\prty{2}$ needs to be picked. In the final hybrid, $\prty{2}$ always picks the first of the two inputs.  \\
	
\noindent $\underline{\hybrid_{1.b}}$ for $b \xleftarrow{\$} \{0,1\}$: Let $\inpl{1}^*$ be the input extracted by the extractor. Let $\inpl{2}^0$ and $\inpl{2}^1$ be two inputs such that $\fc(\inpl{1}^*,\inpl{2}^0)=\fc(\inpl{1}^*,\inpl{2}^1)$. Party $\prty{2}$ uses $\inpl{2}^b$ to compute the second message. \\

\noindent $\underline{\hybrid_{2.b}}$ for $b\xleftarrow{\$} \{0,1\}$: Let $\inpl{1}^*$ be the input extracted by the extractor. We denote the $i^{th}$ bit of $\inpl{1}^*$ to be $\inpl{1,i}^*$. As part of the second message, the wire keys $(\bfk_1,\ldots,\bfk_{\ilength{1}})$, where every $\bfk_i$ is composed of two keys $(k_i^0,k_i^1)$. Instead of generating $\otpmsg{2}$ as a function of $(\bfk_1,\ldots,\bfk_{\ilength{1}})$, it generates $\otpmsg{2}$ as a function of $(\bfk'_1,\ldots,\bfk'_{\ilength{1}})$. $\bfk'_i$ contains $\left(0,k_i^{\inpl{1,i}^*}\right)$ if $\inpl{1,i}^*=1$, otherwise it contains $\left(k_i^{\inpl{1,i}^*},0\right)$.
\par Hybrids $\hybrid_{1.b}$ and $\hybrid_{2.b}$ are computationally distinguishable from the indistinguishability security against malicious receivers property of the oblivious transfer protocol.  \\

\noindent $\underline{\hybrid_{3.0}}$: Let $\inpl{1}^*$ be the input extracted by the extractor. Let $\inpl{2}^0$ and $\inpl{2}^1$ be two inputs such that $\fc(\inpl{1}^*,\inpl{2}^0)=\fc(\inpl{1}^*,\inpl{2}^1)$. $\prty{2}$ computes the second message as in the previous hybrid. Instead of using $\inpl{2}^b$ in the computation of the garbled circuit, it instead uses the input $\inpl{2}^0$. 
\par Hybrids $\hybrid_{2.b}$ and $\hybrid_{3.0}$ are computationally indistinguishable from the security of the garbling schemes\footnote{Formally this is argued by first simulating the garbled circuit and then switching the input.}. \\

\noindent The final hybrid does not contain any information about the challenge bit. This completes the proof. 
\end{proof}

\begin{theorem}
Assuming the security of $\otp$, the above protocol is secure against malicious $\prty{2}$ (Definition~\ref{def:indmalp2}).
\end{theorem}
\begin{proof}
	The proof of this theorem directly follows from the security against malicious senders property of the oblivious transfer protocol. 
\end{proof}
}

\subsection{Lockable Obfuscation}
\label{ssec:prelims:lobfs}
\noindent We first recall the definition of circuit obfuscation schemes~\cite{BGIRSVY01}. A circuit obfuscation scheme associated with the class of circuits $\cktclass$ consists of the classical PPT algorithms $(\lobf,\leval)$ defined below:  
\begin{itemize}
    \item {\bf Obfuscation,}  $\lobf(1^{\secparam},C)$: it takes as input the security parameter $\secparam$, circuit $C$ and produces an obfuscated circuit $\obfC$.  
    \item {\bf Evaluation,} $\leval(\obfC,x)$: it takes as input the obfuscated circuit $\obfC$, input $x$ and outputs $y$. 
    
\end{itemize}

\label{revisit perfect correctness.}
\paragraph{Perfect Correctness.} A program obfuscation scheme $(\lobf,\leval)$ is said to be correct if for every circuit $C\in \cktclass$ with $C:\{0,1\}^{\ell_{in}} \rightarrow \{0,1\}^{\ell_{out}}$, for every input $x \in \{0,1\}^{\ell_{in}}$, we have $\obfC(x)=C(x)$. \\

\noindent We are interested in program obfuscation schemes that are (i) defined for a special class of circuits called compute-and-compare circuits and, (ii) satisfy distributional virtual black box security notion~\cite{BGIRSVY01}. Such obfuscation schemes were first introduced by~\cite{WZ17,GKW17} and are called lockable obfuscation schemes. We recall their definition, adapted to quantum security, below. 

\begin{definition}[Quantum-Secure Lockable Obfuscation]
An obfuscation scheme $(\lobf,\leval)$ for a class of circuits $\cktclass$ is said to be a \textbf{quantum-secure lockable obfuscation scheme} if the following properties are satisfied: 
\begin{itemize}
    \item It satisfies the above mentioned correctness property. 
    \item {\bf Compute-and-compare circuits}: Each circuit $\lockC$ in $\cktclass$ is parameterized by strings $\alpha \in \{0,1\}^{\poly(\secparam)},\beta \in \{0,1\}^{\poly(\secparam)}$ and a poly-sized circuit $C$ such that on every input $x$, $\lockC(x)$ outputs $\beta$ if and only if $C(x)=\alpha$. 
    \item {\bf Security}: For every polynomial-sized circuit $C$, string $\beta \in \{0,1\}^{\poly(\secparam)}$,for every QPT adversary $\adversary$ there exists a QPT simulator $\simulator$ such that the following holds: sample $\alpha \xleftarrow{\$}  \{0,1\}^{\poly(\secparam)}$,
    $$\left\{ \lobf \left( 1^{\secparam},\lockC \right) \right\} \approx_{Q,\varepsilon} \left\{\simulator\left(1^{\secparam},1^{|C|} \right) \right\},$$
    where $\lockC$ is a circuit parameterized by $C,\alpha,\beta$ with $\varepsilon \leq \frac{1}{2^{|\alpha|}}$.  
\end{itemize}

\end{definition}

\paragraph{Instantiation.} The works of~\cite{WZ17,GKW17,GKVW19} construct a lockable obfuscation scheme based on polynomial-security of learning with errors (see Section~\ref{sec:prelims:lwe}). Since learning with errors is conjectured to be hard against QPT algorithms, the obfuscation schemes of~\cite{WZ17,GKW17,GKVW19} are also secure against QPT algorithms. Thus, we have the following theorem. 

\begin{theorem}[\cite{GKW17,WZ17,GKVW19}]
Assuming quantum hardness of learning with errors, there exists a quantum-secure lockable obfuscation scheme. 
\end{theorem}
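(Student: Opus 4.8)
The plan is to instantiate the lockable obfuscation scheme with the classical construction of~\cite{GKW17,WZ17}, together with the perfect-correctness variant of~\cite{GKVW19}, and then argue that the existing security analysis ``lifts'' to the quantum setting essentially without modification. The first observation is purely syntactic: the algorithms $(\lobf,\leval)$ of these works are classical PPT, the compute-and-compare format (output $\beta$ iff $C(x)=\alpha$) is exactly the class of circuits they support, and the simulator $\simulator$ exhibited in their proofs is also classical PPT (hence in particular QPT). So the only real content is to check that the indistinguishability guarantee continues to hold when the distinguisher is an arbitrary QPT algorithm with quantum auxiliary input, in the sense of $\approx_{Q,\varepsilon}$, and that the statistical parameter satisfies $\varepsilon \le 1/2^{|\alpha|}$.

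For this I would recall the structure of the classical security proof: it proceeds through a polynomial-length sequence of hybrids interpolating between the honest obfuscation $\lobf(1^{\secparam},\lockC)$ (for $\lockC$ parameterized by $C,\alpha,\beta$ with $\alpha$ uniform) and the simulator output $\simulator(1^{\secparam},1^{|C|})$. Each consecutive pair of hybrids is shown indistinguishable either (i) by a statistical argument --- crucially, the lock value $\alpha$ is information-theoretically hidden except with probability at most $2^{-|\alpha|}$, which is precisely the source of the bound $\varepsilon \le 1/2^{|\alpha|}$ in the definition --- or (ii) by a straight-line black-box reduction to decisional LWE (with the polynomial-modulus parameters $n,m,q,\chi$ dictated by those works, so that polynomial hardness of LWE suffices). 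The key point is that every such reduction invokes the adversary exactly once and in a straight-line fashion: it samples an instance from either the LWE distribution or the uniform distribution, hands it to the adversary together with whatever auxiliary state the adversary holds, and outputs the adversary's bit verbatim. No rewinding, no measurement of intermediate states, and no other classical-specific manipulation of the adversary is used. Consequently, if some QPT $\adversary$ with quantum auxiliary state $\nu$ separated $\lobf(1^{\secparam},\lockC)$ from $\simulator(1^{\secparam},1^{|C|})$ with non-negligible advantage, a standard hybrid argument isolates one consecutive step distinguished with non-negligible advantage, and the corresponding straight-line reduction --- itself a QPT algorithm once it is allowed to run the QPT $\adversary$ --- would break decisional LWE against quantum adversaries, contradicting the assumption.

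It then remains only to assemble the pieces: invoke~\cite{GKVW19} for the perfect-correctness property $\obfC(x)=C(x)$ for all inputs $x$, and combine it with the LWE-based security argument above. I expect the main obstacle to be the verification in step (ii): confirming that the classical reductions of~\cite{GKW17,WZ17,GKVW19} are genuinely straight-line and black-box in the adversary, so that they compose with a quantum adversary carrying quantum auxiliary input. This is a matter of inspecting those proofs rather than proving anything new --- in all three works the reductions have this standard ``LWE-to-indistinguishability'' form --- but since the informal slogan ``classical LWE-based constructions are automatically post-quantum'' genuinely relies on this structural caveat, it is worth stating explicitly that we invoke it here.
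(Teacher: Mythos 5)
Your proposal is correct and matches the paper's approach: the paper justifies this theorem with exactly the observation that the constructions of~\cite{GKW17,WZ17,GKVW19} are classical LWE-based schemes whose security therefore carries over once LWE is assumed quantum-hard. You simply spell out the structural reason (classical PPT simulator, statistical hiding of the lock giving $\varepsilon \le 2^{-|\alpha|}$, and straight-line black-box reductions to decisional LWE that compose with a QPT adversary holding quantum auxiliary input) that the paper leaves implicit.
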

}


\newcommand{\protextcl}{\mathsf{c}\mathsf{QEXT}}
\newcommand{\protextq}{\mathsf{qQEXT}}

\section{Secure Quantum Extraction Protocols}
\label{sec:qext}
\noindent We define the notion of quantum extraction protocols below. An extraction protocol, associated with an NP relation, is a {\em classical} interactive protocol between a sender and a receiver.The sender has an NP instance and a witness; the receiver only has the NP instance. 
\par In terms of properties,  we require the property that there is a QPT extractor that can extract the witness from a semi-malicious sender (i.e., follows the protocol but is allowed to choose its own randomness) even if the sender is a QPT algorithm. Moreover, the semi-malicious sender should not be able to distinguish whether it's interacting with the extractor or the honest receiver. 
\par In addition, we require the following property (zero-knowledge): the interaction of any malicious receiver with the sender should be simulatable without the knowledge of the witness. The malicious receiver can either be classical or quantum and thus, we have two notions of quantum extraction protocols corresponding to both of these cases. 
\par In terms of properties required, this notion closely resembles the concept of zero-knowledge argument of knowledge (ZKAoK) systems. There are two important differences: 
\begin{itemize}
    \item Firstly, we do not impose any completeness requirement on our extraction protocol. 
    \item In ZKAoK systems, the prover can behave maliciously (i.e., deviates from the protocol) and the argument of knowledge property states that the probability with which the extractor can extract is negligibly close to the probability with which the prover can convince the verifier. In our definition, there is no guarantee of extraction if the sender behaves maliciously. 
\end{itemize}


\begin{definition}[Quantum extraction protocols secure against quantum adversaries]
\label{def:qqext}
A \textbf{quantum extraction protocol secure against quantum adversaries}, denoted by $\protextq$ is a classical protocol between two classical PPT algorithms, sender $\sender$ and a receiver $\receiver$ and is associated with an NP relation $\rel$. The input to both the parties is an instance $\inst \in \lang(\rel)$. In addition, the sender also gets as input the witness $\witness$ such that $(\inst,\witness) \in \rel$. At the end of the protocol, the receiver gets the output $\witness'$. The following properties are satisfied by $\protextq$: 
\begin{itemize}
   \item {\bf Quantum Zero-Knowledge}: Let $p:\mathbb{N}\rightarrow\mathbb{N}$ be any polynomially bounded function. For every $(\inst,\witness) \in \rel$, for any QPT algorithm $\receiver^*$ with private quantum register of size $|\mathrm{R}_{\receiver^*}|=p(\secparam)$, for any large enough security parameter $\secparam \in \mathbb{N}$, there exists a QPT simulator $\Simu$ such that, 
    $$\view_{\receiver^*}\left( \langle \sender(1^\secparam,\inst,\witness),\receiver^*(1^\secparam,\inst, \cdot)\rangle\right) \approx_{Q} \Simu(1^\secparam,\receiver^*,\inst,\cdot).$$

    \item {\bf Semi-Malicious Extractability}: Let $p:\mathbb{N}\rightarrow \mathbb{N}$ be any polynomially bounded function. For any large enough security parameter $\secparam \in \mathbb{N}$, for every $(\inst,\witness)\in \lang(\rel)$, for every semi-malicious\footnote{A QPT algorithm is said to be semi-malicious in the quantum extraction protocol if it follows the protocol but is allowed to choose the randomness for the protocol.} QPT $\sender^*$ with private quantum register of size $|\mathrm{R}_{\sender^*}|=p(\secparam)$, there exists a QPT extractor $\ext=(\ext_1,\ext_2)$ (possibly using the code of $\sender^*$ in a non-black box manner), the following holds: 
    \begin{itemize}
        \item {\bf Indistinguishability of Extraction}: $\view_{\sender^*}\left( \langle \sender^*(1^\secparam,\inst,\witness, \cdot), \receiver(1^\secparam,\inst) \rangle\right) \approx_{Q} \ext_1 \left(1^\secparam,\sender^*,\inst, \cdot \right)$

        \item The probability that $\ext_2$ outputs $\witness'$ such that $(\inst,\witness') \in \rel$ is negligibly close to 1.
      
    \end{itemize}
\end{itemize}
\end{definition}

\begin{definition}[Quantum extraction protocols secure against classical adversaries]
\label{def:qext:cl}
A \textbf{quantum extraction protocol secure against classical adversaries} $\protextcl$ is defined the same way as in Definition~\ref{def:qqext} except that instead of quantum zero-knowledge, $\protextcl$ satisfies classical zero-knowledge property defined below:
\begin{itemize}
   \item {\bf Classical Zero-Knowledge}: Let $p:\mathbb{N}\rightarrow\mathbb{N}$ be any polynomially bounded function. For any large enough security parameter $\secparam \in \mathbb{N}$, for every $(\inst,\witness) \in \rel$, for any classical PPT algorithm $\receiver^*$ with auxiliary information $\aux \in \{0,1\}^{\poly(\secparam)}$, there exists a classical PPT simulator $\Simu$  such that 
    $$\view_{\receiver^*}\left( \langle \sender(1^\secparam,\inst,\witness),\receiver^*(1^\secparam,\inst, \aux)\rangle\right) \approx_{c} \Simu(1^\secparam,\receiver^*,\inst,\aux).$$


        
      
\end{itemize}
\end{definition}

\paragraph{Quantum-Lasting Security.} A desirable property of cQEXT protocols is that a classical malicious receiver, long after the protocol has been executed cannot use a quantum computer to learn the witness of the sender from the transcript of the protocol along with its own private state. We call this property  {\em quantum-lasting security}; first introduced by Unruh~\cite{Unruh13}. We formally define quantum-lasting security below. 

\begin{definition}[Quantum-Lasting Security]
A cQEXT protocol is said to be {\bf quantum-lasting secure} if the following holds: for any large enough security parameter $\secparam \in \mathbb{N}$, for any classical PPT $\receiver^*$, for any QPT adversary $\adversary^*$, for any auxiliary information $\aux \in \{0,1\}^{\poly(\secparam)}$, for any auxiliary state of polynomially many qubits,  $\rho$, there exist a QPT simulator $\Simu^*$ such that:  
$$\adversary^*\left(\view_{\receiver^*} \left\langle \sender(1^\secparam,\inst,\witness),\receiver^*(1^\secparam,\inst, \aux) \right\rangle, \rho \right) \approx_{Q} \Simu^*(1^\secparam, \inst, \aux, \rho)$$
 \end{definition}

\section{QEXT Secure Against Classical Receivers}
In this section, we show how to construct quantum extraction protocols secure against classical adversaries based solely on the quantum hardness of learning with errors. 

\paragraph{Tools.}

\fullversion{
\begin{itemize}
    \item   Quantum-secure computationally-hiding and perfectly-binding non-interactive commitments, $\comm$ (see Section~\ref{sec:prelims:commit}). 
    \par We instantiate the underlying commitment scheme in~\cite{PW09} using $\comm$ to obtain a quantum-secure extractable commitment scheme. Instead of presenting a definition of quantum-secure extractable commitment scheme and then instantiating it, we directly incorporate the construction of~\cite{PW09} in the construction of the extraction protocol. 
    
  \item Noisy trapdoor claw-free functions $\{f_{\key,b}:\cX \rightarrow D_{\cY} \}_{\key \in \cK, b \in \{0,1\}}$ (see Section~\ref{sec:ntcfs}). 
  
    \item Quantum-secure secure function evaluation protocol  $\protpc=(\protpc.\sender,\protpc.\receiver)$ (see Section~\ref{ssec:prelims:sfe}). 
\end{itemize}
}

\submversion{
\begin{itemize}
    \item   Quantum-secure computationally-hiding and perfectly-binding non-interactive commitments, $\comm$. 
    \par We instantiate the underlying commitment scheme in~\cite{PW09} using $\comm$ to obtain a quantum-secure extractable commitment scheme. Instead of presenting a definition of quantum-secure extractable commitment scheme and then instantiating it, we directly incorporate the construction of~\cite{PW09} in the construction of the extraction protocol. 
    
  \item Noisy trapdoor claw-free functions $\{f_{\key,b}:\cX \rightarrow D_{\cY} \}_{\key \in \cK, b \in \{0,1\}}$. 
  
    \item Quantum-secure secure function evaluation protocol  $\protpc=(\protpc.\sender,\protpc.\receiver)$. 
\end{itemize}
}

\paragraph{Construction.} We present the construction of the quantum extraction protocol $(\sender,\receiver)$ in Figure~\ref{fig:QEXTcons:classical} for an NP language $\lang$. 

\renewcommand{\bit}{\mathbf{b}}

\begin{figure}[!htb]
\begin{center}
\begin{tabular}{|p{12.8cm}|}
\hline 
\begin{center}
    $\sfefunc$
\end{center}
Input of sender: $\left(  \left\{\bfc_{i,0}^{(j)},\bfc_{i,1}^{(j)},(sh_{i,w_i}^{(j)})',(\bfd_{i,w_i}^{(j)})', \td_i, \key_i,y_i, v_i, w_i^{(j)} \right\}_{i,j \in [k]},\witness \right)$\\
Input of receiver: $\left(  \left\{sh_{i,\overline{w_i}}^{(j)},\bfd_{i,\overline{w_i}}^{(j)} \right\}_{i,j \in [k]} \right)$
\ \\
\begin{itemize}
    \setlength\itemsep{1em}
    \item If for any $i,j \in [k]$, $\bfc_{i,w_i}^{(j)} \neq \comm\left( 1^{\secparam},(sh_{i,w_i}^{(j)})';(\bfd_{i,w_i}^{(j)})' \right)$ or $\bfc_{i,\overline{w_i}}^{(j)} \neq \comm \left(1^{\secparam},sh_{i,\overline{w_i}}^{(j)};\bfd_{i,\overline{w_i}}^{(j)} \right)$, output $\bot$.  
    
    \item For every $i \in [k]$, let $(x_{i,0},x_{i,1}) \leftarrow \invert(\key_i,\td_i,y_i)$. 
    \begin{itemize}
        \setlength\itemsep{1em}
        \item {\em Check if the commitments commit to the same message}:  Output $\bot$ if the following does not hold: for every $j,j' \in [k]$,
        we have $\left(sh_{i,w_i}^{(j)} \right)'\oplus sh_{i,w_i}^{(j)} = \left(sh_{i,w_i}^{(j')} \right)'\oplus sh_{i,w_i}^{(j')}$. 
        
        \item If $v_i=0$: let $(b_i,J(x'_{i,b_i}))=(sh_{i,w_i}^{(j)})' \oplus sh_{i,\overline{w_i}}^{(j)}$, where $J(\cdot)$ is the injection in the definition of NTCF. Since $J(\cdot)$ can be efficiently inverted, recover $x'_{i,b_i}$. If $x'_{i,b_i} \neq x_{i,b_i}$, output $\bot$.   
        \item If $v_i=1$: let $(u_i,d_i)=\left(sh_{i,w_i}^{(j)} \right)' \oplus sh_{i,\overline{w_i}}^{(j)}$. If $\langle d_i, J(x_{i,0}) \oplus J(x_{i,1}) \rangle \neq u_i$, or if $d_i \notin G_{\key_i,0,x_{i,0}}\cap G_{\key_i,1,x_{i,1}}$ output $\bot$.
    \end{itemize}
    
    \item Otherwise, output $\witness$. 
\end{itemize}
\ \\
\hline
\end{tabular}
\end{center}
\caption{Description of the function $\sfefunc$ associated with the $\protpc$.}
\label{fig:sfefunc:classical}
\end{figure}

\begin{figure}[!htb]
\begin{center}
\begin{tabular}{|p{12.8cm}|}
\hline 

Input of sender: $(\inst,\witness)$. \\
Input of receiver: $\inst$ \\

\begin{itemize}
\setlength\itemsep{1em}

    \item  $\sender$: Compute $\forall i\in[k],(\key_i,\td_i) \leftarrow \gen(1^{\secparam};r_i)$, where $k=\secparam$. Send $\left( \{\key_i\}_{i \in [k]} \right)$. 
    
    \item $\receiver$: For every $i \in [k]$, choose a random bit $b_i \in \{0,1\}$ and sample a random $y_i \leftarrow f'_{\key_i,b_i}(x_{i,b_i})$, where $x_{i,b_i} \xleftarrow{\$} \cX$. Send $\{y_i\}_{i \in [k]}$.
    {(\small Recall that $f'_{\key,b}(x)$ is a distribution over $\cY$.)}
    
    \item $\sender$: Send bits $(v_1,\ldots,v_{k})$, where $v_i \xleftarrow{\$} \{0,1\}$ for $i \in [k]$. 
    
    \item $\receiver$: For every $i,j \in [k]$, compute the commitments $\bfc_{i,0}^{(j)} \leftarrow \comm(1^{\secparam},sh_{i,0}^{(j)};\bfd_{i,0}^{(j)})$ and $\bfc_{i,1}^{(j)} \leftarrow \comm(1^{\secparam},sh_{i,1}^{(j)};\bfd_{i,1}^{(j)})$, where $sh_{i,0}^{(j)},sh_{i,1}^{(j)} \xleftarrow{\$} \{0,1\}^{\poly(\secparam)}$ for $i,j \in [k]$. Send $\left(\left\{ \bfc_{i,0}^{(j)},\bfc_{i,1}^{(j)} \right\}_{i,j \in [k]} \right)$.
    
    {\em Note: The reason why we have $k^2$ commitments above is because we repeat (in parallel) the test of quantumness protocol $k$ times and for each repetition, the response of the receiver is committed using $k$ commitments; the latter is due to~\cite{PW09}.}
    

    
      
    
    
     \item $\sender$: For every $i,j \in [k]$, send random bits $w_i^{(j)} \in \{0,1\}$.
    
    \item $\receiver$: Send $\left(\left\{ (sh_{i,w_i}^{(j)})',(\bfd_{i,w_i}^{(j)})' \right\}_{i,j \in [k]} \right)$.  
    
    
    \item $\sender$ and $\receiver$ run $\protpc$, associated with the two-party functionality $\sfefunc$ defined in Figure~\ref{fig:sfefunc:classical}; $\sender$ takes the role of $\protpc.\sender$ and $\receiver$ takes the role of $\protpc.\receiver$.  The input to $\protpc.\sender$ is $\left(  \left\{\bfc_{i,0}^{(j)},\bfc_{i,1}^{(j)},(sh_{i,w_i}^{(j)})',(\bfd_{i,w_i}^{(j)})', \td_i, \key_i,y_i, v_i, w_i^{(j)} \right\}_{i,j \in [k]},\witness \right)$ and the input to $\protpc.\receiver$ is $\left(  \left\{sh_{i,\overline{w_i}}^{(j)},\bfd_{i,\overline{w_i}}^{(j)} \right\}_{i,j \in [k]} \right)$.
    
    
 
\end{itemize}
\ \\
\hline
\end{tabular}
\end{center}
\caption{Quantum Extraction Protocol $(\sender,\receiver)$ secure against classical receivers.}
\label{fig:QEXTcons:classical}
\end{figure}


\submversion{

\noindent We prove the following lemma in the full version. 

\begin{lemma}
\label{lem:proof:clqext}
Assuming the quantum security of $\comm,\protpc$ and NTCFs, the protocol $(\sender,\receiver)$ is a quantum extraction protocol secure against classical adversaries for NP. Moreover, $(\sender,\receiver)$ satisfies quantum-lasting security. 
\end{lemma}

}


\fullversion{
\begin{lemma}
Assuming the quantum security of $\comm,\protpc$ and NTCFs, the protocol $(\sender,\receiver)$ is a quantum extraction protocol secure against classical adversaries for NP, and it is also quantum-lasting secure.
\end{lemma}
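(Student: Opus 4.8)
The plan is to verify the three required properties in turn: classical zero-knowledge, semi-malicious extractability (both indistinguishability of extraction and correctness of extraction), and quantum-lasting security. I would first set up notation for the protocol as a sequence of rounds, emphasizing the three nested sub-protocols: the $k$-fold parallel repetition of the NTCF-based test of quantumness of~\cite{brakerski2018cryptographic}, the~\cite{PW09} extractable commitment layer (the $k$ inner commitments per response, with the $w_i^{(j)}$ cut-and-choose openings), and the outer $\protpc$ instance computing $\sfefunc$.

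\medskip

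\textbf{Extractability.} For indistinguishability of extraction I would exhibit $\ext_1$ which runs $\receiver$'s honest code in the first $(k-1)$ rounds of the protocol, then acts as the quantum server in the test of quantumness: using the key $\key_i$ it prepares the claw superposition $\frac{1}{\sqrt{|\cX|}}\sum_{x,y}\sqrt{f'_{\key_i,b}(x)(y)}\ket{x}\ket{y}$, measures the $\cY$-register to get $y_i$, and then upon receiving the challenge $v_i$ either measures a preimage ($v_i=0$) or performs the Hadamard-basis measurement to obtain $(d_i,u_i)$ ($v_i=1$). It commits to the honestly-distributed responses via $\comm$ and $\protpc.\receiver$, exactly as the honest receiver would; since the sender $\sender^*$ only sees the transcript on the sender side and receives no $\protpc$ output, the distribution it sees is statistically (in fact here perfectly, up to the $\mu(\secparam)$ NTCF approximation error) identical to the honest interaction — no computational assumption is even needed for this part except the correctness of the underlying primitives. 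Then $\ext_2$: because the quantum server answers every challenge correctly, $\sfefunc$ outputs $\witness$ to $\protpc.\receiver$, so $\ext$ simply reads off $\witness'=\witness$; correctness follows from perfect/statistical correctness of $\protpc$, $\comm$, and the range-superposition and adaptive-hardcore-bit structure of NTCFs, giving success probability negligibly close to $1$.

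\medskip

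\textbf{Classical zero-knowledge.} This is the crux and where I expect the main difficulty. I would build $\Simu$ that runs the malicious classical $\receiver^*$ honestly as the sender up to the point where the test challenges $v_i$ are about to be sent, then performs \emph{double rewinding}: (i) use the~\cite{PW09} extractor on the inner commitment layer to extract $\receiver^*$'s committed $k$-th-round responses to challenge vector $(v_1,\ldots,v_k)$; (ii) rewind $\receiver^*$ to just before the $v_i$ were sent, resample a fresh, flipped challenge $v_i'$, and run the~\cite{PW09} extractor again to extract the responses to $(v_1',\ldots,v_k')$. If $\receiver^*$ ever produced, for some coordinate $i$, both a valid preimage $(b_i,x_{i,b_i})$ (from the $v_i=0$ branch) and a valid $(d_i,u_i)$ with $d_i\in G_{\key_i,0,x_{i,0}}\cap G_{\key_i,1,x_{i,1}}$ and $\langle d_i, J(x_{i,0})\oplus J(x_{i,1})\rangle = u_i$ (from the $v_i=1$ branch), this directly contradicts the adaptive hardcore bit property of the NTCF — a classical PPT adversary cannot do this. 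Hence with overwhelming probability in \emph{some} coordinate the receiver fails, so $\sfefunc$ would output $\bot$, and the simulator can run the rest of the protocol feeding $\bot$-equivalent behavior (i.e., it learns that the receiver does not deserve the witness) and produce a view indistinguishable from the real one. The subtle points I would need to handle carefully: (a) the standard~\cite{PRS02,PW09} rewinding-estimation argument (running the extractor so that its success probability is calibrated to the receiver's commit-phase success probability) so that the simulator runs in expected polynomial time; (b) the nested rewinding must not blow up — I would argue the~\cite{PW09} extractor is invoked a constant number of times (twice) and each invocation is itself expected-poly-time; (c) arguing that conditioned on extraction succeeding, the extracted openings are consistent with what $\sfefunc$ would compute, which relies on the perfect binding of $\comm$ and the cut-and-choose check on the $w_i^{(j)}$ openings forcing the $k$ inner commitments to be (mostly) consistent; and (d) composing with the computational hiding of $\comm$ and the semantic/malicious-indistinguishability security of $\protpc$ against classical adversaries to move from "simulator knows receiver fails" to "simulator's transcript $\approx_c$ real transcript."

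\medskip

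\textbf{Quantum-lasting security.} Here I would observe that the whole transcript, together with $\receiver^*$'s classical private state, reveals no more than: the NTCF keys $\key_i$, the $y_i$, the challenges $v_i$ and $w_i^{(j)}$, the commitments $\comm(\cdot)$, the partial openings, and the $\protpc$ transcript. A later QPT adversary $\adversary^*$ gets no interaction. I would argue $\Simu^*$ can simulate this distribution without $\witness$: the $\protpc$ transcript hides $\sender$'s input $\witness$ by semantic security of $\protpc$ against a (now quantum) distinguisher — and crucially $\protpc$ is required to be \emph{quantum}-secure, so this holds even against $\adversary^*$; the commitments are quantum-computationally hiding; and nothing else in the transcript depends on $\witness$. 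So $\Simu^*$ runs the protocol with a dummy witness $\witness \mapsto 0$, outputs that transcript together with $\rho$, and invokes quantum security of $\comm$ and $\protpc$ to conclude the two outputs are $\approx_Q$-indistinguishable. The only place to be careful is that $\receiver^*$ in this experiment is the adversarial (classical) receiver, so the transcript is for a malicious receiver — but malicious-receiver indistinguishability of $\protpc$ is exactly quantum-secure by hypothesis, so this goes through.

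\medskip

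The main obstacle, as flagged, is the zero-knowledge proof: getting the expected-polynomial-time nested rewinding (PW09-extractor inside a test-of-quantumness rewind) to compose cleanly, and carefully defining the hybrid sequence that interpolates from the real interaction to the simulated one while invoking, in the right order, perfect binding, the adaptive hardcore bit, computational hiding, and $\protpc$ security.
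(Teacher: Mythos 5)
Your zero-knowledge argument is essentially the paper's: a straight-line honest emulation up to the SFE, with \emph{double rewinding} (the \cite{PW09} commitment extractor nested inside a rewind of the NTCF challenge vector) used to show that a classical receiver producing valid answers to two challenge vectors differing in $\omega(\log k)$ coordinates would violate the adaptive hardcore bit; the paper then switches the SFE input to $\bot$ and, in the last hybrid, drops the rewinding entirely so that the final simulator is straight-line. Your remaining two parts, however, each have a genuine gap.

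First, for indistinguishability of extraction you claim the extractor's view is ``statistically (in fact perfectly) identical'' to the honest interaction and that ``no computational assumption is even needed.'' This is false. The honest receiver commits to $k^2$ pairs of \emph{independent uniformly random} shares $sh_{i,0}^{(j)},sh_{i,1}^{(j)}$ and feeds the unopened shares into $\protpc$; the extractor must instead commit to \emph{correlated} shares whose XOR equals a valid NTCF answer $(b_i,J(x_{i,b_i}))$ or $(u_i,d_i)$, and feed those into $\protpc$. Only one commitment of each pair is ever opened, so the two distributions over (commitments, SFE transcript) differ and are only \emph{computationally} indistinguishable; the paper's proof of this claim goes through hybrids invoking the quantum-computational hiding of $\comm$ and the semantic security of $\protpc$ against quantum senders (it is only the marginal distribution of the $y_i$'s that is argued to be identical). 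Second, your quantum-lasting argument of ``run the protocol with a dummy witness and invoke SFE security'' conflates the two security notions of $\protpc$: hiding the sender's input from the receiver is the \emph{indistinguishability security against malicious receivers}, which applies only when $\sfefunc$ gives the same output on the receiver's (extracted) input for both sender inputs --- i.e., only once you have already established that the classical receiver fails the NTCF check, which is exactly the rewinding/adaptive-hardcore-bit argument. The paper sidesteps this circularity by proving the classical-ZK hybrids against \emph{quantum} distinguishers, so that $\view_{\receiver^*} \approx_{Q} \Simu$, and then defining $\Simu^*(\cdot,\rho) = \adversary^*(\Simu(\cdot),\rho)$; quantum-lasting security is then immediate. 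You should either adopt that route or explicitly re-run the NTCF argument inside the quantum-lasting proof.
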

\begin{proof}
\ 
\paragraph{Classical Zero-Knowledge.} Let $\receiver^*$ be a classical PPT algorithm. We first describe a classical simulator $\Simu$ such that $\receiver^*$ cannot distinguish whether it's interacting with $\sender$ or with $\Simu$. 

\paragraph{Description of $\Simu$.}
\begin{itemize}
 \item Until the $\protpc$ protocol is executed, it behaves as the honest sender would. That is, 
 \begin{itemize}
\item  For every $ i\in[k]$, it computes $(\key_i,\td_i) \leftarrow \gen(1^{\secparam};r_i)$. Send $\left( \{\key_i\}_{i \in [k]} \right)$.  
    
    \item It receives $\{y_i\}_{i \in [k]}$ from $\receiver^*$.
    
    \item It sends bits $(v_1,\ldots,v_{k})$, where $v_i \xleftarrow{\$} \{0,1\}$ for $i \in [k]$. 
    
    \item It receives $\left(\left\{ \bfc_{i,0}^{(j)},\bfc_{i,1}^{(j)} \right\}_{i,j \in [k]} \right)$ from $\receiver^*$.
    

    
      
    
    
     \item For every $i,j \in [k]$, it sends random bits $w_i^{(j)}\in \{0,1\}$.
    
    \item It receives $\left(\left\{ (sh_{i,w_i}^{(j)})',(\bfd_{i,w_i}^{(j)})' \right\}_{i,j \in [k]} \right)$ from $\receiver^*$. 
    
    \end{itemize}
    \item It then executes $\protpc$ with $\receiver^*$, associated with the two-party functionality $\sfefunc$ defined in Figure~\ref{fig:sfefunc:classical}; the input of $\Simu$ in $\protpc$ is $\bot$. 
    
\end{itemize}

\noindent We prove the following by a sequence of hybrids. For some arbitrary auxiliary information $\aux \in \{0,1\}^{\poly(\secparam)}$,
$$\view_{\receiver^*}\left( \langle \sender(1^\secparam,\inst,\witness),\receiver^*(1^\secparam,\inst, \aux)\rangle\right) \approx_{Q} \Simu(1^\secparam,\receiver^*,\inst,\aux),$$
In other words, that no QPT distinguisher can distinguish between the view of $\receiver^*$ when interacting with $\sender$ from the output of $\Simu$.  This is stronger than what we need to argue classical ZK, as it would be enough to show that $\receiver^*$, a PPT machine (not QPT), cannot distinguish.  However, the stronger indistinguishability result makes it easier to show that the scheme is quantum-lasting secure. \\

\noindent $\underline{\hybrid_1}$: The output of this hybrid is $\view_{\receiver^*}\left( \langle \sender(1^\secparam,\inst,\witness),\receiver^*(1^\secparam,\inst, \aux)\rangle\right)$. \\

\noindent $\underline{\hybrid_2}$: Consider the following sender, $\hybrid_2.\sender$, that behaves as follows:
\begin{enumerate}
    \item $\receiver^*$: Sends $\{y_i\}_{i \in [k]}$.
    \item $\hybrid_2.\sender$: Sends $(v_1,\ldots,v_k)$ uniformly at random. If $\receiver^*$ aborts in this step, $\hybrid_2.\sender$ aborts.
    \item $\receiver^*$: Sends $\left\{ \left(\bfc_{i,0}^{(j)}, \bfc_{i,1}^{(j)}\right)\right\}_{i,j \in [k]}$. If $\receiver^*$ aborts in this step, $\hybrid_2.\sender$ aborts.
    \item $\hybrid_2.\sender$: Sends $w_{i}^{(j)} \in \{0,1\}$ uniformly at random for all $i,j \in [k]$.
    \item $\receiver^*$: Opens up the commitments queried, $\left\{\left(sh_{i,w_i}^{(j)}, \bfd_{i,w_i}^{(j)} \right)\right\}_{i,j \in [k]}$. If $\receiver^*$ aborts in this step, $\hybrid_2.\sender$ aborts. If $\bfc_{i,w_i}^{(j)} \neq \comm(1^{\secparam},sh_{i,w_i}^{(j)};\bfd_{i,w_i}^{(j)})$ for any $i,j \in [k]$, continue the execution of the protocol as in Step 11.   
    \item $\hybrid_2.\sender$: Keep rewinding ($\poly(k)$ times) to Step 4, until it is able to recover another commitment accepting transcript. A commitment accepting transcript is one for which all the commitments opened in Step 5 are valid, i.e. that $\bfc_{i,w_i}^{(j)}= \comm(1^\secparam, sh_{i,w_i}^{(j)}; \bfd_{i,w_i}^{(j)})$. Let $\{(w_{i}^{(j)})'\}$ be the queries sent in the second recovered commitment accepting transcript. If for any $i \in [k]$, it is the case that for every $j \in [k]$, it holds that $(w_i^{(j)})'=w_{i}^{(j)}$, then abort.
    \item If $\hybrid_2.\sender$ did not abort in the previous step, then for every $i \in [k]$, there is $j_i \in [k]$, s.t. $(w_{i}^{(j_i)})' \neq w_i^{(j_i)}$. From these two transcripts, it extracts the committed value.
    \item $\hybrid_2.\sender$: (We call this step the NTCF condition check). From the commited values recovered, check if they satisfy the desired NTCF conditions. I.e. for every $i \in [k]$, if $v_i=0$, check if the decommited value if a valid preimage $(b_i,J(x_{i,b_i}))$, and if $v_i=1$ check if the decommited value is a valid correlation $(u_i, d_i)$. If the check do not pass, continue as before. If the check pass,
    \begin{itemize}
        \item Keep rewinding ($\poly(k)$ times) until Step 2, repeating the proccess above, including the rewinding phase for the commitment challenges. The rewinding continues until we get another transcript, for which the NTCF check passes. Let $(v_1',\ldots,v_k')$ be the messages sent at Step 2 in the new transcript. 
    \end{itemize}
    \item $\hybrid_2.\sender$: If $(v_1,\ldots,v_{k})$ and $(v_1',\ldots,v_{k}')$ are different in less than $\omega(\log(k))$ coordinates, then abort. 
    \item If $\hybrid_2.\sender$ has not aborted so far, let $S$ be the set of indices at which both $(v_1,\ldots,v_{k})$ and $(v_1',\ldots,v_{k}')$ differ. For $i \in S$, let $(b_i,x_i)$ and $(d_i,u_i)$ be the values recovered from the commitment accepting transcripts associated with bits $v_i$ and $v'_i$. Denote  $T=\{(b_i,x_i,d_i,u_i):i \in S\}$. Moreover, $|T|=\omega(\log(k))$
    \item Now, continue the execution of the protocol on the original thread; i.e., when the $\hybrid_2.\sender$  queries $(w_1,\ldots,w_k)$ and $(v_1,\ldots,v_{k})$. 
\end{enumerate}

\par The only difference between $\hybrid_1$ and $\hybrid_2$ is that $\hybrid_2.\sender$ aborts on some transcripts; conditioned on $\hybrid_2.\sender$ not aborting, the transcript produced by the receiver when interacting with $\sender$ is identical to the transcript produced by $\hybrid_2.\sender$. We claim that the probability that $\hybrid_2.\sender$ aborts, conditioned on the event that $\receiver^*$ does not abort, is negligibly small.

\begin{claim}
\label{clm:hyb2:abort}
$\Pr[\hybrid_2.\sender \text{ aborts}|\receiver^*\text{ does not abort}]=\negl(k)$
\end{claim}
\begin{proof}
To argue this, we first establish some terminology. Let $p_1$ be the probability with which $\receiver^*$ produces a commitment accepting transcript and $p_2$ be the probability with which $\receiver^*$ passes the NTCF condition check. We call the rewinding performed in Step 4 to be "inner rewinding" and the the rewinding performed in Step 8 to be "outer rewinding". 
\par In the rest of the proof, we condition on the event that $\receiver^*$ does not abort. Consider the following claims.

\begin{claim}
The probability that the number of outer rewinding operations performed is greater than $k$ is negligible. 
\end{claim}
\begin{proof}
Note that the outer rewinding is performed till the point it can recover a transcript that passes the NTCF check. Since the probability that $\receiver^*$ produces a transcript that passes the NTCF check is $p_2$, we have that the expected number of outer rewinding operations to be $(1-p_2) + p_2 \cdot \frac{1}{p_2} \leq 2$. By Chernoff, the probability that the number of outer rewinding operations is greater than $k$ is negligible.    
\end{proof}

\begin{claim}
\label{clm:rewind:inner}
The probability that the number of inner rewinding operations performed is greater than $k^2$ is negligible. 
\end{claim}
\begin{proof}
Note that for every NTCF transcript, $\comm$ is rewound many times until $\hybrid_2.\sender$ can indeed recover another commitment-accepting transcript. For a given NTCF transcript, since the probability that $\receiver^*$ produces a commitment accepting transcript is $p_1$, we have that the expected number of inner rewinding operations to be $(1-p_1) + p_1 \cdot \frac{1}{p_1} \leq 2$. And thus by Chernoff, for a given NTCF transcript, the probability that the number of inner rewinding operations is greater than $k$ is negligible. Since the number NTCF transcripts produced is at most $k$ with probability negligibly close to 1, we have that the total number of inner rewinding operations is at most $k^2$ with probability neglibly close to 1.    
\end{proof}

\noindent We now argue about the probability that $\hybrid_2.\sender$ aborts on an NTCF transcript (Step 9) and the probability that it aborts on the transcript of $\comm$ (Step 6). 

\begin{claim}
\label{clm:step9:abort}
The probability that $\hybrid_2.\sender$ aborts in Step 9 is negligible. 
\end{claim}
\begin{proof}
Note that $\hybrid_2.\sender$ aborts in Step 9 only if: (i) it received a valid transcript on the original thread of execution, (ii) it rewinds until the point it receives another valid NTCF transcript and, (iii) the challenge $(v'_1,\ldots,v'_k)$ on which the second transcript was accepted differs from $(v_1,\ldots,v_k)$ only in $\omega(\log(k))$ co-ordinates. Thus, the probability that it aborts is the following quantity: 
\begin{eqnarray*}
& & p_2(p_2+p_2(1-p_2)+p_2(1-p_2)^2+\cdots)\cdot \mathsf{Pr}[\substack{(v_1,\ldots,v_k)\text{ and }(v'_1,\ldots,v'_k)\\ \text{ differ in less than }\omega(\log(k))\text{ co-ordinates}}]\\
& \leq & p_2^2 \left( \frac{1}{p_2} \right) \cdot \mathsf{Pr}[\substack{(v_1,\ldots,v_k)\text{ and }(v'_1,\ldots,v'_k)\\ \text{ differ in less than }\omega(\log(k))\text{ co-ordinates}}] \\
& = & p_2 \cdot \negl(k)\ \ (\text{By Chernoff Bound})
\end{eqnarray*}
\end{proof}

\begin{claim}
The probability that $\hybrid_2.\sender$ aborts in Step 6 is negligible. 
\end{claim}
\begin{proof}
Since step 6 is executed for multiple NTCF transcripts, we need to argue that for any of NTCF transcripts, the probability that $\hybrid_2.\sender$ aborts in Step 6 is negligible. Since we already argued in Claim~\ref{clm:rewind:inner} that the number of inner rewinding operations is $\poly(k)$, by union bound, it suffices to argue the probability that for any given NTCF transcript, the probability that $\hybrid_2.\sender$ aborts in Step 6 is negligible. This is similar to the argument in Claim~\ref{clm:step9:abort}: the probability that $\hybrid_2.\sender$ aborts in Step 6 is $p_1^2 \cdot \frac{1}{p_1} \cdot \prob \left[ \exists i \in [k], \forall j \in [k]: \left( w_{i}^{(j)} \right)' = \left( w_i^{(j)} \right)  \right] = p_1 \cdot 2^{-k}$.   
\end{proof}

\noindent Observe that $\hybrid_2.\sender$ only aborts in Steps 6 and 9; recall that we have already conditioned on the even that $\receiver^*$ does not abort. Thus, we have the proof of the claim.  

\end{proof}
\par This claim shows that $\hybrid_1$ and $\hybrid_2$ are indistinguishable:
$$\view_{\receiver^*}\left( \langle \sender(1^\secparam,\inst,\witness),\receiver^*(1^\secparam,\inst, \aux)\rangle\right) \approx_{Q} \view_{\receiver^*}\left( \langle \hybrid_2.\sender(1^\secparam,\inst,\witness),\receiver^*(1^\secparam,\inst, \aux)\rangle\right).$$\\

\noindent $\underline{\hybrid_3}$: In this hybrid, $\hybrid_3.\sender$ will do as $\hybrid_2.\sender$ except as follows: once it gets to step 8, if the NTCF check passes, it continues as usual, but if the NTCF check does not pass, it inputs $\bot$ in the $\protpc$.\\

\noindent The indistinguishability of $\hybrid_2$ and $\hybrid_3$ follows from the security of the $\protpc$ against malicious quantum receivers, and we have:
$$\view_{\receiver^*}\left( \langle \hybrid_2.\sender(1^\secparam,\inst,\witness),\receiver^*(1^\secparam,\inst, \aux)\rangle\right) \approx_{Q} \view_{\receiver^*}\left( \langle \hybrid_3.\sender(1^\secparam,\inst,\witness),\receiver^*(1^\secparam,\inst, \aux)\rangle\right),$$
This is because the following holds in the event that the above check does not pass:

\small $$\sfefunc\left(\left(  \left\{\bfc_{i,0}^{(j)},\bfc_{i,1}^{(j)},(sh_{i,w_i}^{(j)})',(\bfd_{i,w_i}^{(j)})', \td_i, \key_i,y_i, v_i, w_i^{(j)} \right\}_{i,j \in [k]},\witness \right),\ \left(  \left\{sh_{i,\overline{w_i}}^{(j)},\bfd_{i,\overline{w_i}}^{(j)} \right\}_{i,j \in [k]} \right) \right)$$
$$= \sfefunc\left( \left(\bot\right), \left(  \left\{sh_{i,\overline{w_i}}^{(j)},\bfd_{i,\overline{w_i}}^{(j)} \right\}_{i,j \in [k]} \right) \right).$$\\

\normalsize
\noindent $\underline{\hybrid_4}$: In this hybrid, $\hybrid_4.\sender$ always inputs $\bot$ in the $\protpc$.\\

We have the following: 
$$\view_{\receiver^*}\left( \langle \hybrid_3.\sender(1^\secparam,\inst,\witness),\receiver^*(1^\secparam,\inst, \aux)\rangle\right) \approx_{Q} \view_{\receiver^*}\left( \langle \hybrid_4.\sender(1^\secparam,\inst,\witness),\receiver^*(1^\secparam,\inst, \aux)\rangle\right)$$

\noindent This is because either $\hybrid_3.\sender$ inputs $\bot$ into the $\protpc$ or it can find $T=\{(b_i,x_i,u_i,d_i):i \in S\}$ (see $\hybrid_2$) such that both $(b_i,x_i)$ and $(u_i,d_i)$ pass the NTCF checks corresponding to the $i^{th}$ instantiation. Moreover, recall that $|T|=\omega(\log(k))$. This contradicts the security of NTCFs: by the adaptive hardcore bit property of the NTCF, a PPT classical adversary can break a given instantiation with probability negligibly close to 1/2 and thus, it can break $\omega(\log(k))$ instantiations only with negligible probability.\\

\noindent $\underline{\hybrid_5}$: Now the hybrid sender, $\hybrid_5.\sender$ does as $\hybrid_4.\sender$, but it does not rewind $\receiver^*$.\\

\noindent The statistical distance between $\hybrid_4$ and $\hybrid_5$ is negligible in $k$; this follows from Claim~\ref{clm:hyb2:abort}.


\paragraph{Quantum-Lasting Security.} We have shown that for any auxiliary information $\aux \in \{0,1\}^{\poly(\secparam)}$,
$$\view_{\receiver^*}\left( \langle \sender(1^\secparam,\inst,\witness),\receiver^*(1^\secparam,\inst, \aux)\rangle\right) \approx_{Q} \Simu(1^\secparam,\receiver^*,\inst,\aux).$$

Let $\cA^*$ be any QPT adversary that is given the transcript, $\view_{\receiver^*}\left( \langle \sender(1^\secparam,\inst,\witness),\receiver^*(1^\secparam,\inst, \aux)\rangle\right)$. Consider the $\Simu^*$ that first runs $\Simu(1^\secparam,\receiver^*,\inst,\aux)$, and then runs $\cA^*$, i.e. $\Simu^*$ is the QPT that on a polynomial sized quantum states $\rho$ acts as
$$\Simu^*\left(1^\secparam, \cA^*, \receiver^*, \inst, \aux, \rho \right) = \cA^*\left(\Simu(1^\secparam,\receiver^*,\inst,\aux),\rho \right).$$
Since $\cA^*$ is QPT, it can't distinguish if it is given the actual transcript or the output of $\Simu$. In particular, we have that

$$\cA^*\left(\view_{\receiver^*}\left( \langle \sender(1^\secparam,\inst,\witness),\receiver^*(1^\secparam,\inst, \aux)\rangle\right),\rho\right) \approx_Q \Simu^*\left(1^\secparam, \cA^*, \receiver^*, \inst, \aux, \rho \right).$$

\paragraph{Extractability.} Let $\sender^*$ be the semi-malicious sender. We define our quantum extractor $\ext$ as follows.

\paragraph{Description of $\ext$.} The input to $\ext$ is the instance $\inst$. 

\begin{itemize}
    \item Run $\sender^*$ to obtain $\{\key_i\}_{i \in [k]}$.
    \item For all $i \in [k]$,
    \begin{itemize}
    \item Prepare the superpostion $$\frac{1}{\sqrt{2|\cX|}} \underset{b, x\in \cX, y \in \cY}{\sum} \sqrt{f'_{\key_i,b}(x)(y)} \ket{b,x,y}$$
    which can be done efficiently by the required properties of NTCF. 
    \item Measure the $y$ register, to obtain outcome $y_i$. Denote the postmeasurement quantum state by $\ket{\Psi_i}$. By NTCF,
    $$\ket{\Psi_i} = \frac{\ket{0,x_{i,0}}+\ket{1,x_{i,1}}}{\sqrt{2}} $$ where $(x_{i,0},x_{i,1})\leftarrow \inv(\key_i,\td_i,y_i)$.
    \item Compute $J$ into a new register, $\ket{b,x,0} \rightarrow \ket{b,x,J(x)}$, and then uncompute the register containing $x$ by performing $J^{-1}$, i.e. $\ket{b,x,J(x)}\rightarrow \ket{b,x\oplus J^{-1}(J(x)),J(x)}$. The resulting transformation is $\ket{b,x,0} \rightarrow \ket{b,0,J(x)}$.
    \item Discard the second register, and keep the first register containing $b$ and the third register with $J(x)$. At this point, the extractor has the states $$\ket{\Psi'_i}=\frac{\ket{0,J(x_{i,0})}+\ket{1,J(x_{i,1})}}{\sqrt{2}} $$
    \end{itemize}
    \item Send $\{y_i\}_{i \in [k]}$ to $\sender^*$, and let $\{v_i\}_{i \in [k]}$ be the message received from $\sender*$.
    \item  For all $i \in [k]$:
    \begin{itemize}
        \item if $v_i = 0$, measure $\ket{\Psi'_i}$ in the standard basis, to obtain $(b_i,J(x_{i,b_i}))$.
        \item if $v_i = 1$, apply the Hadamard transformation to $\ket{\Psi'_i}$, and measure in standard basis to obtain $(u_i, d_i)$
    \end{itemize}
    \item For all $i,j \in [k]$, choose the shares $(sh_{i,0}^{(j)},sh_{i,1}^{(j)})$ uniformly at random conditioned on either $(b_i,J(x_{i,b_i}))= sh_{i,0}^{(j)}\oplus sh_{i,1}^{(j)}$ or $(u_i,d_i)=sh_{i,0}^{(j)}\oplus sh_{i,1}^{(j)}$ if $v_i =0$ or $v_i=1$ respectively.
    \item Perform the rest of the protocol as the honest receiver would. Output the outcome of the SFE protocol.
    \end{itemize}
    
\begin{claim}
Assuming NTCFs, perfect correctness and security of $\protpc$,  the probability that $\ext$ extracts from the semi-malicious sender ie negligibly close to $1$.
\end{claim}
\begin{proof}
We first claim that with probability negligibly close to 1, the following is satisfied for every $v_i \in [k]$:
\begin{itemize}
        \item If $v_i = 0$, let $(b_i,J(x_{i,b_i}))$ be the value obtained by measuring $\ket{\Psi'_i}$ in the standard basis. Then,  $f'_{\key_i,b_i}(x_{i,b})=y_i$, 
        \item If $v_i = 1$, let $(u_i, d_i)$ be the value obtained by applying the Hadamard transformation to $\ket{\Psi'_i}$, and measuring it in the standard basis. Then $\langle d_i,J(x_{i,0}) \oplus J(x_{i,1}) \rangle = u_i$ and $d_i \notin G_{\key_i,0,x_{i,0}}\cap G_{\key_i,1,x_{i,1}}$.
    \end{itemize}
This follows from the union bound and Lemma 5.1 of the protocol of~\cite{brakerski2018cryptographic}. By perfect correctness of $\protpc$, it follows that if the extractor inputs shares $sh_{i,0}^{(j)},sh_{i,1}^{(j)}$ that answer correctly each challenge, the output it will receive from the $\protpc$ will be the witness $\witness$. 


\end{proof}

\begin{claim}
\label{clm:ind:ext}
$\view_{\sender^*}\left( \langle \sender^*(1^\secparam,\inst,\witness, \cdot), \receiver(1^\secparam,\inst) \rangle\right) \approx_{Q} \ext_1 \left(1^\secparam,\sender^*,\inst, \cdot \right)$
\end{claim}
\begin{proof}
Consider the following hybrids. \\

\noindent $\underline{\hybrid_1}$: The output of this hybrid is $\view_{\sender^*}\left( \langle \sender^*(1^\secparam,\inst,\witness, \cdot), \receiver(1^\secparam,\inst) \rangle\right)$.  \\

\noindent $\underline{\hybrid_2}$: We define a hybrid receiver $\hybrid_2.\receiver$ who sets the input to $\protpc$ to be $\bot$. 
\par The following holds from the semantic security of $\protpc$ against QPT senders: 
$$\view_{\sender^*}\left( \langle \sender^*(1^\secparam,\inst,\witness, \cdot), \receiver(1^\secparam,\inst) \rangle\right) \approx_{Q} \view_{\sender^*}\left( \langle \sender^*(1^\secparam,\inst,\witness, \cdot), \hybrid_2.\receiver(1^\secparam,\inst) \rangle\right) $$

\noindent $\underline{\hybrid_3}$: We define a hybrid receiver $\hybrid_3.\receiver$ that behaves as $\hybrid_2.\receiver$, but it samples $\{y_i\}_{i \in [k]}$ as the extractor would, by preparing the claw-free superpositions, and then measuring the $y$ register. We claim that the distribution over $y_i$'s is the same in $\hybrid_2$ and $\hybrid_3$. To see this, note that $\hybrid_3$ samples from the distribution $y_i$ from the distribution: $\frac{1}{2|\cX|}\underset{b\in \{0,1\},x \in X}{\sum}f'_{\key_i,b}(x)(y)$. To sample from this distribution, we can first sample $b \in \{0,1\}$, then an $x_{i,b} \in \cX$ and then sampling $y_i$ from the distribution $f'_{\key_i,b}(x_{i,b})$. \\

\noindent $\underline{\hybrid_4}$:  We define a hybrid receiver $\hybrid_4.\receiver$ who computes $\{y_i\}_{i\in [k]}$ by performing the quantum operations that the extractor does, and then computes, for all $i \in[k]$, either $(b_i,J(x_{i,b_i}))$ or $(u_i,d_i)$ according to whether $v_i = 0$ or $v_i=1$ respectively. In other words, $\hybrid_4.\receiver$ compute correct answers to the test of quantumness, then it commits to appropriate shares,
\[
  sh_{i,0}^{(j)}\oplus sh_{i,1}^{(j)} =
  \begin{cases}
    (b_i,J(x_{i,b})) & \text{if $v_i=0$} \\
    (u_i,d_i) & \text{if $v_i=1$}
  \end{cases}
\]
$\hybrid_4.\receiver$ uses these shares for commitment $\bfc_{i,0}^{(j)} = \comm(1^\secparam, sh_{i,0}^{(j)}; \bfd_{i,0}^{(j)})$ and $\bfc_{i,1}^{(j)} = \comm(1^\secparam, sh_{i,1}^{(j)}; \bfd_{i,1}^{(j)})$ The rest of the steps are the same as $\hybrid_3.\receiver$. 
\par The following holds from the computational hiding property of $\comm$ by a similar argument to the one in~\cite{PW09}: 
$$\view_{\sender^*}\left( \langle \sender^*(1^\secparam,\inst,\witness, \cdot), \hybrid_3.\receiver(1^\secparam,\inst) \rangle\right) \approx_{Q} \view_{\sender^*}\left( \langle \sender^*(1^\secparam,\inst,\witness, \cdot), \hybrid_4.\receiver(1^\secparam,\inst) \rangle\right) $$

\noindent $\underline{\hybrid_5}$: We define a hybrid receiver $\hybrid_5.\receiver$ who 
sets the input in $\protpc$ to be   $\left(\left\{sh_{i,\overline{w_i}}^{(j)},\bfd_{i,\overline{w_i}}^{(j)} \right\}_{i \in [k]} \right) $, where $\{w_i\}_{i \in [k]}$ are the bit queried by $\sender^*$ when asking the receiver to reveal commitments. Note that the output distribution of $\hybrid_5.\receiver$ is identical to that of the extractor $\ext$. \\

\par The following holds from the semantic security of $\protpc$ against quantum senders: 
$$\view_{\sender^*}\left( \langle \sender^*(1^\secparam,\inst,\witness, \cdot), \hybrid_4.\receiver(1^\secparam,\inst) \rangle\right) \approx_{Q} \view_{\sender^*}\left( \langle \sender^*(1^\secparam,\inst,\witness, \cdot), \hybrid_5.\receiver(1^\secparam,\inst) \rangle\right) \equiv \ext_1 \left(1^\secparam,\sender^*,\inst, \cdot \right)$$

\end{proof}

\end{proof}

\paragraph{Indistinguishability of Extraction Against Malicious Senders.} We observe that our construction satifies a stronger property than claimed. Our protocol satisfies indistinguishability of extraction against {\em malicious} senders, and not just semi-malicious senders. However, the extractability is still required against semi-malicious senders. 
\par We formalize this in the claim below. 

\begin{claim}
\label{clm:aborting}
The quantum extraction protocol $(S,R)$ described in Figure~\ref{fig:QEXTcons:classical} satisfies indistinguishability of extraction (Definition~\ref{def:qqext}) against malicious senders. 
\end{claim}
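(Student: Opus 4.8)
The plan is to show that the proof is already contained, essentially verbatim, in the argument for Claim~\ref{clm:ind:ext}: nowhere in that five-hybrid chain did we use that $\sender^*$ follows the protocol, only that it is QPT. So I would fix, for an arbitrary (possibly aborting, arbitrarily deviating) QPT $\sender^*$, the \emph{same} extractor $\ext_1$ defined in the extractability proof -- prepare the claw-free superpositions $\frac{1}{\sqrt{2|\cX|}}\sum_{b,x,y}\sqrt{f'_{\key_i,b}(x)(y)}\ket{b,x,y}$, measure the $\cY$-registers to get $\{y_i\}$, send them, receive $\{v_i\}$, measure each $\ket{\Psi'_i}$ in the standard ($v_i=0$) or Hadamard ($v_i=1$) basis, commit to $k$ share-pairs whose XOR is the measured answer $m_i$, complete the remaining rounds as the honest receiver, and feed the ``real'' shares into $\protpc$ -- and re-run $\hybrid_1,\dots,\hybrid_5$ with $\sender^*$ plugged in.

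Then I would check that each transition is driven by a statement agnostic to the sender's honesty. The transitions $\hybrid_1\leadsto\hybrid_2$ (switch the receiver's $\protpc$-input to $\bot$) and $\hybrid_4\leadsto\hybrid_5$ (switch it to the real shares) follow from semantic security of $\protpc$ against an arbitrary QPT sender (Definition~\ref{def:indmalp2}), which already quantifies over malicious senders, and in particular over our $\sender^*$ in its role as $\protpc.\sender$. The transition $\hybrid_2\leadsto\hybrid_3$ (produce $\{y_i\}$ by superposition-and-measure) is an exact equality of distributions: in both hybrids $y_i$ has law $\frac{1}{2|\cX|}\sum_{b,x}f'_{\key_i,b}(x)(y_i)$, and this is just a resampling identity that holds for \emph{any} string $\key_i$ emitted by $\sender^*$, well-formed or not. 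The transition $\hybrid_3\leadsto\hybrid_4$ (commit, per $i$, to share-pairs XORing to $m_i$ rather than to independent uniform strings) follows from the quantum-computational hiding of $\comm$ via a $k^2$-step sub-hybrid exactly as in~\cite{PW09}: in both hybrids the openings $\{(sh^{(j)}_{i,w_i})'\}_j$ revealed in Step~6 are independent and uniform with the same decommitment distribution, and the only objects that change are the \emph{unopened} commitments $\bfc^{(j)}_{i,\overline{w_i}}$, which at this stage feed neither the opening messages nor $\protpc$ (whose input is still $\bot$), so they can be replaced one at a time under hiding. None of these four steps touched the sender beyond its being QPT.

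Finally I would handle the one place where maliciousness is visible -- aborts and malformed messages: if $\sender^*$ halts (or sends garbage) at round $r$, each hybrid receiver simply halts there, outputting the partial view; since consecutive hybrid receivers behave identically round-by-round up to $r$ (differing only in quantities that are either perfectly equidistributed or hidden under $\comm$/$\protpc$), the chain of $\approx_Q$ survives restriction to aborting executions. I would close by remarking that only the \emph{second} sub-bullet of Semi-Malicious Extractability in Definition~\ref{def:qqext} -- that $\ext_2$ actually recovers a valid witness -- genuinely needs $\sender^*$ to be semi-malicious (so that $\sfefunc$ is fed $\witness$ and the honest protocol delivers it), and that is not asserted here. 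I do not expect a real obstacle; the only point meriting care is confirming that the extractor's quantum operations (superposition preparation, basis measurements, committing) are well-defined as fixed efficient procedures on whatever $\key_i$, $v_i$, $\{w^{(j)}_i\}$ the malicious sender supplies, so that the distributional identity in $\hybrid_2\leadsto\hybrid_3$ persists unconditionally.
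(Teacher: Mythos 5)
Your proposal is correct and matches the paper's approach exactly: the paper likewise observes that none of the five hybrid transitions in the proof of Claim~\ref{clm:ind:ext} uses the semi-malicious assumption, and that the only new issue for malicious senders is aborting, which is handled because the extractor is straightline and simply halts where the sender halts. Your write-up is a more detailed, transition-by-transition verification of the same argument, including the correct observation that only the witness-recovery guarantee (not indistinguishability) genuinely requires semi-maliciousness.
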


\noindent We omit the proof of the above claim since it is identical to the proof of Claim~\ref{clm:ind:ext}. The indistinguishability of the hybrids in the proof of Claim~\ref{clm:ind:ext} already hold against malicious senders; in the proof, we never used the fact that the sender was semi-malicious. 
\par The only caveat missing in the proof of Claim~\ref{clm:ind:ext} but comes up in the proof of the above claim is the fact that the malicious sender could abort. If the malicious sender aborts, then so does the extractor; since the extractor is straightline, the view of the sender until that point will still be indistinguishable from the view of the sender when interacting with the honest receiver. 
}




\submversion{
\input{submver-applzk}
}

\fullversion{ 
\subsection{Application: QZK with classical soundness}
\noindent In this section, we show how to construct a quantum zero-knowledge, classical prover, argument system for NP secure against quantum verifiers; that is, the protocol is classical, the malicious prover is also a classical adversary but the malicious verifier can be a polynomial time quantum algorithm. To formally define this notion, consider the following definition.  

\begin{definition}[Classical arguments for NP]
\label{def:clargnp}
A classical interactive protocol $(\prvr,\vrfr)$ is a \textbf{classical ZK argument system} for an NP language $\lang$, associated with an NP relation $\lang(\rel)$, if the following holds:
\begin{itemize}

    \item {\bf Completeness}: For any $(\inst,\witness) \in \lang(\rel)$, we have that $\Pr[\langle \prvr(1^{\secparam},\inst,\witness),\vrfr(1^{\secparam},\inst) \rangle=1] \geq 1-\negl(\secparam),$
    for some negligible function $\negl$. 
    \item {\bf Soundness}: For any $\inst \notin \lang$, any PPT classical adversary $\prvr^*$, and any polynomial-sized auxiliary information $\aux$, we have that $\Pr[\langle \prvr^*(1^{\secparam},\inst,\aux),\vrfr(1^{\secparam},\inst) \rangle=1] \leq \negl(\secparam)$, for some negligible function $\negl$. 
    \end{itemize}
\end{definition}

\newcommand{\Sim}{\mathsf{S}}

\noindent We say that a classical argument system for NP is a QZK (quantum zero-knowledge) classical argument system for NP if in addition to the above properties, a classical interactive protocol satisfies zero-knowledge against malicious receivers. 

\begin{definition}[QZK classical argument system for NP] A classical interactive protocol $(\prvr,\vrfr)$ is a \textbf{quantum zero-knowledge classical argument system} for a language $\lang$, associated with an NP relation $\lang(\rel)$ if both of the following hold.

\begin{itemize}
    \item $(\prvr,\vrfr)$ is a classical argument for $\lang$ (Definition~\ref{def:clargnp}).
    \item {\bf Quantum Zero-Knowledge}: Let $p:\mathbb{N}\rightarrow \mathbb{N}$ be any polynomially bounded function. For any QPT $\vrfr^*$ that on instance $\inst \in \lang$ has private register of size $|\mathsf{R}_{\vrfr^*}|= p(|\inst|)$, there exist a QPT $\Simu$ such that the following two collections of quantum channels are quantum computationally indistinguishable,
    \begin{itemize}
        \item     $\left\{ \Simu(\inst,\vrfr^*, \cdot )\right\}_{\inst \in \lang}$
        \item $\left\{\view_{\vrfr^*}( \langle \prvr(\inst,\aux_1),\vrfr^*(\inst,\cdot)\rangle)\right\}_{\inst \in \lang}$.
    \end{itemize}
    
    In other words, that for every $\inst \in \lang$, for any bounded polynomial $q:\mathbb{N}\rightarrow \mathbb{N}$, for any QPT distinguisher $\cD$ that outputs a single bit, and any $p(|\inst|)+q(|\inst|)$-qubits quantum state $\rho$,
\end{itemize}
$$ \big|\Pr\left[\cD\left(\Simu(\inst,\vrfr^*, \cdot )\otimes I)(\rho)\right)=1\right]\ \ \ \ \ \ \ \ \ \ \ \ \ \ \ \ \ \ \ \ \ \ \ \ \ \ \ \ \ \ \ \ \ \ \ \ \ $$ 
$$\ \ \ \ \ \ \ \ \ \ \ \ -\Pr\left[\cD\left((\view_{\vrfr^*}( \langle \prvr(\inst,\aux_1),\vrfr^*(\inst,\cdot)\rangle)\otimes I)(\rho)\right)=1\right]\big|\leq \epsilon(|\inst|) $$
\end{definition}

\paragraph{Witness-Indistinguishability against quantum verifiers.}
We also consider witness indistinguishable (WI) argument systems for NP languages secure against quantum verifiers. We define this formally below. 

\begin{definition}[Quantum WI for an $\lang \in \text{NP}$] 
\label{def:qwi}
A classical protocol $(\prvr,\vrfr)$ is a \textbf{quantum witness indistinguishable argument system} for an NP language $\lang$ if both of the following hold.
\begin{itemize}
    \item $(\prvr,\vrfr)$ is a classical argument for $\lang$ (Definition~\ref{def:clargnp}).
    \item {\bf Quantum WI}: Let $p:\mathbb{N}\rightarrow \mathbb{N}$ be any polynomially bounded function. For every $\inst \in \lang$, for any two valid witnesses $\witness_1$ and $\witness_2$, for any QPT $\vrfr^*$ that on instance $\inst$ has private quantum register of size $|\mathsf{R}_{\vrfr^*}|=p(|\inst|)$, we require that
    $$\view_{\vrfr^*}( \langle \prvr(\inst,\witness_1),\vrfr^*(\inst,\cdot)\rangle) \approx_{Q} \view_{\vrfr^*}( \langle \prvr(\inst,\witness_2),\vrfr^*(\inst,\cdot)\rangle).$$
\end{itemize}
If $(\prvr, \vrfr)$ is a quantum proof system (sound against unbounded provers), we say that $(\prvr,\vrfr)$ is a \textbf{quantum witness indistinguishable proof system} for $\lang$.
\end{definition}

\paragraph{Instantiation.} By suitably instantiating the constant round WI argument system of Blum~\cite{Blu86} with perfectly binding quantum computational hiding commitments, we achieve a constant round quantum WI classical argument system assuming quantum hardness of learning with errors. 

\subsubsection{Construction}
\noindent We present a construction of constant round quantum zero-knowledge classical argument system for NP. 

\paragraph{Tools.}

\begin{itemize}
    \item Perfectly-binding and quantum-computational hiding non-interactive commitments $\comm$ (see Section~\ref{sec:prelims:commit}). 
    \item Quantum extraction protocol secure against classical adversaries $\protextcl=(\sender,\receiver)$ associated with the relation $\extrel$ below. More generally, $\protextcl$ could be any quantum extraction protocol secure against classical adversaries satisfying Claim~\ref{clm:aborting} (indistinguishability of extraction against malicious senders). 
    $$\extrel = \left\{ \left(\bfc ,\ (\bfd,\td) \right)\ :\ \bfc = \comm(1^{\secparam},\td;\bfd) \right\} $$
    \item Quantum witness indistinguishable classical argument of knowledge system $\protwi=(\protwi.\prvr,\allowbreak \protwi.\vrfr)$ for the relation $\rel_{\wi}$ (Definition~\ref{def:qwi}). 
    
\end{itemize}

\begin{figure}[!htb]
\begin{tabular}{|p{16cm}|}
\hline \\
Instance: $\left(\inst,\td,\left\{(\bfc_0^{(j)})^*,(\bfc_1^{(j)})^*\right\}_{j \in [k]}\right)$ \\
Witness: $\left(\witness,\left\{(sh_0^{(j)},\bfd_0^{(j)},sh_1^{(j)},\bfd_1^{(j)})\right\}_{j \in[k]} \right)$ \\
NP verification: Accept if one of the following two conditions are satisfied: 
\begin{itemize}
    \item $(\inst,\witness) \in \rel$. 
    \item If for every $j \in [k]$, it holds that \small $$\left((\bfc_0^{(j)})^* = \comm(1^{\secparam},sh_0^{(j)};\bfd_0^{(j)})\right) \bigwedge \left((\bfc_1^{(j)})
    ^* = \comm(1
    ^{\secparam},sh_1^{(j)};\bfd_1^{(j)})\right) \bigwedge \left( \td= sh_0^{(j)} \oplus sh_1^{(j)} \right).$$ \normalsize
\end{itemize}

\ \\
\hline
 \end{tabular}
\caption{Relation $\rel_{\wi}$ associated with $\protwi$.} 
\label{fig:relwi}
\end{figure}

\paragraph{Construction.} Let $\lang$ be an NP language. We describe a classical interactive protocol $\left(\prvr,\vrfr \right)$ for $\lang$ in Figure~\ref{fig:qzk}. 

\begin{figure}[!htb]
\begin{tabular}{|p{15cm}|}
\hline 
\begin{itemize}
    \item {\bf Trapdoor Committment Phase}: $\vrfr$: sample $\td  \leftarrow \{0,1\}^{\secparam}$. Compute $\bfc \leftarrow \comm(1^{\secparam},\td;\bfd)$, where $\bfd \leftarrow \{0,1\}^{\mathrm{poly}(\secparam)}$ is the randomness used in the commitment. Send $\bfc$ to $\prvr$. 
    \item {\bf Trapdoor Extraction Phase}: $\prvr$ and $\vrfr$ run the quantum extraction protocol $\protextcl$ with $\vrfr$ taking the role of the sender $\protextcl.\sender$ and $\prvr$ taking the role of the receiver $\protextcl.\receiver$. The input of $\protextcl.\sender$ is  $(1^{\secparam},\bfc,(\bfd,\td);\rqext)$ and the input of $\protextcl.\receiver$ is $\left( 1^{\secparam},\bfc \right)$, where $\rqext$ is the randomness used by the sender in $\protextcl$. Let the transcript generated during the execution of $\protextcl$ be $\trans_{\vrfr \rightarrow \prvr}$.
    
    \textit{Note: The trapdoor extraction phase will be used by the simulator, while proving zero-knowledge, to extract the trapdoor from the malicious verifier.}
    
    
    \item Let $k=\secparam$. For every $j \in [k]$, $\prvr$ sends $(\bfc_0^{(j)})^*=\comm(1^{\secparam},sh_0^{(j)};\bfd_0^{(j)})$ and $(\bfc_1^{(j)})^*=\comm(1^{\secparam},sh_1^{(j)};\bfd_1^{(j)})$, where $sh_0^{(j)},sh_1^{(j)} \xleftarrow{\$} \{0,1\}^{\poly(\secparam)}$.   
    
    \item For every $j \in [k]$, $\vrfr$ sends bit $b^{(j)} \xleftarrow{\$} \{0,1\}$ to $\prvr$. 
    
    \item $\prvr$ sends $(sh_{b^{(j)}}^{(j)},\bfd_{b^{(j)}}^{(j)})$ to $\vrfr$. 
    
    \item $\vrfr$ sends $\rqext,\bfd,\td$ to $\prvr$. Then $\prvr$ checks the following: 
    \begin{itemize}
    \item Let $\trans_{\vrfr \rightarrow \prvr}$ be $(m_1^S,m_1^R,\ldots,m_{\rnd'}^S,m_{\rnd'}^R)$, where the message $m_i^R$ (resp., $m_i^S$) is the message sent by the receiver (resp., sender) in the $i^{th}$ round\footnote{We remind the reader that in every round, only one party speaks.} and $\rnd'$ is the number of rounds of $\protextcl$. Let the message produced by $\protextcl.\sender\left( 1^{\secparam},\bfc,(\bfd,\td);\rqext \right)$ in the $i^{th}$ round be $\widetilde{m}_{i}^S$. 
    \item If for any $i \in [\rnd']$, $\widetilde{m}_i^S \neq {m}_i^S$ then $\prvr$ aborts If $\bfc \neq \comm(1^{\secparam},\td;\bfd)$ then abort. 
    \end{itemize}
    
    \item {\bf Execute Quantum WI}: $\prvr$ and $\vrfr$ run $\protwi$ with $\prvr$ taking the role of $\protwi$ prover $\protwi.\prvr$ and $\vrfr$ taking the role of $\protwi$ verifier $\protwi.\vrfr$. The input to $\protwi.\prvr$ is the security parameter $1^{\secparam}$, instance $\left(\inst,\td,\left\{(\bfc_0^{(j)})^*,(\bfc_1^{(j)})^*\right\}_{j \in [k]}\right)$ and witness $(\witness,\bot)$. The input to $\protwi.\vrfr$ is the security parameter $1^{\secparam}$ and instance $\left(\inst,\td,\left\{(\bfc_0^{(j)})^*,(\bfc_1^{(j)})^*\right\}_{j \in [k]}\right)$.  
    
 
\item {\bf Decision step}: $\vrfr$ computes the decision step of $\protwi.\vrfr$. 
 
\end{itemize}
\ \\
\hline
\end{tabular}
\caption{(Classical Prover) Quantum Zero-Knowledge Argument Systems for NP.}
\label{fig:qzk}
\end{figure}

\begin{lemma}
\label{lem:qzk:classical}
The classical interactive protocol $(\prvr,\vrfr)$ is a quantum zero-knowledge, classical prover, argument system for NP.
\end{lemma}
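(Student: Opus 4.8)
The plan is to verify the three defining properties of a quantum zero-knowledge classical argument system for the protocol of Figure~\ref{fig:qzk}: completeness, soundness against classical PPT provers, and quantum zero-knowledge. \emph{Completeness} is immediate from the sub-protocols: on input $(\inst,\witness)\in\lang(\rel)$ the honest $\prvr$ runs $\protextcl$ in the receiver role (needing no witness), commits to uniformly random shares $sh_0^{(j)},sh_1^{(j)}$, answers the opening challenges honestly, and — since the honest $\vrfr$ generated $\bfc$ and ran $\protextcl.\sender$ faithfully — passes the consistency checks on the revealed $(\rqext,\bfd,\td)$. It then runs $\protwi$ on the witness $(\witness,\bot)$, which satisfies the first clause of $\rel_{\wi}$, so $\vrfr$ accepts by completeness of $\protwi$.

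\emph{Soundness.} Fix $\inst\notin\lang$ and a classical PPT $\prvr^*$ making $\vrfr$ accept with probability $\epsilon$. Since $\inst\notin\lang$, the $\protwi$ statement $\big(\inst,\td,\{(\bfc_0^{(j)})^*,(\bfc_1^{(j)})^*\}_j\big)$ lies in $\lang(\rel_{\wi})$ only via its second clause; by soundness of $\protwi$, acceptance implies (up to negligible error) that for every $j$ there are valid openings of $(\bfc_0^{(j)})^*,(\bfc_1^{(j)})^*$ whose XOR equals $\td$, and by perfect binding of $\comm$ these are the unique committed values. Hence acceptance implies the event $E$ that the values committed in $\{(\bfc_b^{(j)})^*\}$ XOR (coordinatewise in $j$) to $\td$, and it suffices to bound $\Pr[E]$. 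The point is that $\prvr^*$ fixes these commitments before $\vrfr$ reveals $(\rqext,\bfd,\td)$, so at that moment its entire view is $\bfc$ together with the transcript of $\protextcl$. By classical zero-knowledge of $\protextcl$ (Definition~\ref{def:qext:cl}) that transcript — produced by the honest $\protextcl.\sender$ on the $\extrel$-witness $(\bfd,\td)$ — is indistinguishable from the output of the $\protextcl$-simulator run on $\bfc$ alone; and $\bfc=\comm(1^\secparam,\td;\bfd)$ computationally hides $\td$. Combining the two gives a PPT reduction that, given a challenge $\comm(\td^{(\beta)})$ for independent uniform $\td^{(0)},\td^{(1)}$, runs the $\protextcl$-simulator with $\bfc=\comm(\td^{(\beta)})$, collects $\prvr^*$'s share commitments, recovers both committed shares at every coordinate by rewinding $\prvr^*$ on fresh opening challenges, and tests whether they XOR to $\td^{(0)}$ or to $\td^{(1)}$ — a non-negligible $\Pr[E]$ then breaks the hiding of $\comm$, so $\Pr[E]=\negl(\secparam)$. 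I expect this reduction to be the main obstacle: because the $\protextcl$-simulator does not produce randomness $\rqext$ consistent with the simulated transcript, the reduction cannot push $\prvr^*$ through the honest reveal and $\protwi$ phases and extract the committed shares with the $\protwi$ argument-of-knowledge extractor, so it must instead recover both shares at each of the $k$ coordinates by rewinding the opening challenges, and one must argue both that this rewinding succeeds with good enough probability whenever $\prvr^*$ is convincing and that replacing $\protextcl$ by its simulator does not noticeably change $\Pr[E]$ despite $E$ not being efficiently checkable on the bare transcript.

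\emph{Quantum zero-knowledge.} For a QPT $\vrfr^*$, the simulator $\Simu$ runs $\vrfr^*$ and plays the receiver side of $\protextcl$ using the extractor $\ext=(\ext_1,\ext_2)$ of $\protextcl$ in place of the honest receiver; by indistinguishability of extraction against malicious senders (Claim~\ref{clm:aborting}, which also covers aborting $\vrfr^*$ since $\ext$ is straightline) this is undetectable to $\vrfr^*$, and $\ext_2$ returns $(\bfd,\td)$. Whenever $\vrfr^*$ later passes $\prvr$'s consistency check on $(\rqext,\bfd,\td)$ it necessarily behaved as a semi-malicious $\protextcl.\sender$ with a well-formed $\bfc$, so by perfect binding the extracted $\td$ is the unique value committed in $\bfc$; $\Simu$ then commits shares $sh_0^{(j)},sh_1^{(j)}$ uniformly subject to $sh_0^{(j)}\oplus sh_1^{(j)}=\td$ and runs $\protwi$ on the second-clause witness of $\rel_{\wi}$, never touching $\witness$ (if the check fails, $\Simu$ aborts, exactly like the honest $\prvr$). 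Indistinguishability of $\Simu(1^\secparam,\vrfr^*,\inst,\cdot)$ from $\view_{\vrfr^*}(\langle\prvr(\inst,\witness),\vrfr^*(\inst,\cdot)\rangle)$ follows by a hybrid argument: first switch the $\protextcl$ receiver to $\ext_1$ (Claim~\ref{clm:aborting}); then switch the committed shares from uniform to uniform-subject-to-XOR-$\td$, via computational hiding of $\comm$ applied to the unopened commitment at each coordinate, as in the analysis of~\cite{PW09}; then switch the $\protwi$ witness from $(\witness,\bot)$ to the share openings via quantum witness indistinguishability of $\protwi$. The final hybrid is independent of $\witness$ and equals $\Simu$. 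The subtlety to check is that $\ext_2$ is only guaranteed to extract from semi-malicious senders, which is reconciled by the observation above that any $\vrfr^*$ reaching the $\protwi$ phase without making $\prvr$ abort was semi-malicious in $\protextcl$.
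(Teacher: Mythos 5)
Your proposal is correct and follows essentially the same route as the paper: the same simulator (run the $\protextcl$ extractor in place of the honest receiver, commit shares XORing to the extracted trapdoor, switch the $\protwi$ witness, with the same hybrid order), and the same soundness decomposition (rewind the prover's share commitments to extract $\td^*$, use zero-knowledge of $\protextcl$ plus hiding of $\comm$ to show $\td^*=\td$ happens with negligible probability, and invoke $\protwi$ soundness for the case $\td^*\neq\td$). The only cosmetic difference is that the paper bounds the $\td^*=\td$ case by replacing $\bfc$ with $\comm(\mathbf{0})$ and sampling $\td$ after the interaction rather than via your two-challenge distinguishing game.
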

\begin{proof}
\ 
The completeness is straightforward. We prove soundness and zero-knowledge next. 
\ 
\paragraph{Soundness.} Let $\prvr^*$ be a classical PPT algorithm. We prove that $\prvr^*(1^{\secparam},\inst,\aux)$, for $\inst \notin \lang$ and auxiliary information $\aux$, can convince $\vrfr(1^{\secparam},\inst)$ with only negligible probability. Consider the following hybrids. \\

\noindent \underline{$\hybrid_1$}: The output of this hybrid is the view of the prover $\view_{\prvr^*}( \langle \prvr^*(1^{\secparam},\inst,\aux),\vrfr(1^{\secparam},\inst)\rangle)$ along with the decision bit of $\vrfr$. \\

\noindent \underline{$\hybrid_2$}: We consider the following hybrid verifier $\hybrid_2.\vrfr$ which executes the trapdoor commitment phase and the trapdoor extraction phase with $\prvr^*$ honestly. It then receives $\{((\bfc_0^{(j)})^*,(\bfc_1^{(j)})^*))\}_{j \in [k]}$ from the prover. $\hybrid_2.\vrfr$ sends random bits $\{b^{(j)}\}_{j \in [k]}$ to $\prvr^*$ and it then receives $(sh_{b^{(j)}}^{(j)},\bfd_{b^{(j)}}^{(j)})$. At this point, $\hybrid_2.\vrfr$ will rewind until it can extract $\td^*$ from the commitments; if it extracted multiple values or it didn't extract any value, set $\td^*=\bot$. This is done similarly to the cQEXT case and the argument from~\cite{PW09}.\\

\par The output distribution of this hybrid is identical to the output distribution of $\hybrid_1$. 

\clearpage

\noindent The following holds: 
\begin{eqnarray*}
\prob \left[ 1 \leftarrow \langle P^*(1^{\secparam},\inst,\aux),\hybrid_2.\vrfr(1^{\secparam},\inst)\rangle  \right] & = &  \prob \left[ \substack{1 \leftarrow \langle P^*(1^{\secparam},\inst,\aux),\hybrid_2.\vrfr(1^{\secparam},\inst) \rangle \\ \bigwedge\\ \left( \td^* = \td \bigvee \td^* \neq \td \right)}\ :\ \td^* \leftarrow \ext(1^{\secparam},\qaux) \right] \\
& & \\
& & \\
& \leq & \underbrace{\prob \left[ \substack{1 \leftarrow \langle P^*(1^{\secparam},\inst,\aux),\hybrid_2.\vrfr(1^{\secparam},\inst) \rangle \\ \bigwedge\\ \left( \td^* = \td \right)}\ :\ \td^* \leftarrow \ext(1^{\secparam},\qaux) \right]}_{\varepsilon_1}\\
& & + \underbrace{\prob \left[ \substack{1 \leftarrow \langle P^*(1^{\secparam},\inst,\aux),\hybrid_2.\vrfr(1^{\secparam},\inst) \rangle \\ \bigwedge\\ \left( \td^* \neq \td \right)}\ :\ \td^* \leftarrow \ext(1^{\secparam},\qaux) \right]}_{\varepsilon_2}
\end{eqnarray*}


\noindent We prove the following claims. 

\begin{claim}
$\varepsilon_1 \leq \negl(\secparam)$, for some negligible function $\negl$. 
\end{claim}
\begin{proof}
Consider the following hybrids.\\

\noindent \underline{$\hybrid_3$}: We define a hybrid verifier $\hybrid_3.\vrfr$ that performs the trapdoor commitment phase honestly. In the trapdoor extraction phase, it executes $\protext_1.\simulator(1^{\secparam})$, instead of $\protext_1.\sender(1^{\secparam},\bfc,(\bfd,\td))$, while interacting with $\prvr^*$. The rest of the steps of $\hybrid_3.\vrfr$ is as defined in $\hybrid_2.\vrfr$. 
\par Let $\td^*$ be the trapdoor extracted as before. From the zero-knowledge property of $\protextcl$, the following holds:
\begin{eqnarray}
\varepsilon_1 \leq \prob \left[ \substack{1 \leftarrow \langle P^*(1^{\secparam},\inst,\aux),\hybrid_3.\vrfr(1^{\secparam},\inst) \rangle \\ \bigwedge\\ \left( \td^* = \td \right)}\ :\ \td^* \leftarrow \ext(1^{\secparam},\qaux) \right]   + \negl(\secparam) 
\end{eqnarray}

\noindent \underline{$\hybrid_4$}: We define the hybrid verifier $\hybrid_4.\vrfr$ that performs the same steps as $\hybrid_3.\vrfr$ execpt that it computes $\bfc$ as $\comm(1^{\secparam},\mathbf{0};\bfd)$ instead of $\comm(1^{\secparam},\td;\bfd)$, where $\mathbf{0}$ is a $\secparam$-length string of all zeroes. 
\par Let $\td^*$ be the trapdoor extracted as before. From the quantum hiding property of $\comm$, the following holds:
\begin{eqnarray}
\prob \left[ \substack{1 \leftarrow \langle P^*(1^{\secparam},\inst,\aux),\hybrid_3.\vrfr(1^{\secparam},\inst) \rangle \\ \bigwedge\\ \left( \td^* = \td \right)}\ :\ \td^* \leftarrow \ext(1^{\secparam},\qaux) \right] \ \ \ \ \ \ \ \ \ \ \ \ \ \ \ \ \ \ \ \ \ \ \ \ \ \ \ \ \\ \leq \prob \left[ \substack{1 \leftarrow \langle P^*(1^{\secparam},\inst,\aux),\hybrid_4.\vrfr(1^{\secparam},\inst) \rangle \\ \bigwedge\\ \left( \td^* = \td \right)}\ :\ \td^* \leftarrow \ext(1^{\secparam},\qaux) \right]   + \negl(\secparam) 
\end{eqnarray}

\noindent \underline{$\hybrid_5$}: We define the hybrid verifier $\hybrid_5.\vrfr$ that performs the same steps as $\hybrid_4.\vrfr$ except that it samples $\td$ {\em after} it completes its interaction with the $\prvr^*$. 
\par Note that the output distributions of $\hybrid_4$ and $\hybrid_5$ are identical. Moreover, the probability that $\hybrid_5.\vrfr$ accepts and $\td^* = \td$ is at most $\frac{1}{2^{\secparam}}$. Thus we have, 
\begin{eqnarray*}
& & \prob \left[ \substack{1 \leftarrow \langle P^*(1^{\secparam},\inst,\aux),\hybrid_4.\vrfr(1^{\secparam},\inst) \rangle \\ \bigwedge\\ \left( \td^* = \td \right)}\ :\ \td^* \leftarrow \ext(1^{\secparam},\qaux) \right] \\
& = & \prob \left[ \substack{1 \leftarrow \langle P^*(1^{\secparam},\inst,\aux),\hybrid_5.\vrfr(1^{\secparam},\inst) \rangle \\ \bigwedge\\ \left( \td^* = \td \right)}\ :\ \td^* \leftarrow \ext(1^{\secparam},\qaux) \right] \\ 
& \leq & \negl(\secparam)
\end{eqnarray*}

\noindent From the above hybrids, it follows that $\varepsilon_1 \leq \negl(\secparam)$. 

\end{proof}

\begin{claim}
$\varepsilon_2 \leq \negl(\secparam)$, for some negligible function $\negl$. 
\end{claim}
\begin{proof}
Since the trapdoor $\td^*$ extracted from $\prvr^*$ is not equal to $\td$, this means that there is a $j \in [k]$ s.t. $sh_0^{(j)} \oplus sh_1^{(j)} \neq \td$, where $sh_0^{(j)}$ and $sh_1^{(j)}$ are the unique values (uniqueness follows from perfect binding) committed to in $(\bfc_0^{(j)})^*$ and $(\bfc_1^{(j)})^*$ respectively. 
\par From the soundness of $\protwi$, it then follows that the probability that the verifier accepts is negligible.   
\end{proof}

\paragraph{Zero-Knowledge.} Let $\vrfr^*$ be the malicious QPT verifier. We describe the simulator $\simulator$ as follows. 
\begin{itemize}
    \item It receives $\bfc$ from $\vrfr^*$. 
    \item Suppose $\ext$ be the extractor of $\protextcl$ associated with $\protextcl.\sender^*$, where $\protextcl.\sender^*$ is the adversarial sender algorithm computed by $\vrfr^*$. Compute $\ext(1^{\secparam},\protextcl.\sender^*,\cdot)$ to obtain $\td
    ^*$. At any time, if $\vrfr^*$ aborts, $\simulator$ also aborts with the output, the current private state of $\vrfr^*$.  
    \item For every $j \in [k]$, it samples $sh_0^{(j)},sh_1^{(j)}$ uniformly at random subject to $sh_0^{(j)} \oplus sh_1^{(j)} = \td^*$. It then computes $(\bfc_0^{(j)})^*=\comm(1^{\secparam},sh_0^{(j)};\bfd_0^{(j)})$ and $(\bfc_1^{(j)})^*=\comm(1^{\secparam},sh_1^{(j)};\bfd_1^{(j)})$ and sends $((\bfc_0^{(j)})^*,(\bfc_1^{(j)})^*)$ to $\vrfr^*$.
    \item It receives bits $\{b^{(j)}\}_{j \in[k]}$ from $\vrfr^*$. 
    \item It sends $(sh_{b^{(j)}}^{(j)},\bfd_{b^{(j)}}^{(j)})$ from $\vrfr^*$. 
    \item It receives $(\rqext,\bfd,\td)$ from $\vrfr^*$. It then checks the following:
      \begin{itemize}
    \item Let $\trans_{\vrfr \rightarrow \prvr}$ be $(m_1^S,m_1^R,\ldots,m_{\rnd'}^S,m_{\rnd'}^R)$, where the message $m_i^R$ (resp., $m_i^S$) is the message sent by the receiver (resp., sender) in the $i^{th}$ round\footnote{We remind the reader that in every round, only one party speaks.} and $\rnd'$ is the number of rounds of $\protextcl$. Let the message produced by $\protextcl.\sender\left( 1^{\secparam},\bfc,(\bfd,\td);\rqext \right)$ in the $i^{th}$ round be $\widetilde{m}_{i}^S$. 
    \item If for any $i \in [\rnd']$, $\widetilde{m}_i^S \neq {m}_i^S$ then $\simulator$ aborts. If $\td \neq \td^*$ then $\simulator$ aborts. 
    \end{itemize}
    \item $\simulator$ executes $\protwi$ with $\vrfr^*$ on input instance  $\left(\inst,\td,\left\{(\bfc_0^{(j)})^*,(\bfc_1^{(j)})^*\right\}_{j \in [k]}\right)$. The witness $\simulator$ uses in $\protwi$ is $\left(\bot,\left\{(sh_0^{(j)},\bfd_0^{(j)},sh_1^{(j)},\bfd_1^{(j)})\right\}_{j \in[k]} \right)$. If $\vrfr$ aborts at any point in time, $\simulator$ also aborts and outputs the current state of the verifier. 
    \item Otherwise, output the current state of the verifier. 
   
\end{itemize}

\noindent We prove the indistinguishability of the view of the verifier when interacting with the honest prover versus the view of the verifier when interacting with the simulator. Consider the following hybrids. \\

\noindent $\underline{\hybrid_1}$: The output of this hybrid is the view of $\vrfr^*$ when interacting with $\prvr$. That is, the output of the hybrid is $\view_{\vrfr^*}\left( \langle \prvr(1^{\secparam},\inst,\witness),\vrfr^*(1^{\secparam},\inst,\cdot) \rangle \right)$.\\ 

\noindent $\underline{\hybrid_2}$: We define a hybrid prover $\hybrid_2.\prvr$ as follows: it first receives $\bfc$ from $\vrfr^*$. It computes  $\ext(1^{\secparam},\protextcl.\sender^*,\cdot)$ to obtain $\td^*$. It then sends $(\bfc_0^{(j)})^*$ and $(\bfc_1^{(j)})^*$, where $(\bfc_0^{(j)})^*$ and $(\bfc_1^{(j)})^*$ are commitments of $sh_0^{(j)},sh_1^{(j)}$ respectively and $sh_0^{(j)},sh_1^{(j)}$ are sampled uniformly at random. It receives $b$ from $\vrfr^*$. It then sends $(sh_b^{(j)},\bfd_b^{(j)})$ to $\vrfr^*$. It then receives $(\rqext,\bfd,\td)$ from $\vrfr^*$. It then checks the following: 
\begin{itemize}
    \item Let $\trans_{\vrfr \rightarrow \prvr}$ be $(m_1^S,m_1^R,\ldots,m_{\rnd'}^S,m_{\rnd'}^R)$, where the message $m_i^R$ (resp., $m_i^S$) is the message sent by the receiver (resp., sender) in the $i^{th}$ round and $\rnd'$ is the number of rounds of $\protextcl$. Let the message produced by $\protextcl.\sender\left( 1^{\secparam},\bfc,(\bfd,\td);\rqext \right)$ in the $i^{th}$ round be $\widetilde{m}_{i}^S$. 
    \item If for any $i \in [\rnd']$, $\widetilde{m}_i^S \neq {m}_i^S$ then $\hybrid_2.\prover$ aborts. If $\td \neq \td^*$ then $\hybrid_2.\prover$ aborts. 
    \end{itemize}
\noindent $\hybrid_2.\prvr$ finally executes $\protwi$ with $\vrfr^*$; it still uses $\witness$ in $\protwi$. 
\par We claim the following holds: 
$$\view_{\vrfr^*} \left( \langle \prvr(1^{\secparam},\inst,\witness),\vrfr^*(1^{\secparam},\inst,\cdot) \rangle  \right) \approx_{Q} \view_{\vrfr^*} \left( \hybrid_2.\prvr(1^{\secparam},\inst,\witness),\vrfr^*(1^{\secparam},\inst,\cdot) \right) $$
\noindent There are two cases:
\begin{itemize}
    \item $\protextcl.\sender^*$ does not behave according to the protocol (i.e., not semi-malicious): The view of the verifier when interacting with $\hybrid_2.\prover$ is indistinguishable from the view of the verifier when interacting with the honest prover, from the indistinguishability of extraction against malicious senders property (Claim~\ref{clm:aborting}). 
    \item $\protextcl.\sender^*$ behaves according to the protocol (i.e., it is semi-malicious): In this case, $\protextcl.\ext$ is able to extract $\td$ with probability negligibly close to 1. Moreovoer, as before, the view of the verifier when interacting with the honest prover is indistinguishable from $\hybrid_{2}.\prover$ from Claim~\ref{clm:aborting}. 
    
\end{itemize} 
\ \\
\noindent $\underline{\hybrid_3}$: We define a hybrid prover $\hybrid_3.\prvr$ as follows: it behaves exactly like $\hybrid_2.\prvr$ except that it computes the commitments $(\bfc_0^{(j)})^*$ and $(\bfc_1^{(j)})^*$ as commitments of $sh_0^{(j)}$ and $sh_1^{(j)}$, where $sh_0^{(j)} \oplus sh_1^{(j)} = \td$.

\par The following holds from the quantum-computational hiding property of $\comm$ following the same argument as~\cite{PW09}: 
$$\view_{\vrfr^*} \left( \langle \hybrid_2.\prvr(1^{\secparam},\inst,\witness),\vrfr^*(1^{\secparam},\inst,\cdot) \rangle  \right) \approx_{Q} \view_{\vrfr^*} \left( \hybrid_3.\prvr(1^{\secparam},\inst,\witness),\vrfr^*(1^{\secparam},\inst,\cdot) \right) $$

\noindent $\underline{\hybrid_4}$: We define a hybrid prover $\hybrid_4.\prvr$ as follows: it behaves exactly like $\hybrid_3.\prvr$ except that it uses the witness $(\bot,(sh_0^{(j)},\bfd_0^{(j)},sh_1^{(j)},\bfd_1^{(j)}))$ in $\protwi$ instead of $(\witness,\bot)$. Note that the description of $\hybrid_4.\prvr$ is identical to the description of $\simulator$.  
\par The following holds from the quantum witness indistinguishability property of $\protwi$: 
\begin{eqnarray*}
& & \view_{\vrfr^*} \left( \langle \hybrid_3.\prvr(1^{\secparam},\inst,\witness),\vrfr^*(1^{\secparam},\inst,\cdot) \rangle  \right)\\
& \approx_{Q} & \view_{\vrfr^*} \left( \hybrid_4.\prvr(1^{\secparam},\inst,\witness),\vrfr^*(1^{\secparam},\inst,\cdot) \right) \\
& \equiv & \simulator(1^{\secparam},\inst,\cdot)
\end{eqnarray*}

\end{proof}
}

\fullversion{
\subsubsection{On Classical Verifiers}
\label{sec:class:verifiers}
A desirable property from a QZK protocol is if the verifier is classical then so is the simulator. Our protocol as described above doesn't satisfy this property. That is, our simulator is still a QPT algorithm even if the malicious verifier is classical. However, we can do a simple modification to our QZK protocol (Figure~\ref{fig:qzk}) to satisfy this desired property.
\par The modification is as follows: in addition to the cQEXT protocol, also sequentially execute a constant round classical extractable commitment scheme satisfying perfectly binding~\cite{PW09}. In the classical scheme, the verifer takes the role of the committer committing to $\bfc$ and $\bfd$; note that these are the same values it commits to in the cQEXT protocol as well. Note that this wouldn't affect soundness; the classical malicious prover will still be unable to learn $\bfd$ from the classical extractable commitment scheme, from its hiding property. 
\par To argue zero-knowledge, first consider the following two simulators: 
\begin{itemize}
    \item $\simr_{c}$: This simulator runs the extractor in the classical extractable commitment scheme to extract $\bfd$. It then runs the honest receiver to interact with the verifier in the cQEXT protocol. The rest of the steps is identical to the simulator described in the proof of Lemma~\ref{lem:qzk:classical}.
    \item $\simr_{q}$: This simulator runs the honest receiver to interact with the verifier in the classical extractable commitment scheme. It then runs the extractor in the cQEXT protocol to extract $\bfd$. The rest of the steps is identical to the simulator described in the proof of Lemma~\ref{lem:qzk:classical}.  
\end{itemize}
If the malicious verifier is classical PPT then $\simr_c$ can successfully carry out the simulation whereas if the malicious verifier is QPT then $\simr_q$ is successful. While we wouldn't know whether the malicious verifier is classical PPT or not, we know for a fact that one of two simulators will succeed. 
}

\section{QEXT Secure Against Quantum Adversaries}
\label{sec:qext}

\renewcommand{\otpad}{\mathsf{otp}}
\newcommand{\qwi}{\mathsf{qwi}} 
\subsection{Construction of $\protext$}
\label{ssec:qext:cons}
\noindent We present a construction of quantum extraction protocols secure against quantum adversaries, denoted by $\protextq$. First, we describe the tools used in this construction. 

\fullversion{
\paragraph{Tools.} 
\begin{itemize}    
    \item Quantum-secure computationally-hiding and perfectly-binding non-interactive commitments $\comm$ (see Section~\ref{sec:prelims:commit}). 
    \item Quantum fully homomorphic encryption scheme with some desired properties, $(\fhe.\gen,\allowbreak \fhe.\enc,\allowbreak  \fhe.\dec,\allowbreak  \fhe.\eval)$. 
    \begin{itemize}
        \item It admits homomorphic evaluation of arbitrary computations, 
        \item It admits perfect correctness, 
        \item The ciphertext of a classical message is also classical.  
    \end{itemize}
    \noindent We show in Section~\ref{ssec:prelims:qfhe} that there are $\fhe$ schemes satisfying the above properties. 
    
    \item Quantum-secure two-party secure computation $\protpc$ with the following properties (see Section~\ref{ssec:prelims:sfe}): 
    \begin{itemize}
        \item Only one party receives the output. We designate the party receiving the output as the receiver $\protpc.\receiver$ and the other party to be $\protpc.\sender$. 
    
        \item Security against quantum passive senders. 
        \item IND-Security against quantum malicious receivers. 
    \end{itemize}
    
    \item Quantum-secure lockable obfuscation $\lockobfscheme=(\lobf,\leval)$ for $\lockclass$, where every circuit $\lockC$, parameterized by $(\bfr,\bfk,\sk_1,\ct^*)$, in $\lockclass$ is defined in Figure~\ref{fig:lockfunc}. Note that $\lockclass$ is a compute-and-compare functionality (see Section~\ref{ssec:prelims:lobfs}).

\end{itemize}
}
\submversion{
\begin{itemize}    
    \item Quantum-secure computationally-hiding and perfectly-binding non-interactive commitments $\comm$. 
    \item Quantum fully homomorphic encryption scheme with some desired properties, $(\fhe.\gen,\allowbreak \fhe.\enc,\allowbreak  \fhe.\dec,\allowbreak  \fhe.\eval)$. 
    \begin{itemize}
        \item It admits homomorphic evaluation of arbitrary computations, 
        \item It admits perfect correctness, 
        \item The ciphertext of a classical message is also classical.  
    \end{itemize}
    \noindent We show in the full version that there are $\fhe$ schemes satisfying the above properties. 
    
    \item Quantum-secure two-party secure computation $\protpc$ with the following properties: 
    \begin{itemize}
        \item Only one party receives the output. We designate the party receiving the output as the receiver $\protpc.\receiver$ and the other party to be $\protpc.\sender$. 
    
        \item Security against quantum passive senders. 
        \item IND-Security against quantum malicious receivers. 
    \end{itemize}
    
    \item Quantum-secure lockable obfuscation $\lockobfscheme=(\lobf,\leval)$ for $\lockclass$, where every circuit $\lockC$, parameterized by $(\bfr,\bfk,\sk_1,\ct^*)$, in $\lockclass$ is defined in Figure~\ref{fig:lockfunc}. Note that $\lockclass$ is a compute-and-compare functionality.

\end{itemize}

}

\submversion{
\clearpage
}

\begin{figure}[!htb]
\begin{tabular}{|p{13cm}|}
\hline 
\begin{center}
    $\lockC$
\end{center}
Input: $\ct$\\
Hardwired values: $\bfr \  (\mathrm{lock}),\bfk,\sk_1,\ct^*$.\\ 
\begin{itemize}
 \setlength\itemsep{1em}
    \item $\sk'_2 \leftarrow \fhe.\dec(\sk_1,\ct)$
    \item $\bfr' \leftarrow \fhe.\dec(\sk'_2,\ct^*)$
    \item If $\bfr'=\bfr$, output $\bfk$. Else, output $\bot$. 
\end{itemize}
\\
\hline
\end{tabular}  
    \caption{Circuits used in the lockable obfuscation}
    \label{fig:lockfunc}
\end{figure}

\newcommand{\recin}{\mathbf{y}}
\begin{figure}[!htb]
\begin{tabular}{|p{13cm}|}
\hline 
\begin{center}
    $f$
\end{center}
Input of sender: $(\td,\bfc,\bfc^*_1,\ldots,\bfc^*_{\ell},\sk_2)$\\
Input of receiver: $(\bfd,r_1,\ldots,r_{\ell})$
\ \\
\begin{itemize}
    \setlength\itemsep{1em}
    \item If $\left( \bfc \leftarrow \comm\left( 1^{\secparam},(r_1,\ldots,r_{\ell});\bfd\right) \right) \bigwedge \left(\forall i \in [\ell], \bfc^*_i \leftarrow \comm\left( 1^{\secparam},\td_{i}; r_i \right) \right)$, output $\sk_2$. Here, $\td_i$ denotes the $i^{th}$ bit of $\td$. 
    \item Otherwise, output $\bot$.
\end{itemize}
\ \\
\hline
\end{tabular}
\caption{Description of the function $f$ associated with the $\protpc$.}
\label{fig:sfefunc}
\end{figure}

\begin{figure}[!htb]
\begin{tabular}{|p{13cm}|}
\hline 

Input of sender: $(\inst,\witness)$. \\
Input of receiver: $\inst$ \\

\begin{itemize}
\setlength\itemsep{1em}

    \item $\receiver$: sample $(r_1,\ldots,r_{\ell}) \xleftarrow{\$}  \{0,1\}^{\ell \cdot \poly(\secparam)}$. Compute $\bfc \leftarrow \comm\left( 1^{\secparam},(r_1,\ldots,r_{\ell});\bfd\right)$, where $\ell=\secparam$ and $\bfd$ is the randomness used to compute $\bfc$. Send $\bfc$ to $\sender$.

    \item $\sender$:
    \begin{itemize}
    
        \item Compute the $\fhe.\setup$ twice;  $(\pk_i,\sk_i) \leftarrow \fhe.\setup(1^{\secparam})$ for $i \in \{1,2\}$. 
        \item Compute $\ct_1 \leftarrow \fhe.\enc(\pk_1,(\td || \witness))$, where $\td \xleftarrow{\$} \{0,1\}^{\secparam}$.
      
        \item Compute $\obfC \leftarrow \lobf(1^{\secparam},\lockC[\bfr,\bfk,\sk_1,\ct^*])$, where $\bfr \xleftarrow{\$} \{0,1\}^
        {\secparam}$ and $\bfk \xleftarrow{\$} \{0,1\}^{\secparam}$, $\ct^*$ is defined below and $\lockC[\bfr,\bfk,\sk_1,\ct^*]$ is defined in Figure~\ref{fig:lockfunc}. \begin{itemize}
          
          \item $\ct^* \leftarrow \fhe.\enc \left(\pk_2,\bfr \right)$
      \end{itemize}
    \end{itemize}
      Send $\msg_1=\left(\ct_1,\obfC, \otpad := \bfk \oplus \sk_1 \right)$. 
    
    \item $\receiver$: compute $\bfc^*_i \leftarrow \comm \left(1^{\secparam},0;r_i \right)$ for $i \in [\ell]$. Send $(\bfc^*_1,\ldots,\bfc^*_{\ell})$ to $\sender$.   
    
    \item $\sender$ and $\receiver$ run $\protpc$, associated with the two-party functionality $f$ defined in Figure~\ref{fig:sfefunc}; $\sender$ takes the role of $\protpc.\sender$ and $\receiver$ takes the role of $\protpc.\receiver$.  The input to $\protpc.\sender$ is $(\td,\bfc,\bfc_1^*,\ldots,\bfc_{\ell}^*,\sk_2)$ and the input to $\protpc.\receiver$ is $(\bfd,r_1,\ldots,r_{\ell})$.
 
\end{itemize}
\ \\
\hline
\end{tabular}
\caption{Quantum Extraction Protocol $(\sender,\receiver)$}
\label{fig:QEXTcons}
\end{figure}

\paragraph{Construction.} We construct a protocol $(\sender,\receiver)$ in Figure~\ref{fig:QEXTcons} for a NP language $\lang$, and the following lemma shows that $(\sender,\receiver)$ is a quantum extraction protocol.


\submversion{

We prove the following lemma in the full version.

\begin{lemma}
\label{lem:proof:qqext}
Assuming the quantum security of $\comm$, $\protpc$, $\fhe$ and $\lockobfscheme$,  $(\sender,\receiver)$ is a quantum extraction protocol for $\lang$ secure against quantum adversaries.
\end{lemma}
\newpage
}


\fullversion{

\begin{lemma} Assuming the quantum security of $\comm$, $\protpc$, $\fhe$, and  $\lockobfscheme$, $(\sender,\receiver)$ is a quantum extraction protocol for $\lang$ secure against quantum adversaries.
\end{lemma}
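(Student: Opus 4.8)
The plan is to verify, in turn, the three requirements of Definition~\ref{def:qqext} for $(\sender,\receiver)$ of Figure~\ref{fig:QEXTcons}: correctness of extraction, indistinguishability of extraction, and quantum zero-knowledge. First I would pin down the extractor. Given non-black-box access to a semi-malicious $\sender^*$, $\ext$ plays the honest receiver through the first two flows --- sending $\bfc=\comm(1^\secparam,(r_1,\dots,r_\ell);\bfd)$ and receiving $\msg_1=(\ct_1,\obfC,\otpad)$, together with the evaluation/public key for $\pk_1$ (available from $\msg_1$). It then encrypts the residual private state of $\sender^*$ under $\pk_1$ and proceeds homomorphically: from $\ct_1$ it derives $\fhe.\enc(\pk_1,\td)$ and then $\fhe.\enc(\pk_1,(\comm(1^\secparam,\td_i;r_i))_i)$ with the known $r_i$'s hard-wired; it feeds these to the encrypted $\sender^*$ as the third flow and runs the $\protpc$ execution under $\fhe$, with $\protpc.\receiver$ fed the honest input $(\bfd,r_1,\dots,r_\ell)$. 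Since now $\bfc$ opens to $(r_i)_i$ and each $\bfc^*_i$ opens to $\td_i$ under $r_i$, the functionality $f$ evaluates to $\sk_2$, so $\ext$ holds $\fhe.\enc(\pk_1,\sk_2)$; running $\leval(\obfC,\cdot)$ on it returns $\bfk$ (using perfect correctness of $\lobf$ and of $\fhe$: $\fhe.\dec(\sk_1,\fhe.\enc(\pk_1,\sk_2))=\sk_2$ and $\fhe.\dec(\sk_2,\ct^*)=\bfr$), hence $\sk_1=\otpad\oplus\bfk$ and $\fhe.\dec(\sk_1,\ct_1)=\td\|\witness$, which $\ext_2$ outputs; $\ext_1$ additionally decrypts under $\sk_1$ the homomorphically generated transcript and state of $\sender^*$ and outputs that as the simulated view. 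Correctness of extraction is then immediate: $\sender^*$ runs the honest code on a genuine $(\inst,\witness)\in\rel$, so perfect correctness of $\comm$, $\protpc$, $\lobf$ and of $\fhe$ decryption makes every step succeed except for the negligible error of $\fhe$ evaluation.

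For indistinguishability of extraction I would run a hybrid argument from the real interaction $\langle\sender^*,\receiver\rangle$ to the view emitted by $\ext_1$. The first hybrid ``lifts'' the last two flows into $\fhe$ while still feeding $\sender^*$ the \emph{honest} messages (third flow $\fhe.\enc(\pk_1,(\comm(1^\secparam,0;r_i))_i)$, honest $\protpc$), then decrypts with $\sk_1$; by perfect correctness of $\fhe$ on the classical protocol messages this equals the real execution up to negligible evaluation error. Next I would switch the $\protpc.\receiver$ input to garbage (semantic security of $\protpc$ against a quantum sender), then replace $\bfc$ by $\comm(1^\secparam,0^\ell;\bfd)$ (hiding of $\comm$) --- after which each $r_i$ appears only as the opening randomness of the honest $\bfc^*_i$ --- then switch $\bfc^*_i$ from $\comm(1^\secparam,0;r_i)$ to $\comm(1^\secparam,\td_i;r_i)$ (hiding of $\comm$, via a non-uniform reduction hard-wiring $\td$). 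Finally I restore $\bfc$ and the $\protpc.\receiver$ input to their honest values, landing on the distribution $\ext_1$ actually produces. The structural feature that makes this go through is exactly the anti-malleability device in the protocol (the receiver commits to the $r_i$'s inside $\bfc$ and is only later ``supposed to'' reuse them to commit to the bits of $\td$): it is what lets me peel $r_i$ out of $\bfc$ and the $\protpc$ input before invoking hiding on $\bfc^*_i$.

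For quantum zero-knowledge I would have the simulator $\Simu$ run the honest sender's code verbatim except that it puts $0$ in place of $\witness$, i.e.\ it sends $\ct_1=\fhe.\enc(\pk_1,\td\|0)$. Showing this is indistinguishable from the honest prover is the heart of the proof, because the construction is deliberately circular: $\ct^*$ sits inside $\obfC$, $\obfC$ hides $\bfk$, $\bfk$ one-time-pads $\sk_1$, $\sk_1$ would decrypt $\ct_1$ and reveal $\td$, knowledge of $\td$ would let $\receiver^*$ build the ``good'' commitments $\bfc^*_i=\comm(1^\secparam,\td_i;r_i)$, which would make the $\protpc$ hand $\receiver^*$ the value $\sk_2$, which decrypts $\ct^*$. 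I would break the loop using the fact that $\receiver^*$ sends $(\bfc^*_i)_i$ \emph{before} the $\protpc$ runs, so $\sk_2$ is simply unavailable to $\receiver^*$ at that point: (1) replace $\ct^*=\fhe.\enc(\pk_2,\bfr)$ by $\fhe.\enc(\pk_2,0)$ using semantic security of $\fhe$ under $\pk_2$ (legitimate because $\sk_2$ is not used by $\receiver^*$ up to that message); (2) now the lock $\bfr$ is independent of the compute-and-compare circuit, so replace $\obfC$ by $\locksimr(1^\secparam,1^{|C|})$ using security of $\lobf$; (3) with $\bfk$ gone, $\otpad$ is a one-time pad on $\sk_1$ and hides it; (4) replace $\ct_1$ by $\fhe.\enc(\pk_1,\td\|0)$ using semantic security of $\fhe$ under $\pk_1$ --- in particular $\td$ is now hidden, so except with negligible probability no $\bfc^*_i$ sent by $\receiver^*$ equals $\comm(1^\secparam,\td_i;r^*_i)$ for the $r^*_i$ bound by $\bfc$ (perfect binding), hence $f$ evaluates to $\bot$ on the input the $\protpc$ receiver-extractor pulls out of $\receiver^*$; (5) therefore switch the sender's $\protpc$ input from $(\td,\bfc,(\bfc^*_i)_i,\sk_2)$ to $(\td,\bfc,(\bfc^*_i)_i,0)$ by indistinguishability security of $\protpc$ against a malicious quantum receiver; (6) run steps (1)--(2) backwards. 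The end point is $\Simu$. The main obstacle throughout is getting this hybrid ordering right --- in particular justifying step (1) before $\sk_2$ has been shown unreachable, and justifying step (5) through the (not necessarily efficient) $\protpc$ receiver-extractor --- and, on the extraction side, carrying the $\comm$-hiding reductions past the fact that each $r_i$ is reused as both randomness and committed content.
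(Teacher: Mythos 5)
Your extractor and your argument for indistinguishability of extraction coincide with the paper's: the same homomorphic-evaluation extractor that commits to $\td$ inside $\fhe.\enc(\pk_1,\cdot)$ using the receiver's own $r_i$'s, the same appeal to perfect correctness of $\fhe$ and $\lockobfscheme$ to identify the encrypted execution with the real one, and the same ordering of hybrids (first strip $r_1,\dots,r_\ell$ out of the $\protpc$ input and out of $\bfc$, then switch $\bfc^*_i$ to $\comm(1^{\secparam},\td_i;r_i)$ via a non-uniform reduction hard-wiring $\td$, then restore). That part is fine, as is the correctness-of-extraction argument.

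The zero-knowledge part, however, has a genuine gap in the hybrid ordering, which you flag as ``the main obstacle'' but do not resolve. Your step (1) replaces $\ct^*=\fhe.\enc(\pk_2,\bfr)$ by an encryption of $\bot$ in a hybrid where the sender's input to $\protpc$ is still $(\td,\bfc,\bfc^*_1,\dots,\bfc^*_\ell,\sk_2)$; a reduction to semantic security of $\fhe$ under $\pk_2$ would have to simulate the sender's $\protpc$ messages without knowing $\sk_2$, which it cannot do, and the alternative of first removing $\sk_2$ from the $\protpc$ input is your step (5), which itself depends on steps (1)--(4) --- a circularity. Relatedly, in step (4) you claim ``$\td$ is now hidden'' in the hybrid where $\ct_1=\fhe.\enc(\pk_1,\td||0)$, but $\td$ is still encrypted in $\ct_1$ there, so you cannot conclude that the receiver commits to $\td$ with probability only $2^{-\secparam}$; that conclusion needs a further hybrid in which $\td$ is removed from $\ct_1$ entirely. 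The paper avoids both problems with a different decomposition: it first proves a standalone claim about a \emph{truncated} experiment that stops when the receiver sends $(\bfc^*_1,\dots,\bfc^*_\ell)$, i.e.\ before $\protpc$ is ever run, so $\sk_2$ is never used in that claim's sub-hybrids and one can safely walk $\ct^*\rightarrow\fhe.\enc(\pk_2,\bot)$, $\obfC\rightarrow\locksimr(1^{\secparam},1^{|\lockC|})$, $\otpad\rightarrow$ uniform, $\ct_1\rightarrow\fhe.\enc(\pk_1,\bot)$, landing where $\td$ is information-theoretically hidden. Armed with that claim, the main sequence switches the sender's $\protpc$ input to $\bot$ \emph{first} (IND-security of $\protpc$, applicable because $f$ outputs $\bot$ on both inputs except with negligible probability), thereby removing $\sk_2$ from the experiment, and only then replaces $\ct^*$, $\obfC$, $\otpad$ and $\ct_1$. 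Your steps (1)--(6) need to be restructured along these lines for the reductions to go through.
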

\begin{proof} 
\
\paragraph{Quantum Zero-Knowledge.} Let $(\inst,\witness) \in \rel$, and let $\receiver^*$ be a QPT malicious receiver.  Associated with $\receiver^*$ is the QPT algorithm $\Simu$ -- in fact, $\Simu$ is a classical PPT algorithm that only uses $\receiver^*$ as a black-box -- defined below.


\paragraph{Description of $\Simu$.} 

\begin{itemize}
    
    \item It first receives $\bfc$ from $\receiver^*$. It performs the following operations: 
    \begin{itemize}
    \item Compute the $\fhe.\setup$ to obtain $(\pk_1, \sk_1)$.
    \item Compute $\ct_1 \leftarrow \fhe.\enc(\pk_1, \bot)$.
    \item Compute the obfuscated circuit $\obfC \leftarrow \locksimr\left(1^{\secparam},1^{|\lockC|} \right)$. 
    \item Sample $\otpad \xleftarrow{\$} \{0,1\}^{|\sk_1|}$.  
    
\end{itemize}
\noindent Send $(\ct_1,\obfC,\otpad)$.
    
    \item It then receives $(\bfc_1^*,\ldots,\bfc_{\ell}^*)$ from the receiver.  
    
    \item It executes $\protpc$ with $\receiver^*$; $\Simu$ takes the role of $\protpc.\sender$ with the input $\bot$.
     
    \item Finally, it outputs the final state of $\receiver^*$.  
     
\end{itemize}

\noindent We show below that the view of $\receiver^*$ when interacting with the honest sender is indistinguishable, by a QPT distinguisher, from the output of $\Simu$. Consider the following hybrids:\\

\noindent $\underline{\hybrid_1}$: In this hybrid, $\receiver^*$ is interacting with the honest sender $\sender$. The output of this hybrid is the output of $\receiver^*$.\\

\noindent $\underline{\hybrid_2}$: In this hybrid, we define a hybrid sender, denoted by $\hybrid_2.\sender$: it behaves exactly like $\sender$ except that in $\protpc$, the input of $\protpc.\sender$ is $\bot$.
\par Consider the following claim.  

\begin{claim}
\label{clm:hyb1hyb2}
$\view_{\receiver^*}\left( \langle \sender(1^\secparam,\inst,\witness),\receiver^*(1^\secparam,\inst, \cdot)\rangle\right) \approx_{Q} \view_{\receiver^*}\left( \langle \hybrid_2.\sender(1^\secparam,\inst,\witness),\receiver^*(1^\secparam,\inst, \cdot)\rangle\right).$
\end{claim}
\begin{proof}
To prove this claim, we first need to show that the probability that the receiver $\receiver^*$ commits to $\witness$ is negligible. Consider the following claim. 

\begin{claim}\label{clm:mainclm} Assuming the quantum security of $\comm$, $\lockobfscheme$ and $\fhe$, the following holds:
$$\Pr\left[\substack{\exists r_1,\ldots,r_{\ell},\bfd,\\ \left( \bfc = \comm\left( 1^{\secparam},(r_1,\ldots,r_{\ell});\bfd\right) \right)\\ \bigwedge\\ \left(\forall i \in [\ell], \bfc^*_i = \comm\left( 1^{\secparam},\td_{i}; r_i\right) \right) = 1}\ :\\ \substack{\bfc \leftarrow \receiver^*(1^{\secparam},\inst,\cdot)\\
\td \xleftarrow{\$} \{0,1\}^{\secparam}\\
(\pk_i,\sk_i) \leftarrow \fhe.\setup(1^{\secparam}), \forall i \in \{1,2\}\\
\ct_1 \leftarrow \fhe.\enc(\pk_1,(\td||\witness))\\
\bfr \xleftarrow{\$} \{0,1\}^{\secparam}\\
\bfk \xleftarrow{\$} \{0,1\}^{|\sk_1|}\\
\ct^* \leftarrow \fhe.\enc(\pk_2,\bfr) \\
\obfC \leftarrow \lobf(1^{\secparam},\lockC[\bfr,\bfk,\sk_1,\ct^*]) \\
\otpad =\bfk \oplus \sk_1\\
(\bfc_1^*,\ldots,\bfc_{\ell}^*) \leftarrow \receiver^*(1^{\secparam},\inst,\cdot) }\right] \leq \negl(\secparam),$$
for some negligible function $\negl$.
\end{claim}

\begin{proof}  
We define the event $\badevent_1$ as follows: 

\begin{quote}
    $\badevent_1=1$ if there exists $ r_1,\ldots,r_{\ell},\bfd$ such that $$\left( \bfc = \comm\left( 1^{\secparam},(r_1,\ldots,r_{\ell});\bfd\right) \right) \bigwedge \left(\forall i \in [\ell], \bfc^*_i = \comm\left( 1^{\secparam},\td_{i}; r_i\right) \right) = 1,$$ where:
    \begin{itemize}
        \item $\bfc \leftarrow \receiver^*(1^{\secparam},\inst,\cdot)$,
\item $\ct_1 \leftarrow \fhe.\enc(\pk_1,(\td||\witness))$, where $(\pk_i,\sk_i) \leftarrow \fhe.\setup(1^{\secparam}), \forall i \in \{1,2\}$ and $\td \xleftarrow{\$} \{0,1\}^{\secparam}$,
\item $\obfC \leftarrow \lobf(1^{\secparam},\lockC[\bfr,\bfk,\sk_1,\ct^*])$, where $\bfr \xleftarrow{\$} \{0,1\}^{\secparam}$, $\bfk \xleftarrow{\$} \{0,1\}^{|\sk_1|}$ and $\ct^* \leftarrow \fhe.\enc(\pk_2,\bfr)$,
\item $\otpad =\bfk \oplus \sk_1$ and,
\item $\receiver^*(1^{\secparam},\inst,\cdot)$ on input $(\ct,\obfC,\otpad)$ outputs $(\bfc_1^*,\ldots,\bfc_{\ell}^*)$. 
    \end{itemize}
    Otherwise, $\badevent_1=0$. 
\end{quote}

\noindent Define $\bfp_1$ to be $\bfp_1=\prob[\badevent_1=1]$.  \\

\noindent We define a hybrid event $\badevent_{1.1}$ as follows: 
\begin{quote}
    $\badevent_{1.1}=1$ if there exists $ r_1,\ldots,r_{\ell},\bfd$ such that $$\left( \bfc = \comm\left( 1^{\secparam},(r_1,\ldots,r_{\ell});\bfd\right) \right) \bigwedge \left(\forall i \in [\ell], \bfc^*_i = \comm\left( 1^{\secparam},\td_{i}; r_i\right) \right) = 1,$$ where:
    \begin{itemize}
        \item $\bfc \leftarrow \receiver^*(1^{\secparam},\inst,\cdot)$,
\item $\ct_1 \leftarrow \fhe.\enc(\pk_1,(\td||\witness))$, where $(\pk_i,\sk_i) \leftarrow \fhe.\setup(1^{\secparam}), \forall i \in \{1,2\}$ and $\td \xleftarrow{\$} \{0,1\}^{\secparam}$,
\item $\obfC \leftarrow \lobf(1^{\secparam},\lockC[\bfr,\bfk,\sk_1,\ct^*])$, where $\bfr \xleftarrow{\$} \{0,1\}^{\secparam}$, $\bfk \xleftarrow{\$} \{0,1\}^{|\sk_1|}$ and\\ \underline{$\ct^* \leftarrow \fhe.\enc(\pk_2,\bot)$}, 
\item $\otpad =\bfk \oplus \sk_1$ and,
\item $\receiver^*(1^{\secparam},\inst,\cdot)$ on input $(\ct,\obfC,\otpad)$ outputs $(\bfc_1^*,\ldots,\bfc_{\ell}^*)$. 
    \end{itemize}
    Otherwise, $\badevent_{1.1}=0$. 
\end{quote}
We define $\bfp_{1.1}$ as $\bfp_{1.1} = \prob[\badevent_{1.1}=1]$.
\par From the quantum security of $\fhe$, it holds that $|\bfp_1-\bfp_{1.1}| \leq \negl(\secparam)$ for some negligible function $\negl$. Note that we crucially rely on the fact that $\protpc$, that requires the sender to input $\sk_2$, is only executed after the receiver sends $(\bfc^*_1,\ldots,\bfc^*_{\ell})$.\\

\noindent We define a hybrid event $\badevent_{1.2}$ as follows: 
\begin{quote}
    $\badevent_{1.2}=1$ if there exists $ r_1,\ldots,r_{\ell},\bfd$ such that $$\left( \bfc = \comm\left( 1^{\secparam},(r_1,\ldots,r_{\ell});\bfd\right) \right) \bigwedge \left(\forall i \in [\ell], \bfc^*_i = \comm\left( 1^{\secparam},\td_{i}; r_i\right) \right) = 1,$$ where:
    \begin{itemize}
        \item $\bfc \leftarrow \receiver^*(1^{\secparam},\inst,\cdot)$,
\item $\ct_1 \leftarrow \fhe.\enc(\pk_1,(\td||\witness))$, where $(\pk_i,\sk_i) \leftarrow \fhe.\setup(1^{\secparam}), \forall i \in \{1,2\}$ and $\td \xleftarrow{\$} \{0,1\}^{\secparam}$,
\item \underline{$\obfC \leftarrow \locksimr\left(1^{\secparam},1^{|\lockC|} \right)$}, 
\item $\otpad =\bfk \oplus \sk_1$ and,
\item $\receiver^*(1^{\secparam},\inst,\cdot)$ on input $(\ct,\obfC,\otpad)$ outputs $(\bfc_1^*,\ldots,\bfc_{\ell}^*)$. 
    \end{itemize}
    Otherwise, $\badevent_{1.2}=0$. 
\end{quote}
\noindent We define $\bfp_{1.2}$ as $\bfp_{1.2}=\prob[\badevent_{1.2}=1]$. From the quantum security of $\lockobfscheme$, it follows that $|\bfp_{1.1}-\bfp_{1.2}| \leq \negl(\secparam)$. Note that we crucially use the fact that the lock $\bfr$ is uniformly sampled and independently of the function that is obfuscated.\\

\noindent We define a hybrid event $\badevent_{1.3}$ as follows: 
\begin{quote}
    $\badevent_{1.3}=1$ if there exists $ r_1,\ldots,r_{\ell},\bfd$ such that $$\left( \bfc = \comm\left( 1^{\secparam},(r_1,\ldots,r_{\ell});\bfd\right) \right) \bigwedge \left(\forall i \in [\ell], \bfc^*_i = \comm\left( 1^{\secparam},\td_{i}; r_i\right) \right) = 1,$$ where:
    \begin{itemize}
        \item $\bfc \leftarrow \receiver^*(1^{\secparam},\inst,\cdot)$,
\item $\ct_1 \leftarrow \fhe.\enc(\pk_1,(\td||\witness))$, where $(\pk_i,\sk_i) \leftarrow \fhe.\setup(1^{\secparam}), \forall i \in \{1,2\}$ and $\td \xleftarrow{\$} \{0,1\}^{\secparam}$,
\item $\obfC \leftarrow \locksimr\left(1^{\secparam},1^{|\lockC|} \right)$, 
\item \underline{$\otpad \xleftarrow{\$} \{0,1\}^{|\sk_1|}$} and,
\item $\receiver^*(1^{\secparam},\inst,\cdot)$ on input $(\ct,\obfC,\otpad)$ outputs $(\bfc_1^*,\ldots,\bfc_{\ell}^*)$. 
    \end{itemize}
    Otherwise, $\badevent_{1.3}=0$. 
\end{quote}
We define $\bfp_{1.3}$ as $\bfp_{1.3}=\prob[\badevent_{1.3}=1]$. Observe that $\bfp_{1.2}=\bfp_{1.3}$. \\

\noindent We define a hybrid event $\badevent_{1.4}$ as follows: 
\begin{quote}
    $\badevent_{1.4}=1$ if there exists $ r_1,\ldots,r_{\ell},\bfd$ such that $$\left( \bfc = \comm\left( 1^{\secparam},(r_1,\ldots,r_{\ell});\bfd\right) \right) \bigwedge \left(\forall i \in [\ell], \bfc^*_i = \comm\left( 1^{\secparam},\td_{i}; r_i\right) \right) = 1,$$ where:
    \begin{itemize}
        \item $\bfc \leftarrow \receiver^*(1^{\secparam},\inst,\cdot)$,
\item \underline{$\ct_1 \leftarrow \fhe.\enc(\pk_1,\bot)$}, where $(\pk_i,\sk_i) \leftarrow \fhe.\setup(1^{\secparam}), \forall i \in \{1,2\}$ and $\td \xleftarrow{\$} \{0,1\}^{\secparam}$,
\item $\obfC \leftarrow \locksimr\left(1^{\secparam},1^{|\lockC|} \right)$, 
\item $\otpad \xleftarrow{\$} \{0,1\}^{|\sk_1|}$ and,
\item $\receiver^*(1^{\secparam},\inst,\cdot)$ on input $(\ct,\obfC,\otpad)$ outputs $(\bfc_1^*,\ldots,\bfc_{\ell}^*)$. 
    \end{itemize}
    Otherwise, $\badevent_{1.4}=0$. 
\end{quote}
\noindent We define $\bfp_{1.4}$ as $\bfp_{1.4}=\prob[\badevent_{1.4}=1]$. From the quantum security of $\fhe$, it follows that $|\bfp_{1.3}-\bfp_{1.4}| \leq \negl(\secparam)$. Moreover, note that $\bfp_{1.4}=2^{-\secparam}$ since $\td$ is information-theoretically hidden from $\receiver^*$. Thus, we have that $\bfp_{1} \leq \negl(\secparam)$.    

\fullversion{the non-uniformity requirement of the primitives needs to be stated explicitly above..}

\end{proof}

\noindent We now use Claim~\ref{clm:mainclm} to prove Claim~\ref{clm:hyb1hyb2}. Conditioned on $\badevent_1 \neq 1$, it holds that the view of $\receiver^*$ after its interaction with $\sender$ is indistinguishable (by a QPT algorithm) from the view of $\receiver^*$ after its interaction with $\hybrid_2.\sender$; this follows from the IND-security of $\protpc$ against quantum receivers since $f((\td,\bfc,\bfc_1^*,\ldots,\bfc_{\ell}^*,\sk_2),(\bfd,r_1,\ldots,r_{\ell}))=f((\bot),(\bfd,r_1,\ldots,r_{\ell}))$.     

\end{proof}

\noindent $\underline{\hybrid_3}$: We define a hybrid sender, denoted by $\hybrid_3.\sender$: it behaves exactly like $\hybrid_2.\sender$ except that $\ct^*$ in $\obfC$ is generated as $\ct^* \leftarrow \fhe.\enc(\pk_2,\bot)$. 
\par Assuming the quantum security of $\fhe$, we have:
$$\view_{\receiver^*}\left( \langle \hybrid_{2}.\sender(1^\secparam,\inst,\witness),\receiver^*(1^\secparam,\inst, \cdot)\rangle\right) \approx_{Q} \view_{\receiver^*}\left( \langle \hybrid_3.\sender(1^\secparam,\inst,\witness),\receiver^*(1^\secparam,\inst, \cdot)\rangle\right)$$

\noindent $\underline{\hybrid_4}$: We define a hybrid sender, denoted by $\hybrid_4.\sender$: it behaves exactly like $\hybrid_3.\sender$ except that $\obfC$ is generated as $\obfC \leftarrow \locksimr\left(1^{\secparam},1^{|\lockC|} \right)$. 
\par Assuming the quantum security of $\lockobfscheme$, we have: 
$$\view_{\receiver^*}\left( \langle \hybrid_{3}.\sender(1^\secparam,\inst,\witness),\receiver^*(1^\secparam,\inst, \cdot)\rangle\right) \equiv \view_{\receiver^*}\left( \langle \hybrid_4.\sender(1^\secparam,\inst,\witness),\receiver^*(1^\secparam,\inst, \cdot)\rangle\right)$$

\noindent $\underline{\hybrid_5}$: We define a hybrid sender, denoted by $\hybrid_5.\sender$: it behaves exactly like $\hybrid_4.\sender$ except that $\otpad$ is generated uniformly at random. 
\par The following holds unconditionally: 
$$\view_{\receiver^*}\left( \langle \hybrid_{4}.\sender(1^\secparam,\inst,\witness),\receiver^*(1^\secparam,\inst, \cdot)\rangle\right) \equiv \view_{\receiver^*}\left( \langle \hybrid_5.\sender(1^\secparam,\inst,\witness),\receiver^*(1^\secparam,\inst, \cdot)\rangle\right)$$

\noindent $\underline{\hybrid_6}$: We define a hybrid sender, denoted by $\hybrid_6.\sender$: it behaves exactly like $\hybrid_5.\sender$ except that $\ct_1$ is generated as $\ct_1 \leftarrow \fhe.\enc(\pk_1,\bot)$. 
\par Assuming the quantum security of $\fhe$, we have:
$$\view_{\receiver^*}\left( \langle \hybrid_{5}.\sender(1^\secparam,\inst,\witness),\receiver^*(1^\secparam,\inst, \cdot)\rangle\right) \approx_{Q} \view_{\receiver^*}\left( \langle \hybrid_6.\sender(1^\secparam,\inst,\witness),\receiver^*(1^\secparam,\inst, \cdot)\rangle\right)$$

\noindent Since $\hybrid_6.\sender$ is identical to $\Simu$, the proof of quantum zero-knowledge follows.  

\newcommand{\RegR}{\mathsf{Reg}_\receiver}
\newcommand{\RegS}{\mathsf{Reg}_\sender^*}

\paragraph{Extractability.} Let $\sender^* = (\sender_1^*,\sender_{2}^*)$ be a semi-malicious QPT, where $S_2^*$ is the QPT involved in $\protpc$. Denote by $\receiver=(\receiver_1,\receiver_2,\receiver_3)$ the PPT algorithms of the honest receiver.  In particular, $\receiver_3$ is the algorithm that the receiver runs in $\protpc$ protocol.  Let $$\cE_{\protpc}(\cdot \;;\bfd,r_1,...,r_\ell,\td, \witness, \bfc, \bfc^*):= \left\langle \receiver_3(1^\secparam, \bfd,r_1,\ldots,r_\ell), \sender_2^*(1^\secparam, \td, \witness,\bfc,\bfc^*, \cdot ) \right\rangle $$
be the interaction channel induced on the private quantum input of $\sender^*$ by the interaction with $\receiver$ in the $\protpc$ protocol for the functionality $f$ with inputs $\bfd,r_1,...,r_\ell, \td, \witness, \bfc, \bfc^*$.  Without loss of generality, assume that this channel also outputs the classical message output of $\protpc$.

Consider the following  extractor $\ext$, that takes as input the efficient quantum circuit description of $\sender^*(1^\secparam, \inst,\witness,\cdot)$, and the instance $\inst$.
\paragraph{$\ext(1^\secparam,S^*,\inst, \cdot)$:}
\begin{itemize}
    \item Run $\receiver_1$ to compute $\bfc$, $\bfd$, and $r_1,\ldots, r_\ell$.
    \item Apply the channel $S_1^*(1^\secparam, \inst, \witness, \bfc,\cdot)$.
    \item Let $(\ct_1, \obfC, \otpad)$ denote the classical messages outputted by $S_1^*$, and let $\rho$ denote the rest of the state.
    \item  With $\ct_1$, homomorphically commit to $\td$, obtaining
    $$\fhe.\enc(\pk_1, \bfc^*:=\comm(1^\secparam, \td))$$. 
    \item Encrypt $(\bfd, \bfc, r_1,\ldots,r_\ell)$, and $\rho$, and homomorphically apply the channel $\cE_{\protpc}(\cdot \;;\bfd,r_1,...,r_\ell,\td, \witness, \bfc, \bfc^*)$
    \item Let $\fhe.\enc(\pk_1, \protpc.\out \otimes \rho')$ be the output of the previous step, where $\protpc.\out$ is the classical output of the $\protpc$ protocol.
    
    \item Apply $\obfC$ to the $\fhe$ encryption of $\protpc.\out$. Note that we are assuming that classical messages have classical ciphertexts, so this computation is a classical one. Let $k$ be the output of $\obfC\left(\fhe.\enc(\pk_1,\protpc.\out)\right)$.
    \item Let $\sk_1 := k \oplus \otpad$, and decrypt $\ct_1$  with $\sk_1$. If the decryption is successful and the message $\witness$ is recovered, let $\ext_2$ output $\witness$.
    \item Use $\sk_1$ to decrypt the ciphertext $\fhe.\enc(\pk_1, \protpc.\out\otimes \rho')$, and let $\ext_1$ output $\rho'$.

\end{itemize}

\begin{claim}
$\view_{\sender^*}\left( \langle \sender^*(1^\secparam,\inst,\witness, \cdot), \receiver(1^\secparam,\inst) \rangle\right) \approx_{Q} \ext_1 \left(1^\secparam,\sender^*,\inst, \cdot \right)$
\end{claim}

\newcommand{\RegD}{\mathsf{R}_{\cD}}
\begin{proof}
Let $\RegD$ be the quantum register of a distinguisher $\cD$.
Let $\cF : \RegD \rightarrow \RegD$ be the following channels, parametrized by $\bfd,r_1,...,r_\ell,\td, \witness, \bfc, \bfc^*$,
$$\cF(\rho; \bfd,r_1,...,r_\ell, \witness, \bfc, \bfc^*):= \left(\left[\cE_{\protpc}(\cdot \; ; \bfd,r_1,...,r_\ell, \td, \witness, \bfc, \bfc^*) \circ S_1^*(1^\secparam,\inst, \witness,\bfc,\cdot)\right]\otimes \Id \right)\left(\rho\right).$$
The identity is acting on the distinguisher's private state, and the composition $\cE_{\protpc}(\cdot \; ; \bfd,r_1,...,r_\ell, \td, \witness, \bfc, \bfc^*) \circ S_1^*(1^\secparam,\inst,\witness,\bfc,\cdot)$ acts on the private state of $\sender^*$. We do not write $\td$ as a parameter to $\cF$, because $\td$ is generated by $S_1^*$ and assumed to be part of the sender's private state. We do add it as a parameter to $\cE_{\protpc}$ to be consistent and to remind ourselves that the $\td$ is input into the $\protpc$ protocol. 

Note that when $\bfd,r_1,\ldots,r_\ell, \bfc$ and $\bfc^*$ are generated by the honest $\receiver$ in the protocol, we have
$$\cF(\rho; \bfd,r_1,...,r_\ell,\witness, \bfc, \bfc^*)= \left(\view_{\sender^*}\left( \langle \sender^*(1^\secparam,\inst,\witness, \cdot), \receiver(1^\secparam,\inst) \rangle\right) \otimes \Id \right) (\rho)$$

We will show that when $\bfd,r_1,\ldots,r_\ell, \bfc$ are generated the same way as the honest $\receiver$ would generate them in the first round $\receiver_1$, but the commitment $\bfc^*=\bfc^*_1,\ldots,\bfc_\ell^*$ is a commitment to the witness, $\witness$, instead, we have
$$\cF(\rho; \bfd,r_1,...,r_\ell,\witness, \bfc, \bfc^*_{\witness})= \left(\ext_1 \left(1^\secparam,\sender^*,\inst, \cdot \right) \otimes \Id \right) (\rho)$$
Our goal is to show that these two cases, $\bfc^*$ and $\bfc^*_{\witness}$, are quantum computationally indistinguishable.
 
 To see why this last equation is true, we are using the perfect correctness of both the $\fhe$ scheme and of the lockable obfuscator, as well as the fact that the $\sender^*$ is semi-malicious, which means it has to follow the protocol.  This means that when $S_1^*$ outputs $(\ct_1,\obfC,\otpad)$, the extractor has a valid ciphertext $\ct_1$ encrypted with a key $\pk_1$, which in turn is one-time padded, $\sk_1 \oplus k = \otpad$. Furthermore, the one-time pad value $k$ is the output of $\obfC$ if an input releases the lock, and $\obfC$ is a correct lockable obfuscation of the desired circuit. 
 
 After this, the extractor performed $\cE_{\protpc}(\cdot \; ; \bfd,r_1,...,r_\ell, \td, \witness, \bfc, \bfc^*_{\witness})$ homomorphically, which results in the extractor having an encryption of $\sk_2$ under $\pk_1$. This is true because the extractor is able to commit to the witness inside the encryption, and the semi-malicious sender has to engage correctly in the $\protpc$. Since the extractor can now use the $\obfC$ to obtain $\sk_1$, we can summarize the whole operation of the extractor as follows. Let $(\ct_1, \obfC, \otpad) \otimes \rho'$ be the state of the distinguisher after $S_1^*$. Then, the extractor performs
 $$\left(\left(\dec(\sk_1, \cdot) \circ \eval\left(\cE_{\protpc}\left(\cdot \; ; \bfd,r_1,...,r_\ell,\td, \witness, \bfc, \bfc^*_{\witness}\right),\cdot\right) \circ \enc(\pk_1, \bfc^*_{\witness},\cdot) \right) \otimes \Id \right) \left( \rho' \right)$$
 
 By correctness of the $\fhe$ scheme, this is the same as the extractor performing
 $$\left(\left[\cE_{\protpc}(\cdot \; ; \bfd,r_1,...,r_\ell,\td, \witness, \bfc, \bfc^*_{\witness}) \circ S_1^*(1^\secparam,\inst,\witness,\bfc,\cdot)\right]\otimes \Id \right)\left(\rho\right)$$
 on the distinguisher's state.
 
To show that the view of the sender when interacting with the honest receiver is indistinguishable (against polynomial time quantum algorithms) from the view of the sender when interacting with the extractor. 

\noindent $\underline{\hybrid_1}$: The output of this hybrid is the view of the sender when interacting with the honest receiver.\\

\noindent $\underline{\hybrid_2}$: We define a hybrid receiver $\hybrid_2.\receiver$ that behaves like the honest receiver except that the input of $\hybrid_2.\receiver$ in $\protpc$ is $\bot$. The output of this hybrid is the view of the sender when interacting with $\hybrid_2.\receiver$.
\par The quantum indistinguishability of $\hybrid_1$ and $\hybrid_2$ follows from the semantic security of $\protpc$ against quantum polynomial time adversaries.\\ 

\noindent $\underline{\hybrid_3}$: We define a hybrid receiver $\hybrid_3.\receiver$ that behaves like $\hybrid_2.\receiver$ except that it sets $\bfc$ to be $\bfc=\comm(1^{\secparam},0;\bfd)$. The output of this hybrid is the view of the receiver when interacting with $\hybrid_3.\receiver$.

\par The quantum indistinguishability of $\hybrid_2$ and $\hybrid_3$ follows from the quantum computational hiding of $\comm$. \\

\noindent $\underline{\hybrid_4}$: We define a hybrid receiver $\hybrid_4.\receiver$ that sets $\bfc^*_i=\comm(1^{\secparam},\td_i;r_i)$, for every $i \in [\ell]$, where $\td$ is extracted inefficiently.

To prove that $\hybrid_3$ and $\hybrid_4$ are indistinguishable, we first establish some notation. Let $p_x$ be the probability that the sender samples $\td=x$, and let $\varepsilon_x$ denote the probability that the sender distinguishes $\hybrid_3$ and $\hybrid_4$ when $\td=x$.  Let $E_x$ denote the event that sender chooses $\td=x$ and that it distinguishes correctly. 
\par Suppose a QPT distinguisher can distinguish  $\hybrid_3$ and $\hybrid_4$. Then it follows that  $\Pr[\cup_x E_x]$ is non-negligible. Moreover, we have the following: 
\begin{align*}
\Pr[\cup_x E_x] &= \sum_{x} p_x \varepsilon_x \\
&\leq \max_x (\varepsilon_x)
\end{align*}
where we used the fact that $\{E_x\}$ are mutually exclusive events.
Since $\Pr[\cup_x E_x]$ is non-negligible, this means that there exists an $x$ such that $\varepsilon_x$ is non-negligible. This further implies that $\comm(0)$ and $\comm(x)$ are distinguishable with non-negligible probability, thus contradicting the quantum computational hiding security of $\comm$. 
\par Thus, the computational indistinguishability of $\hybrid_3$ and $\hybrid_4$ follows from the quantum computational hiding of $\comm$. \\

\noindent $\underline{\hybrid_5}$: We define a hybrid receiver $\hybrid_5.\receiver$ that behaves as $\hybrid_4.\receiver$ except that it sets $\bfc$ to be $\bfc=\comm(1^{\secparam},(r_1,\ldots,r_{\ell});\bfd)$, where $r_i$ is the randomness used in the commitment $\bfc^*_i$.
\par The quantum indistinguishability of $\hybrid_4$ and $\hybrid_5$ follows from the quantum computational hiding of $\comm$. \\

\noindent $\underline{\hybrid_6}$: The output of this hybrid is the output of the extractor. 
\par The quantum indistinguishability of $\hybrid_5$ and $\hybrid_6$ follows from the semantic security of $\protpc$ against polynomial time quantum adversaries. 

\end{proof}
\end{proof}
}

\section*{Acknowledgements}
We are grateful to Kai-Min Chung for many clarifications regarding quantum zero-knowledge proof and argument systems. We thank Thomas Vidick and Urmila Mahadev for answering questions about noisy trapdoor claw-free functions. We thank Abhishek Jain for helpful discussions and pointing us to the relevant references.

\fullversion{ 
\bibliographystyle{alpha}
\bibliography{crypto}
}

\submversion{ 
\bibliographystyle{plain}
\bibliography{crypto}
}

\clearpage




\end{document}